\documentclass[11pt, reqno]{amsart}

\usepackage[utf8x]{inputenc}

\usepackage{amsmath, amsthm, amsfonts}
\usepackage{MnSymbol}
\usepackage{epsfig, enumerate}
\usepackage{color}
\usepackage{bbm,dsfont}

\textheight 21.1truecm
\textwidth 15truecm
\oddsidemargin .5truecm
\evensidemargin .5truecm
\topmargin 0cm

\DeclareMathOperator{\rank}{{\rm rank}}


\newtheorem{Theo}{Theorem}
\newtheorem{theorem}{Theorem}
\newtheorem{lemma}[theorem]{Lemma}

\newtheorem{prop}[theorem]{Proposition}

\newtheorem*{assumptions*}{Assumptions}
\newtheorem*{assumption}{Assumption}

\newtheorem*{rem*}{Remark}

\theoremstyle{remark}
\newtheorem{remark}[theorem]{Remark}

\theoremstyle{definition}
\newtheorem{definition}{Definition}


\newcommand{\bS}{{\mathbf S}}


\newcommand{\bfa}{{\mathbf a}}
\newcommand{\bfb}{{\mathbf b}}

\newcommand{\bfg}{{\mathbf g}}

\newcommand{\bfr}{{\mathbf r}}
\newcommand{\bfs}{{\mathbf s}}

\newcommand{\bfv}{{\mathbf v}}
\newcommand{\bfw}{{\mathbf w}}



\newcommand{\Aa}{{\mathcal A}}

\newcommand{\Dd}{{\mathcal D}}

\newcommand{\Gg}{{\mathcal G}}
\newcommand{\Hh}{{\mathcal H}}

\newcommand{\Oo}{{\mathcal O}}

\newcommand{\Vv}{{\mathcal V}}



\newcommand{\Ab}{{\mathbb A}}

\newcommand{\CC}{{\mathbb C}}
\newcommand{\DD}{{\mathbb D}}
\newcommand{\EE}{{\mathbb E}}

\newcommand{\GG}{{\mathbb G}}

\newcommand{\NN}{{\mathbb N}}

\newcommand{\PP}{{\mathbb P}}

\newcommand{\RR}{{\mathbb R}}
\newcommand{\Sb}{{\mathbb S}}

\newcommand{\VV}{{\mathbb V}}
\newcommand{\WW}{{\mathbb W}}

\newcommand{\ZZ}{{\mathbb Z}}


\newcommand{\Aaa}{{\mathfrak A}}
\newcommand{\BBb}{{\mathfrak B}}



\newcommand{\one}{{\bf 1}}
\newcommand{\nul}{{\bf 0}}


\newcommand{\qtx}[1]{\quad\text{#1}\quad}



\newcommand{\Mat}{{\rm Mat}}
\newcommand{\SL}{{\rm SL}}
\newcommand{\GL}{{\rm GL}}

\newcommand{\Her}{{\rm Her}}

\newcommand{\pmat}[1]{\begin{pmatrix} #1  \end{pmatrix}}
\newcommand{\smat}[1]{\left( \begin{smallmatrix} #1  \end{smallmatrix} \right)}

\newcommand{\diag}{{\rm diag}}
\DeclareMathOperator{\im}{{\rm Im}}
\DeclareMathOperator{\re}{{\rm Re}}

\DeclareMathOperator{\Tr}{{\rm Tr}}

\DeclareMathOperator{\supp}{{\rm supp}}
\DeclareMathOperator{\spec}{{\rm spec}}

\newcommand{\be}[1]{\begin{equation} \label{#1} }

\renewcommand{\CC}{\mathds{C}}
\renewcommand{\RR}{\mathds{R}}
\renewcommand{\ZZ}{\mathds{Z}}
\renewcommand{\NN}{\mathds{N}}


\numberwithin{theorem}{section}
\numberwithin{equation}{section}

\newcommand{\Ups}{\Upsilon}

\newcommand{\ovr}{\overrightarrow}

\usepackage{hyperref}


\title[Spectral Theory of one-channel Operators]{Spectral Theory of one-channel operators and application to absolutely continuous spectrum for Anderson type models}

\author{Christian Sadel}
\address{Facultad de Matem\'aitcas, Pontificia Universidad Cat\'olica de Chile} 
\email{chsadel@mat.uc.cl}


\subjclass[2010]{Primary 47A10, Secondary  60H25, 82B44 }  
\keywords{Operators on graphs, Anderson model, absolutely continuous spectrum, extended states}

\begin{document}

\begin{abstract}
A one-channel operator is a self-adjoint operator on $\ell^2(\GG)$ for some countable set $\GG$ with a rank 1 transition structure along the sets of a quasi-spherical partition of $\GG$.
Jacobi operators are a very special case. In essence, there is only one channel through which waves can travel across the shells to infinity.
This channel can be described with transfer matrices which include scattering terms within the shells and connections to neighboring shells. Not all of the transfer matrices are defined for some countable set of energies. Still, many theorems from the world of Jacobi operators are translated to this setup. The results are then used to show absolutely continuous spectrum for the Anderson model on certain finite dimensional graphs with a one-channel structure. This result generalizes some previously obtained results on antitrees.

\end{abstract}

\maketitle


\section{Introduction and Results}

The main purpose of this paper is to discuss the probably most general frame work of Hermitian operators which allow a description through $2\times 2$ transfer matrices. We will call these 'one-channel operators'. 
In terms of classifying discrete Hermitian operators and their spectral theory we find this an interesting class.
Despite the one-channel structure,  the underlying graph-structures can have any kind of growth
and are not necessarily one-dimensional.
The work is inspired from \cite{Sa3}, but the considered operators are more general. 
Part of the motivation for this work comes from the theory of random disordered systems as considered in the Mathematical Physics community. There has quite been some interest in discrete (random) Schrödinger operators associated to certain discrete graph structures, e.g. 
\cite{ASW, AW, FHS, FHS2, KLW, KLW2, Kl, KS, MZ}.
In one-dimensional systems (Jacobi operators) transfer matrices are the main tool to study such operators and their behaviour can be connected to the spectrum and the spectral type in various ways \cite{CL, KP, KLS, LaSi}.
Due to transfer matrices, a local solution to the eigenvalue equation imposes a (formal) global one and a local analysis can lead to a full understanding of the spectral properties.
Here we generalize some of these methods to a more general class.

For general Schrödinger type operators on general graphs, the situation is typically much more difficult. This is part of the reason that the type of the bulk spectrum of the Anderson model in $\ZZ^d$ at small disorder is not understood.
It is a main open conjecture that for $d=2$ it should be pure point and for $d\geq 3$ there should be some absolutely continuous spectrum (almost surely).
Schrödinger operators associated to locally finite graph structures with hopping only along edges can always be brought into a block tri-diagonal matrix form by spherical-type partitions of the graph, cf. \cite{FHS}.
However,  the block sizes vary and typically grow. One still has resolvent identities, but generally the analysis can get very complicated.

\vspace{.2cm}

As an application of this work we obtain absolutely continuous spectrum for Anderson models on families of graphs of finite dimensional growth rate with dimension $d>2$.
We call these graphs partial antitrees as they partially have the structure as the antitrees in \cite{Sa3}. 
This is a first step of improving the ac-spectrum result in \cite{Sa3} towards more general classes of graphs with finite dimensional growth.

Previously to the work in \cite{Sa3} absolutely continuous spectrum for Anderson models with independent identically distributed potential had only been shown on infinite dimensional trees and tree-like graph structures 
\cite{Kl,ASW,FHS2,KS,KLW,AW,Sa,Sa2}. All these graph structures significantly simplify the general local Green's function relations as in \cite{FHS} and one has some simple local transfer mechanism to study them.
There has also been some work to approach the expected localization in two dimensions by certain graphs which in some sense are in between one and two dimensions \cite{MZ,Sa3}.

It is our agenda to further study certain hybrid graphs of lattice structures and antitrees and analyze what is happening with the absolutely continuous part of the spectrum for the Anderson models on these graphs.
We hope that this study can help to eventually make some progress in analyzing the important $\ZZ^d$-Anderson model towards showing existence of absolutely continuous spectrum.

Readers from the Mathematical Physics community should be aware that we will use a more mathematics-like notation in this paper: for instance $\lambda$ 
will be used as the spectral parameter (not $E$), and $\sigma$ will be a parameter describing the disorder strength (like $\sigma^2$ for variance).
But the results are actually non-perturbative in nature.

\subsection{One-channel operators}

Let $\GG$ be some countably infinite set and $\Hh$ a Hermitian operator on $\ell^2(\GG)$ with locally finite hopping, i.e. for all $x\in \GG$ the set $y\in \GG$ such that
$\langle \delta_x; \Hh \delta_y\rangle \neq 0$ is finite. Here, $\langle\cdot , \cdot \rangle$ denotes the standard scalar product 
with the convention that it is anti-linear in the first and linear in the second component, i.e. $\langle \psi, \phi \rangle = \sum_{x\in\GG} \bar \psi(x) \phi(x)$. With $\delta_x$ we denote the canonical basis vector with $\delta_x(x)=1$ and $\delta_x(y)=0$ for $x\neq y$.

\begin{definition}
 Let $\bS=(S_n)_{n\in\NN}$ or $\bS=(S_n)_{n\in\ZZ} $ be a partition of $\GG$ in finite sets, i.e. $\GG=\bigsqcup_n S_n$ and $0<\#(S_n)<\infty$. 
 The partition $\bS$ is called {\it quasi-spherical} for $\Hh$ if
 $$
 \langle \delta_x\;, \Hh\,\delta_y\rangle\,=\,0 \qtx{whenever} x\in S_m,\;y\in S_n \qtx{and} |n-m|\geq 2\;.
 $$
 In other words, $\Hh$ connects $S_n$ only to $S_{n-1}$, $S_n$ and $S_{n+1}$.
\end{definition}

Let us explain this notion and construct such a partition for operators with locally finite hopping. 
For $x\neq y$ we draw an edge between $x$ and $y$ if and only if $\langle \delta_x, \Hh \delta_y\rangle \neq 0$, then $\GG$ is a locally finite graph and $\Hh$ is associated to this graph structure.
Let $d(x,y)\in\NN$ denote the graph distance, i.e. the shortest path along edges connecting $x$ and $y$, $d(x,y)=\infty$ if they are on different edge-connected components.
Take some set $S_1$ which has at least one point of each connected component of $\GG$ and take $S_n:=\{x\in \GG: d(x,S_1)=\min_{y\in S_1} d(x,y)=n-1\}$, then $S_n$ is some kind of sphere around $S_1$ and $\bS=(S_n)_{n\in\NN}$ is such a quasi-spherical partition. 

\vspace{.2cm}

Now, let such a partition be given (not necessarily the above constructed one) and define $s_n=\#(S_n)$ to be the number of elements in $S_n$.
We can identify $\ell^2(\GG)=\bigoplus_n \ell^2(S_n) \cong \bigoplus_n \CC^{s_n}$ which gives $\ell^2(\GG)$ a fibered structure.
The direct sum is either running over $n\in \ZZ$ or $n\in \NN$ depending on the partition (an indexing over $\ZZ$ of this type is not always possible).

Vectors in $\CC^s$ will typically be understood as column vectors and the standard physics inner product can be written as $\langle \alpha, \beta \rangle_{\CC^{s}} = \alpha^* \beta$ where $\alpha^*$ is the transpose and complex conjugate of $\alpha\in\CC^s$ as a row vector.
We let $P_n$ be the canonical isometric injection from $\ell^2(S_n)$ into $\ell^2(\GG)$, 
then $P_n^*$ is the canonical projection, i.e.
$$P_n: \CC^{s_n}\cong \ell^2(S_n) \hookrightarrow \ell^2(\GG), \quad P_n^* : \ell^2(\GG)\to \ell^2(S_n)\cong \CC^{s_n} $$
and for $\Psi\in\ell^2(\GG)$ we define 
$$
\Psi_n:=P_n^*\Psi \in \ell^2(S_n)\cong \CC^{s_n}\;.
$$
Hence, for $\varphi_n\in \ell^2(S_n)$ and $\Psi\in \ell^2(\GG)$ one has 
$$P_n \varphi_n =\bigoplus_{k<n} \nul_{s_k} \oplus \varphi_n \oplus \bigoplus_{k>n} \nul_{s_k} \qtx{and} \Psi\,=\,\bigoplus_n P_n^* \Psi\,=\,\bigoplus_n \Psi_n \;.$$
Let us define 
$$
V_n:= P_n^* \Hh P_n\,\in\,\Her(s_n) \qtx{and} D_n:= P_{n}^* \Hh P_{n-1} \in \CC^{s_n \times s_{n-1}}
$$
where $\Her(s_n)$ denotes the set of Hermitian $s_n \times s_n$ matrices and $\CC^{s \times s'}$ is the set of complex $s \times s'$ matrices.
We find the tri-diagonal block structure of $\Hh$ as
\begin{equation}\label{eq-sph-form}
(\Hh \Psi)_n\,=\,D_{n+1}^* \Psi_{n+1}\,+\,D_{n} \Psi_{n-1}\,+\,V_n \Psi_n
\end{equation}
In the case where the partition is indexed over $\NN$ one sets $\psi_0=\nul$ and we may say that $\GG$ is partitioned as a 'half-space'.
The set $S_n\subset \GG$ may be referred to as the $n$-th shell of $\GG$.

\begin{definition} A self-adjoint operator $\Hh$ represented in the form \eqref{eq-sph-form} using a quasi-spherical partition $\bS$
is called a {\bf one-channel operator} with a channel across the partition $\bS$ if
{\bf for all $n$} one has $\rank D_n = 1$. Analogously, we call it an {\bf $r$-channel operator} if {\bf for all $n$} one has $\rank D_n=r$.
\end{definition}

Note, that if $\rank(D_{n_k})=1$ for a sequence $n_k \to \pm\infty$ with $k\to \pm \infty$, then one can get the one-channel property by grouping and taking $(S'_k)_k$ with $S_k'=\bigcup_{n_{k-1}<n\leq n_k} S_n$ as the new partition.
For people familiar with Dirac notations, for $\varphi_n\in \ell^2(S_n)$ the $s_n\times s_m$ matrix
$\varphi_n\varphi_m^*$ mapping from $\ell^2(S_m)$ to $\ell^2(S_n)$ simply represents the operator
$|\varphi_n\rangle \langle\varphi_m|$.
As an operator
on $\ell^2(\GG)$, $|\varphi_n\rangle \langle\varphi_m|$ can be represented as $\Psi \mapsto \,(\varphi_m^*\Psi_m)\,P_n \varphi_n $ or 
$P_n\varphi_n \varphi_m^* P_m^*$. In this work we will keep to the notations of matrix products of row with column vectors like $\varphi_n \varphi_m^*$ and we will not use the Dirac notations.

We assume now that $\Hh$ is a one-channel operator. We will show that its spectral properties can be described by transfer matrices.
Using a polar (Hilbert-Schmidt) decomposition of the $s_n \times s_{n-1}$ matrices $D_n$ we may write $D_n = -a_n\,\Ups_n \Phi_{n-1}^*$
where $\Phi_n\,,\,\Ups_n\in \ell^2(S_n)\cong \CC^{s_n}$ are {\bf unit} vectors and $0\neq a_n\in\RR$.
Then, we obtain
\begin{equation} \label{eq-Hh}
(\Hh\Psi)_n\,=\,-a_{n+1}\;\Phi_n\,\Ups_{n+1}^*\Psi_{n+1}\,-\,a_n\,\Ups_n \Phi_{n-1}^*\Psi_{n-1} \,+\,V_n\,\Psi_n\,,
\end{equation}
The minus sign is some convention to fit with notations of 1D-Schr\"odinger operators consisting of the negative Laplacian on $\ZZ$ plus a potential.
This makes later calculations more similar to \cite{CL}. If one does not like this convention one might always substitute $a_n=-\hat a_n$.

The direction of $\Phi_n$ is the {\bf forward going} mode in the $n$-th {\bf shell} $S_n$, connecting to $S_{n+1}$. Similarly, the direction $\Ups_n$ is the {\bf backward going} mode in the $n$-th shell.
If $\Ups_n=c_n \Phi_n$ for some number $c_n\neq 0$, then it is a {\bf propagating} mode through the $n$-th shell (cf. {\rm \cite{Sa3}}).

In the half-space case we take Dirichlet type boundary conditions at $0$, i.e. formally $\Psi_{0}=\Phi_{0}=0$.
In \cite{Sa3} we considered the special case when $\Phi_n=\Ups_n$ for all $n$ and called $\Hh$ an operator with
{\it one propagating channel}. Here, we do not assume this propagating property anymore. 
In the future, the half-space case will be included by considering the restriction of $\Hh$ to the half spaces, and we will assume that $n$ runs through all $\ZZ$. Thus, we define
\begin{equation}\label{eq-Hh+}
\GG^+=\bigsqcup_{n=1}^\infty S_n\;,\quad
\GG^-=\bigsqcup_{n=-\infty}^{0} S_n\;,\quad\Hh^+\,:=\,P_+^*\Hh P_+,\quad \Hh^-\,:=\,P_-^*\Hh P_-
\end{equation}
where $P_{\pm}$ are the canonical injections from $\ell^2(\GG^\pm)$ into $\ell^2(\GG)$, i.e. $\Hh^\pm$ are the restrictions of $\Hh$ to $\ell^2(\GG^\pm)$ with Dirichlet boundary conditions.

\vspace{.2cm}

Let us give some examples with this structure. Let $V_n$ be the adjacency matrix of some graph with vertex set $S_n$, let 
$\Phi_n=\delta_{x_n}, x_n\in S_n$ and for some subset $S'_n\subset S_n$ let $a_n \Ups_n=-\sum_{y\in S_n'}\delta_{y}$. 
Then, $\Hh$ is simply the adjacency operator of a graph consisting of the graphs $S_n$ glued together by connecting $x_n$ with every point in $S_{n+1}'\subset S_n$.

The ordinary graph Laplacian with some potential on the Sierpienski lattice as studied in \cite{MZ} can be represented as a 2-channel operator.

\begin{rem*}
 In principle one may want to allow the shells $S_n$ to contain infinitely many points. However, in this case many parts of the analysis in this paper only work to some extend.
 If $\#(S_n)=\infty$ then $V_n$ is an Hermitian operator on an infinite dimensional Hilbert space. In intervals $I\subset \RR$ where the spectrum of all the $V_n$ is discrete (no essential spectrum) all the techniques and theorems 
 of this paper would work just fine,  but in the cases of continuous or dense point spectrum of some of the $V_n$ many calculations would have to be adjusted.
 This will be dealt with elsewhere.
\end{rem*}

\subsection{A sufficient condition for unique self adjointness}

Let us get to the question of self-adjointness.
Recall that $\Hh^+$ is the operator $\Hh$ restricted to $\ell^2(\GG^+)=\bigoplus_{n=1}^\infty \ell^2(S_n)$ and similarly, 
$\Hh^-$ is $\Hh$ restricted to $\ell^2(\GG^-)=\bigoplus_{n=-\infty}^{0} \ell^2(S_n)$.
By $\Hh^\pm_{\min}$ we denote the operator $\Hh^\pm$ with minimal domain 
$$
\Dd^\flat_{\min}\,:=\,\left\{ \Psi\,=\bigoplus_n\,\Psi_n\,\in\,\ell^2(\GG^\flat)\,:\,\Psi_n=0\quad\text{for all but finitely many $n$}\;\; \right\}\;.
$$
where $\flat$ either symbolizes $+$ or $-$ or no index at all.
\begin{definition}
We say that $\Hh^\flat$ is {\it uniquely} self-adjoint, if $\Hh^\flat_{\min}$ with domain $\Dd^\flat_{\min}$ is essentially self-adjoint. 
\end{definition}
In this case, $\Dd^\flat_{\min}$ is a core and the self-adjoint closure has the maximal possible domain
$$
\Dd^\flat_{\max}\,:=\,\left\{\,\Psi\,\in\,\ell^2(\GG^\flat)\,:\, \Hh^\flat\,\Psi\,\in\,\ell^2(\GG)\;\right\}\;,
$$
We denote $\Hh^\flat$ with domain $\Dd^\flat_{\max}$ by $\Hh^+_{\max}$. It is a simple calculation that one always has $(\Hh^\flat_{\min})^*\,=\,\Hh^\flat_{\max}$.
The question of self-adjointness will be considered in more detail in Section~\ref{sec:selfadj}. It seems though that the famous limit-point criterion does not translate completely to this more general situation.
For now let us give some simple sufficient criterion.

\begin{prop} We find: \label{prop-unique-sa}
\begin{enumerate}[{\rm (i)}]
 \item If $\sum_{n=2}^\infty |a_n|^{-1}=\infty$, then $\Hh^+$ is uniquely self-adjoint.
 \item If $\sum_{n=0}^\infty |a_{-n}|^{-1}=\infty$, then $\Hh^-$ is uniquely self-adjoint.
 \item If both conditions, {\rm (i)} and {\rm (ii)} hold, then $\Hh$ is uniquely self-adjoint.
 \end{enumerate}
\end{prop}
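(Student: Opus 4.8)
The plan is to show that both deficiency spaces of $\Hh^+_{\min}$ are trivial. Since the excerpt records $(\Hh^+_{\min})^*=\Hh^+_{\max}$, essential self-adjointness of $\Hh^+_{\min}$ (which is the content of (i)) is equivalent to $\ker(\Hh^+_{\max}-z)=\{0\}$ for every $z\in\CC$ with $\im z\neq 0$. So I would fix such a $z$, take $\Psi\in\Dd^+_{\max}$ with $\Hh^+\Psi=z\Psi$, and aim to prove $\Psi=0$; parts (ii) and (iii) will then follow by the same mechanism.

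First I would establish a discrete ``current'' identity, copying the classical Jacobi computation. Left-multiplying the $n$-th component of \eqref{eq-Hh} by $\Psi_n^*$ and setting $w_n:=a_{n+1}\,(\Psi_n^*\Phi_n)(\Ups_{n+1}^*\Psi_{n+1})$, one sees that the $n$-th second off-diagonal term equals $\overline{w_{n-1}}$ (shift the index and use $a_n\in\RR$), and that $\Psi_n^*V_n\Psi_n\in\RR$ by Hermiticity of $V_n$. Summing over $n=1,\dots,N$, using the Dirichlet convention $\Phi_0=0$ (so $w_0=0$), and taking imaginary parts, every term $w_n+\overline{w_n}$ is real and disappears, leaving
\[
-\im w_N \;=\; \im(z)\,\sum_{n=1}^N\|\Psi_n\|^2 .
\]

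The conclusion then comes from a Carleman-type estimate. As $\Phi_n$ and $\Ups_{n+1}$ are unit vectors, Cauchy--Schwarz gives $|w_N|\leq |a_{N+1}|\,\|\Psi_N\|\,\|\Psi_{N+1}\|$. I claim $\liminf_{N\to\infty}|w_N|=0$: otherwise $|a_{N+1}|\,\|\Psi_N\|\,\|\Psi_{N+1}\|\geq c>0$ for all large $N$, hence $\|\Psi_N\|^2+\|\Psi_{N+1}\|^2\geq 2c\,|a_{N+1}|^{-1}$ by the arithmetic--geometric mean inequality, and summing this over $N$ would contradict $\Psi\in\ell^2(\GG^+)$ because $\sum_n|a_n|^{-1}=\infty$ by hypothesis (i). Evaluating the current identity along a subsequence $N_k$ with $w_{N_k}\to 0$, while $\sum_{n=1}^{N_k}\|\Psi_n\|^2\to\|\Psi\|^2$, yields $\im(z)\,\|\Psi\|^2=0$, so $\Psi=0$. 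This proves (i). For (ii) I would relabel the shells by $n\mapsto-n$ (Dirichlet at the other end) and repeat verbatim, now using hypothesis (ii). For (iii) I would run the same pairing but sum over $-M\leq n\leq N$; the telescoping now leaves two boundary currents $w_N$ and $w_{-M-1}$, with $\liminf_{N\to\infty}|w_N|=0$ from the argument for (i) and $\liminf_{M\to\infty}|w_{-M-1}|=0$ from the argument for (ii). Picking $N_k\to\infty$ with $w_{N_k}\to 0$ and then, for each $k$, an index $M_k\to\infty$ with $|w_{-M_k-1}|<1/k$, the identity again gives $\im(z)\,\|\Psi\|^2=0$.

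I expect the whole argument to be routine once the current identity is in place; the only slightly delicate bookkeeping is the handling of the Dirichlet boundary terms in the telescoping and, in (iii), the point that the subsequences at $+\infty$ and at $-\infty$ must be extracted one after the other, not independently. The structural fact that makes everything work is exactly the rank-one hypothesis $\rank D_n=1$: writing $D_n=-a_n\Ups_n\Phi_{n-1}^*$ with unit vectors $\Phi_{n-1},\Ups_n$ is what lets Cauchy--Schwarz estimate the off-diagonal cross term by $|a_{N+1}|$ times $\|\Psi_N\|\,\|\Psi_{N+1}\|$, reproducing the limit-point/Carleman bound familiar from Jacobi matrices.
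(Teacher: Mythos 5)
Your proof is correct and follows essentially the same route as the paper: the current identity you derive is exactly the paper's Lemma~\ref{lem-sum-est}, and the step converting $\sum_n|a_n|^{-1}=\infty$ into divergence of $\sum_n\|\Psi_n\|^2$ via Cauchy--Schwarz and AM--GM is the same estimate the paper uses. The only (harmless) difference is packaging: the paper runs the argument on the explicit formal solution $\Psi^u_z$ and concludes via the Weyl/deficiency-index dichotomy, whereas you apply the identity directly to an arbitrary element of $\ker(\Hh^\flat_{\max}-z)$, which even sidesteps the need to identify that kernel with the one-parameter transfer-matrix solution family.
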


\subsection{Transfer matrices}

The main tool for the fine spectral properties will be the transfer matrices which in this general setup are more complicated than in the Jacobian case.

Consider the difference equation $\Hh \Psi = z \Psi$ for any complex parameter $z$. 
Defining
\begin{equation}\label{eq-def-u}
 x_n\,=\,x_n(\Psi)\,:=\,\Ups_n^*\,\Psi_n\;,\qquad \tilde x_n\,=\,\tilde x_n(\Psi)\,:=\,\Phi_n^* \Psi_n
\end{equation}
we find after some trivial algebra that
\begin{equation}\label{eq-eig-1}
 (\Hh\Psi)_n=z\Psi_n \quad\Leftrightarrow \quad
 \Psi_n\,=\,(V_n-z)^{-1}\left( \Phi_n a_{n+1} x_{n+1}\,+\, \Ups_n a_n \tilde x_{n-1}\right)
\end{equation}
Next let us define
\begin{equation}\label{eq-def-chi_z}
 \Ups_{z,n}:=(V_n-z)^{-1} \Ups_n \qtx{and} \Phi_{z,n}:=(V_n-z)^{-1} \Phi_n
\end{equation}
as well as 
\begin{equation}\label{eq-def-alph}
 \alpha_{z,n}\,:=\,\Ups_n^* \Ups_{z,n}\,,\quad
 \beta_{z,n}\,:=\,\Ups_n^* \Phi_{z,n}\,,\quad
 \gamma_{z,n}\,:=\,\Phi_n^* \Ups_{z,n}\,,\quad
 \delta_{z,n}\,:=\,\Phi_n^*\Phi_{z,n}\;
\end{equation}
and
\begin{equation}\label{eq-def-bfg}
 \bfg_{z,n}\,:=\,\pmat{\alpha_{z,n} & \beta_{z,n} \\ \gamma_{z,n} & \delta_{z,n}}\,=\, 
 \pmat{\Ups_n^* \\ \Phi_n^*} (V_n-z)^{-1} \pmat{\Ups_n & \Phi_n}\;. 
\end{equation}
Then we get from \eqref{eq-eig-1} that
\begin{align}
 \pmat{ x_n \\ \tilde x_n}\,&=\,\pmat{ \beta_{z,n} & \alpha_{z,n} \\ \delta_{z,n} & \gamma_{z,n}} \pmat{a_{n+1}\,x_{n+1} \\ a_n  \tilde x_{n-1}} \,=\,\bfg_{z,n} \pmat{a_n  \tilde x_{n-1} \\ a_{n+1}\, x_{n+1}}\,.
\end{align}
For $\im(z)>0$ the matrix $\bfg_{z,n}$ is invertible unless $\Phi_n$ and $\Ups_n$ are co-linear.
If $\beta_{z,n}$ is invertible, then one can rearrange these equations to get $x_{n+1}$ and $\tilde x_n$ in terms of $x_n$ and $\tilde x_{n-1}$ as some kind of transfer equation:
$$
a_{n+1} x_{n+1}\,=\,\beta_{z,n}^{-1}(x_n-\alpha_{z,n} a_n \tilde x_{n-1})
$$
$$
\tilde x_n \,=\,\delta_{z,n} \beta_{z,n}^{-1}(x_n-\alpha_{z,n} a_n \tilde x_{n-1})\,-\,\gamma_{z,n} a_n\tilde x_{n-1}
$$
Therefore, we make the following definition.

\begin{definition}
We say that the channel is {\it broken} at the $n$-th shell $S_n$ if the function $z\mapsto \beta_{z,n} = \Ups_n^* (V_n-z)^{-1} \Phi_n$ is identically zero.
In this case the operator $\Hh$ can be splitted into a direct sum of two operators through the $n$-th shell $S_n$.
\end{definition}

We will now consider the case of a non broken channel, note that given the channel $(\Phi_n,\Ups_{n+1})_n$ this is a condition on $(V_n)_n$.
Then for energies where $\beta_{z,n}\neq 0$ the equations above can be written as
\begin{equation}\label{eq-transfer}
\pmat{a_{n+1} x_{n+1} \\ \tilde x_n}\,=\, T_{z,n} \pmat{a_n x_n \\ \tilde x_{n-1}}\qtx{where} T_{z,n}:=\pmat{a_n^{-1} \beta_{z,n}^{-1} & -a_n \beta^{-1}_{z,n}\alpha_{z,n} \\ a_n^{-1} \delta_{z,n} \beta_{z,n}^{-1} & a_n(\gamma_{z,n}- \delta_{z,n} \beta_{z,n}^{-1} \alpha_{z,n} )  }\,.
\end{equation}

For real values $z=\lambda\in\RR$ it is not hard to see that $\gamma_{\lambda,n}=\bar \beta_{\lambda,n}$ and $\alpha_{\lambda,n}$ and $\delta_{\lambda,n}$ are real. In particular, all entries of $\beta_{\lambda,n} T_{\lambda,n}$ are real
and $\frac{\beta_{\lambda,n}}{|\beta_{\lambda,n}|}\,T_{\lambda,n}\,\in\,\SL(2,\RR)$ and hence\footnote{$\exp(i\RR)\SL(2,\RR)$ denotes the set of $2\times 2$ matrices $T$ which can be written as
$T=e^{ir} T'$ with $r\in\RR$ and $T'\in\SL(2,\RR)$.}
$T_{\lambda,n}\in\exp(i\RR)\SL(2,\RR)$.

Note that in the special case $\Phi_n=\Ups_n$ we obtain $\alpha_{z,n}=\beta_{z,n}=\gamma_{z,n}=\delta_{z,n}$ and
$$
T_{z,n}\,=\,\pmat{a_n^{-1} \left(\Phi_n^*(V_n-z)^{-1} \Phi_n\right)^{-1} & -a_n \\ a_n^{-1} & 0 }
$$

The matrix $T_{z,n}$ is called the transfer matrix at level $n$ and spectral parameter $z$. 
The function $z\mapsto \beta_{z,n} T_{z,n}$ is well defined for $z$ in the upper half plane and hence $T_{z,n}$ 
is defined for $\im(z)>0$ except for possibly finitely many values of $z$ where $\beta_{z,n}=0$. 
Moreover, for these values, $T_{z,n}$ is invertible except for the finitely many values where $\gamma_{z,n}=\bar \beta_{\bar z,n}=0$.
We also define $T_{z,n}$ for real energies $z=\lambda \in \spec(V_n)$ where $z\mapsto T_{z,n}$ extends in a holomorphic way.

\begin{definition} Let us define some important subspaces of $\ell^2(S_n)\cong \CC^{s_n}$ for further analysis:
\begin{itemize}
 \item Let $\WW_n$ be the smallest subspace of $\ell^2(S_n)$ which is invariant under $V_n$ and contains $\Ups_n$. 
 \item Let $\tilde \WW_n$ be the smallest $V_n$-invariant subspace  of $\ell^2(S_n)$ containing $\Phi_n$.
 \item Let $\VV_n=\WW_n+ \tilde \WW_n$ as vector space sum, i.e. $\VV_n$ is the linear span of $\WW_n$ and $\tilde \WW_n$. 
\end{itemize}
\end{definition}

The following is a simple observation.
\begin{prop}\label{prop-eigf}
$\Hh$ and $\Hh^+$ leave the spaces $P_n (\VV_n^\perp) \cong \VV_n^\perp$ invariant where $n\in \ZZ$ or $n\in \NN$, respectively.
Moreover, $\Hh^+$ leaves $P_1(\tilde \WW_1^\perp)$ invariant\footnote{$\Hh^+$ does not depend on $\Ups_1$.}
Hence, $\Hh$ leaves also $\VV:=\bigoplus_{n=-\infty}^\infty \VV_n$ and $\Hh^+$ leaves $\VV^+:=\tilde\WW_1\oplus \bigoplus_{n=2}^\infty \VV_n$ invariant and we have
$$
\Hh \;\cong\; \Hh\,|\,\VV\;\oplus\;\bigoplus_{n=-\infty}^\infty V_n\,|\,\VV_n^\perp \quad,
\Hh^+ \;\cong\; \Hh^+\,|\,\VV^+\;\oplus\;V_1\,|\,\tilde \WW_1\,\oplus\,\bigoplus_{n=2}^\infty V_n\,|\,\VV_n^\perp \;.
$$
The right hand sides denote an orthogonal direct sum of self-adjoint operators and $V_n|\VV_n^\perp$ are ordinary matrices. Therefore, it is sufficient to study the spectral properties of 
$\Hh|\VV$ or $\Hh^+|\VV^+$ and one may assume $\VV_n=\CC^{s_n}$.
\end{prop}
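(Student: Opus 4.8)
The plan is to reduce the whole statement to one elementary computation: evaluating $\Hh$ on a vector supported in a single shell. Fix $n$ and a vector $\psi\in\VV_n^{\perp}$, where $\VV_n^{\perp}$ denotes the orthogonal complement of $\VV_n$ inside $\ell^2(S_n)\cong\CC^{s_n}$, and consider $P_n\psi\in\ell^2(\GG)$. Reading off \eqref{eq-Hh}, the only components of $\Hh P_n\psi$ that can be non-zero are the three in shells $n-1,n,n+1$:
\[
(\Hh P_n\psi)_{n+1}=-a_{n+1}\,\Ups_{n+1}\,(\Phi_n^{*}\psi),\qquad (\Hh P_n\psi)_{n-1}=-a_{n}\,\Phi_{n-1}\,(\Ups_n^{*}\psi),\qquad (\Hh P_n\psi)_{n}=V_n\psi.
\]
Now $\Phi_n\in\tilde\WW_n\subseteq\VV_n$ and $\Ups_n\in\WW_n\subseteq\VV_n$, so the scalars $\Phi_n^{*}\psi=\langle\Phi_n,\psi\rangle$ and $\Ups_n^{*}\psi=\langle\Ups_n,\psi\rangle$ vanish and the two off-shell components are zero; moreover $\VV_n=\WW_n+\tilde\WW_n$ is $V_n$-invariant (being a sum of $V_n$-invariant subspaces) and $V_n$ is Hermitian, so $\VV_n^{\perp}$ is $V_n$-invariant too, whence $(\Hh P_n\psi)_n=V_n\psi\in\VV_n^{\perp}$. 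Therefore $\Hh P_n\psi=P_n(V_n\psi)\in P_n(\VV_n^{\perp})$, which shows that $P_n(\VV_n^{\perp})$ is mapped into itself by $\Hh$ and that $\Hh$ acts on it as the finite matrix $V_n|\VV_n^{\perp}$. Since $P_n(\VV_n^{\perp})$ is finite-dimensional it lies in $\dom\Hh$, and a closed subspace that is contained in the domain of a self-adjoint operator and invariant under it automatically reduces it; hence $P_n(\VV_n^{\perp})$ reduces $\Hh$.

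For $\Hh^{+}$ the same computation applies verbatim whenever $n\geq2$. At the boundary shell $n=1$ the Dirichlet convention $\Psi_0=\Phi_0=0$ simply deletes the term $-a_1\Ups_1\Phi_0^{*}\Psi_0$ (this is exactly why $\Hh^{+}$ involves neither $\Ups_1$ nor $a_1$), so for $\psi\in\tilde\WW_1^{\perp}$ one finds $(\Hh^{+}P_1\psi)_2=-a_2\,\Ups_2\,(\Phi_1^{*}\psi)=0$ because $\Phi_1\in\tilde\WW_1$, while $(\Hh^{+}P_1\psi)_1=V_1\psi\in\tilde\WW_1^{\perp}$ because $\tilde\WW_1$ is $V_1$-invariant. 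Hence $P_1(\tilde\WW_1^{\perp})$ reduces $\Hh^{+}$, on which $\Hh^{+}$ acts as $V_1|\tilde\WW_1^{\perp}$.

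It remains to assemble the pieces. The subspaces $P_n(\VV_n^{\perp})$, $n\in\ZZ$, are pairwise orthogonal and each reduces $\Hh$, so their closed orthogonal direct sum reduces $\Hh$ as well (the corresponding orthogonal projections commute with the resolvent of $\Hh$ and sum strongly to the projection onto the direct sum), and therefore so does its orthogonal complement, which is precisely $\VV=\bigoplus_{n=-\infty}^{\infty}P_n(\VV_n)$. Decomposing $\Hh$ along $\VV\oplus\VV^{\perp}$ and recording that $\Hh$ acts as $V_n|\VV_n^{\perp}$ on the $n$-th summand of $\VV^{\perp}$ gives $\Hh\cong(\Hh|\VV)\oplus\bigoplus_{n=-\infty}^{\infty}(V_n|\VV_n^{\perp})$. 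The half-space case is the same, except that at shell $1$ one removes the larger subspace $P_1(\tilde\WW_1^{\perp})$ in place of $P_1(\VV_1^{\perp})$, so the surviving part is $\VV^{+}=P_1(\tilde\WW_1)\oplus\bigoplus_{n=2}^{\infty}P_n(\VV_n)$ and $\Hh^{+}\cong(\Hh^{+}|\VV^{+})\oplus(V_1|\tilde\WW_1^{\perp})\oplus\bigoplus_{n=2}^{\infty}(V_n|\VV_n^{\perp})$. Finally, for the last assertion one observes that $\Hh|\VV$ (respectively $\Hh^{+}|\VV^{+}$) is again a one-channel operator of exactly the same shape, now with shells $\VV_n$ (respectively $\tilde\WW_1,\VV_2,\VV_3,\dots$): indeed $\Phi_n,\Ups_n\in\VV_n$, each $V_n$ preserves $\VV_n$, and the rank-one blocks $D_n=-a_n\,\Ups_n\Phi_{n-1}^{*}$ restrict to rank-one maps between the new shells; relabelling $s_n:=\dim\VV_n$ one may therefore assume $\VV_n=\CC^{s_n}$ from the outset.

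I do not expect a genuine obstacle. The two points that need a little care are the bookkeeping at the boundary shell of $\Hh^{+}$ — why only $\tilde\WW_1$, and not all of $\VV_1$, survives, which hinges on the vanishing of the $\Psi_0$-term — and the otherwise standard fact that an orthogonal direct sum of reducing subspaces of a possibly unbounded self-adjoint operator is again reducing; if one wants to sidestep the subtleties of unbounded operators, the cleanest route is through the resolvents, using that each $P_n(\VV_n^{\perp})$ is finite-dimensional and contained in the domain.
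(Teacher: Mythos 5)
Your proof is correct and is exactly the elementary computation the paper has in mind when it calls this ``a simple observation'' and omits the proof: evaluating \eqref{eq-Hh} on $P_n\psi$ with $\psi\in\VV_n^\perp$ (resp.\ $\tilde\WW_1^\perp$ at the boundary shell of $\Hh^+$), noting $\Ups_n,\Phi_n\in\VV_n$ and the $V_n$-invariance of $\VV_n^\perp$, and then assembling the finite-dimensional reducing subspaces. You also correctly write the boundary summand as $V_1|\tilde\WW_1^\perp$, where the statement's ``$V_1|\tilde\WW_1$'' is evidently a typo since $\tilde\WW_1$ already sits inside $\VV^+$.
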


Let us now classify those eigenvalues of $V_n|\VV_n$ for which one can still define the transfer matrix $T_{z,n}$.
The multiplicity of an eigenvalue of $V_n|\VV_n$ is at most 2 (cyclic space from 2 vectors). For defining $T_{z,n}$ the eigenvalues of multiplicity 2 are problematic as well as those of multiplicity one 
which are not eigenvalues of $V_n|(\WW_n\cap\tilde \WW_n) $. Together, these problematic eigenvalues are exactly the eigenvlaues on the quotient, $\spec(V_n| \VV_n\,/\,(\WW_n\cap \tilde \WW_n))$. 
Note that in the particular case where $\WW_n=\tilde \WW_n$ this set is empty.

The non-problematic eigenvalues of $V_n|\VV$ are those of multiplicity one with eigenvector in $\WW_n\cap\tilde \WW_n$, in other words, exactly the eigenvalues of $\spec(V_n|\WW_n\cap\tilde \WW_n)$.
\begin{prop}\label{prop-def-T}
$T_{z,n}$ is not defined if $\beta_{z,n}= 0$ or if $z=\lambda\in \spec(V_n|\VV_n\,/\,(\WW_n\cap\tilde \WW_n))$.
$T_{z,n}$ is not invertible if $\gamma_{z,n}=0$.
In all other cases, in particular if $\lambda\in\spec(V_n|\WW_n\cap \tilde \WW_n)$, the transfer matrix $T_{\lambda,n}$ is well defined (by holomorphic extension) and invertible.
In these cases also the vector $\Psi_{\lambda,n}^{x}$ (as defined in \eqref{eq-def-psi_z} below)  associated to a solution $(x,\tilde x)$ of the transfer matrix equation 
is well defined.
\end{prop}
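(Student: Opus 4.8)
The plan is to track carefully where each quantity in the construction of $T_{z,n}$ and $\Psi_{\lambda,n}^x$ can fail to make sense, and to show that away from the two listed bad sets the holomorphic extension exists. First I would reduce to the space $\VV_n$, using Proposition~\ref{prop-eigf}, so that we may assume $\VV_n=\CC^{s_n}$ and $V_n$ acts cyclically on the two vectors $\Ups_n,\Phi_n$. The resolvent $(V_n-z)^{-1}$, and hence the four scalar functions $\alpha_{z,n},\beta_{z,n},\gamma_{z,n},\delta_{z,n}$ in \eqref{eq-def-alph}, are rational functions of $z$ whose only possible poles lie in $\spec(V_n)$; so the first task is a spectral-decomposition bookkeeping: write $V_n=\sum_\lambda \lambda\, E_\lambda$ and observe
$$
\beta_{z,n}=\sum_{\lambda\in\spec(V_n)}\frac{\Ups_n^* E_\lambda \Phi_n}{\lambda-z},
$$
with analogous formulas for the other three. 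The residue of $\beta_{z,n}$ at $\lambda$ is $-\Ups_n^*E_\lambda\Phi_n$, and similarly for $\alpha,\gamma,\delta$; I would then argue that $\lambda$ is a genuine pole of at least one of these four rational functions precisely when $\lambda$ is an eigenvalue of $V_n|\VV_n$ whose eigenspace is not contained in $\WW_n\cap\tilde\WW_n$ — that is, exactly when $\lambda\in\spec\!\bigl(V_n|\VV_n/(\WW_n\cap\tilde\WW_n)\bigr)$. Indeed, if the (at most two-dimensional) eigenspace at $\lambda$ is spanned by a single vector $\varphi$ lying in $\WW_n\cap\tilde\WW_n$, then $E_\lambda=|\varphi\rangle\langle\varphi|/\|\varphi\|^2$ and one checks $\Ups_n^*E_\lambda\Phi_n\cdot\Phi_n^*E_\lambda\Ups_n=\Ups_n^*E_\lambda\Ups_n\cdot\Phi_n^*E_\lambda\Phi_n$, so that the singular parts of the four functions cancel coherently in the combinations entering $T_{z,n}$; conversely, multiplicity two or an eigenvector outside $\WW_n\cap\tilde\WW_n$ forces a surviving pole. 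This is the computational heart of the argument.

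Next I would feed this into \eqref{eq-transfer}. Writing $\beta_{z,n}T_{z,n}$, all entries are polynomials in the $\alpha,\beta,\gamma,\delta$ and therefore rational in $z$ with poles only in the bad set just identified; dividing by $\beta_{z,n}$ introduces, a priori, poles at the zeros of $\beta_{z,n}$. But the point is that $T_{z,n}$ itself extends holomorphically across $\lambda\in\spec(V_n|\WW_n\cap\tilde\WW_n)$: there $\beta_{z,n}$ may have a simple pole, and the three combinations $\beta_{z,n}^{-1}$, $\beta_{z,n}^{-1}\alpha_{z,n}$, $\delta_{z,n}\beta_{z,n}^{-1}$, $\gamma_{z,n}-\delta_{z,n}\beta_{z,n}^{-1}\alpha_{z,n}$ that actually occur as entries of $T_{z,n}$ are each seen to stay bounded as $z\to\lambda$, using the pole-cancellation identity from the previous paragraph (for the off-diagonal and lower-right entries) and the fact that a pole of $\beta_{z,n}$ turns into a zero of $\beta_{z,n}^{-1}$ (for the upper-left entry). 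Thus on $\spec(V_n|\WW_n\cap\tilde\WW_n)$ the transfer matrix is well defined by continuation. For invertibility: from \eqref{eq-transfer} one computes $\det(\beta_{z,n}T_{z,n}) = \gamma_{z,n}$ up to the explicit $a_n$-prefactors — more precisely $\det T_{z,n}=\gamma_{z,n}/\beta_{z,n}$ — so $T_{z,n}$ fails to be invertible exactly when $\gamma_{z,n}=0$, which for real $z=\lambda$ is the condition $\bar\beta_{\bar\lambda,n}=0$ already noted in the text. Away from $\beta_{z,n}=0$ and from $\gamma_{z,n}=0$, the determinant is a nonzero finite number, giving invertibility.

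Finally, for the statement about $\Psi_{\lambda,n}^x$: by \eqref{eq-eig-1}, $\Psi_n$ is obtained from the pair $(x_{n+1},\tilde x_{n-1})$ by applying $(V_n-z)^{-1}$ to a vector in $\mathrm{span}\{\Phi_n,\Ups_n\}$, i.e. $\Psi_{\lambda,n}^x = a_{n+1}x_{n+1}\,\Phi_{\lambda,n}+a_n\tilde x_{n-1}\,\Ups_{\lambda,n}$ with $\Phi_{\lambda,n},\Ups_{\lambda,n}$ as in \eqref{eq-def-chi_z}; once $(x,\tilde x)$ solves the transfer-matrix recursion (which is legitimate precisely on the good set), the only remaining issue is whether $\Phi_{\lambda,n}=(V_n-\lambda)^{-1}\Phi_n$ and $\Ups_{\lambda,n}=(V_n-\lambda)^{-1}\Ups_n$ extend holomorphically to $\lambda$. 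But these have a pole at $\lambda$ only if $\Phi_n$ or $\Ups_n$ has a nonzero component in $\ker(V_n-\lambda)$, and for $\lambda\in\spec(V_n|\WW_n\cap\tilde\WW_n)$ that kernel is one-dimensional, spanned by some $\varphi\in\WW_n\cap\tilde\WW_n$; the residues are then governed by $\langle\varphi,\Phi_n\rangle$ and $\langle\varphi,\Ups_n\rangle$, which are the very quantities whose ratio matches the ratio $x_{n+1}/\tilde x_{n-1}$ forced by the transfer recursion, so the two singular terms cancel. I expect the main obstacle to be precisely this coherent pole-cancellation computation — organizing the four rational functions and their residues so that the cancellations in $T_{z,n}$ and in $\Psi_{\lambda,n}^x$ are transparent rather than a mess of case distinctions; everything else is routine linear algebra and analytic continuation.
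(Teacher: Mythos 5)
Your proposal follows essentially the same route as the paper: expand $\alpha_{z,n},\beta_{z,n},\gamma_{z,n},\delta_{z,n}$ meromorphically near an eigenvalue $\lambda$ of $V_n|\VV_n$, observe that for a simple eigenvalue with eigenvector $\zeta\in\WW_n\cap\tilde\WW_n$ the residues are $|a|^2,\ \bar a b,\ \bar b a,\ |b|^2$ with $a=\zeta^*\Ups_n\neq0$, $b=\zeta^*\Phi_n\neq0$, so that the poles cancel in each entry of $T_{z,n}$ and in $\Psi^x_{\lambda,n}$; your determinant identity $\det T_{z,n}=\gamma_{z,n}/\beta_{z,n}$ is correct and handles invertibility. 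The computations you identify as the heart of the argument are exactly the ones the paper carries out.

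One caveat: the converse --- that $\lambda\in\spec(V_n|\VV_n/(\WW_n\cap\tilde\WW_n))$ really makes $T_{z,n}$ undefined --- is the only delicate part, and you dispose of it with the single clause ``forces a surviving pole.'' Two things are hidden there. First, your residue identity $\Ups_n^*E_\lambda\Phi_n\cdot\Phi_n^*E_\lambda\Ups_n=\Ups_n^*E_\lambda\Ups_n\cdot\Phi_n^*E_\lambda\Phi_n$ is \emph{not} equivalent to cancellation: if the eigenvector lies in $\WW_n$ but not $\tilde\WW_n$ (so $b=0$, $a\neq0$) the identity holds trivially as $0=0$, yet $\beta_{z,n}$ stays finite while $\alpha_{z,n}$ blows up, so the entry $\beta_{z,n}^{-1}\alpha_{z,n}$ diverges; the correct dichotomy is the one you state in words (eigenvector in $\WW_n\cap\tilde\WW_n$), not the identity. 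Second, in the multiplicity-two case with projections $\bfa=E_\lambda\Ups_n$, $\bfb=E_\lambda\Phi_n$ non-orthogonal and non-colinear, all four functions have poles and the lower-right entry $\gamma_{z,n}-\delta_{z,n}\beta_{z,n}^{-1}\alpha_{z,n}$ could a priori still cancel; one needs the strict Cauchy--Schwarz inequality $|\bfb^*\bfa|^2<\|\bfa\|^2\|\bfb\|^2$ to see that the residue $\bigl(|\bfb^*\bfa|^2-\|\bfa\|^2\|\bfb\|^2\bigr)/(\bfa^*\bfb)$ is nonzero, which is exactly how the paper closes this case. Your framework contains the ingredients for both points, but as written the converse is asserted rather than proved.
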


After these extensions we let 
\begin{equation}
A_n\,:=\{z\in\CC\,:\, T_{z,n} \;\;\text{or}\;\; T_{z,n}^{-1}\;\;\text{is not defined}\; \}\;.
\end{equation}
Thus $A_n=\{z\,:\,\beta_{z,n}\gamma_{z,n}=0\} \cup \,\spec(V_n|\VV_n\,/\,(\WW_n\cap \tilde \WW_n)) $.
\begin{equation}
B_{l,m}\,:=\,\bigcup_{l\leq n \leq m} A_n\;, \; B_\infty:= B_{-\infty,\infty}=\bigcup_{n=-\infty}^\infty A_n
\end{equation}
where we allow $l=-\infty$ or $m=\infty$.
Furthermore, for $l\leq m$; $l,m\in\ZZ$, $z\not \in B_{l+1,m}$ we define 
\begin{equation} 
T_{z,l,m}\,:=\,T_{z,m} T_{z,m-1}\;\cdots\;T_{z,l+1},\quad T_{z,l,l}:=\one  \qtx{and}
T_{z,m,l}\,:=\,(T_{z,l,m})^{-1}\;.
\end{equation}

One should think of $\ovr{x}_n :=\smat{a_{n+1} x_{n+1} \\ \tilde x_n}$ for a solution as the $n$-th vector and $T_{z,l,m}$ maps $\ovr{x}_l$ to $\ovr{x}_m$ and 
we have the special one-step transfer matrices $T_{z,n}=T_{z,n-1,n}$.
If we do not have a broken channel, then except for countably many energies, all transfer matrices are defined and invertible.
At each $n$, $T_{z,n}$ is not defined for only finitely many energies.
More precisely, $\det(V_n-z) \beta_{z,n}=0$ leads to a polynomial equation of degree $n-1$ so it has $n-1$ solutions in $z$ counted with multiplicity.
Also, $\gamma_{\bar z,n}=\bar \beta_{z,n}$ so the energies where $T_{z,n}$ is not invertible relate to complex conjugate energies where $T_{z,n}$ is not defined.
The used indices and induced index shifts can be sometimes confusing, therefore it is good to keep the following schematic picture in mind:

\vspace{.2cm}

\begin{center}
\includegraphics[width=.9\linewidth]{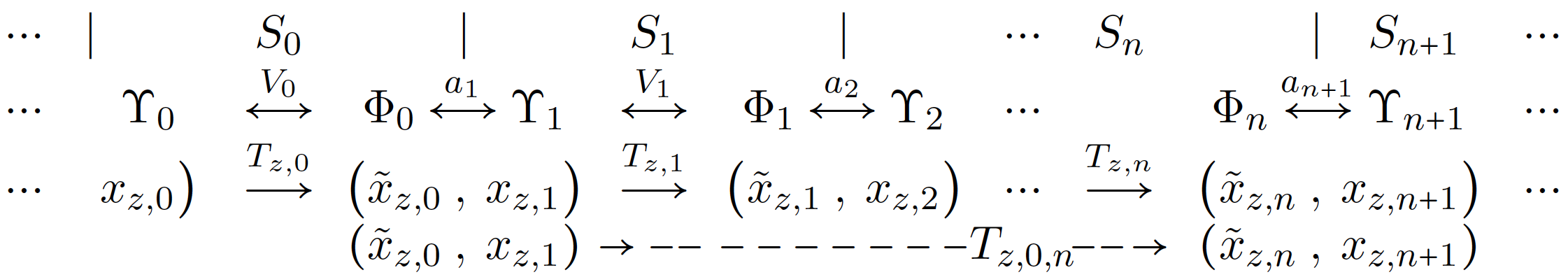} 
\end{center}

\vspace{.2cm}

Any pair of sequences $(x_{n+1},\tilde x_n)_n$ solving the above equation \eqref{eq-transfer} for some interval $n\in[m,l]\cap \ZZ$ will be called a solution of the transfer matrix equation at $z$.
Such a solution will simply be abbreviated by $(x,\tilde x)$ or simply by $x$. Note that the sequence $x_n$ actually also defines the sequence $\tilde x_n$ by \eqref{eq-transfer}.
For any such solution we let
\begin{equation}\label{eq-def-psi_z}
 \Phi_{z,n}^{x}\,:=\, a_{n+1} x_{n+1}\,\Phi_{z,n}\;, \quad
 \Ups_{z,n}^{x}\,:=\,a_n \tilde x_{n-1}\,\Ups_{z,n}\;, \quad
 \Psi_{z,n}^{x}\,:=\, \Phi_{z,n}^{x}\,+\,\Ups_{z,n}^{x}
\end{equation}
so that $\bigoplus_n \Psi_{z,n}^{x}$ is the (formal) solution of the eigenvalue equation.

We will work with some special solutions and let $(u_z, \tilde u_z)$ and $(w_z, \tilde w_z)$ be the solutions at $z$ with $u_{z,1}=1=\tilde w_{z,0} a_1$ and $\tilde u_{z,0}=0=w_{z,1}$. Then
\begin{equation}\label{eq-T-u-v}
 T_{z,0,n}\,=\,\pmat{a_{n+1} u_{z,n+1} / a_1 & a_{n+1} w_{z,n+1} a_1 \\ \tilde u_{z,n} / a_1 & \tilde w_{z,n} a_1}
 \,=\,\pmat{a_{n+1} u_{z,n+1} & a_{n+1} w_{z,n+1} \\ \tilde u_{z,n}  & \tilde w_{z,n} }\pmat{a_1^{-1} \\ & a_1}
\end{equation}
and we will use
$\Phi_{z,n}^u,\; \Ups_{z,n}^u,\;\Psi_{z,n}^u$ instead of $\Phi_{z,n}^{u_z},\; \Ups_{z,n}^{u_z},\; \Psi_{z,n}^{u_z}$, respectively.
This particular choice of solutions mimics the explicit dependence of the transfer matrices on $a_1$ for $n\geq 1$. 
Since $\Hh^+$ does not depend on $a_1$ this choice makes sense for considering $\Hh^+$.

\subsection{Spectral Theory}

From now we will always assume that $\Hh$ and hence also $\Hh^\pm$ are uniquely self-adjoint.

Besides the eigenfunctions of $V_n|\VV_n^\perp$, let us first classify all further eigenfunctions with finite support.
Such eigenfunctions can be created on shells between non-existent transfer matrices.
First, if for a real value $\lambda\in\RR$ the transfer matrix $T_{\lambda,n}$ does not exist, then we associate 'boundary conditions' on the left and right of $S_n$ (cf. Lemma~\ref{lem-boundary}).
Therefore, we define for $\lambda\in A_n\cap\RR$
\begin{equation}
 \ovr{x}_{\lambda,n}^{(-)}\,:=\,\pmat{a_{n+1}^2 \alpha_{\lambda,n} \\ 1} \qtx{or} \ovr{x}_{\lambda,n}^{(-)}\,:=\,\pmat{1 \\ 0} \quad \text{if $|\alpha_{\lambda,n}|=\infty$}
\end{equation}
and
\begin{equation}
 \ovr{x}_{\lambda,n}^{(+)}\,:=\,\pmat{1 \\ \delta_{\lambda,n}} \qtx{or} \ovr{x}_{\lambda,n}^{(+)}\,:=\,\pmat{0 \\ 1} \quad \text{if $|\delta_{\lambda,n}|=\infty$.}
\end{equation}
In these definitions, $\alpha_{\lambda,n}=\Ups_n^*(V_n-\lambda)^{-1} \Ups_n$ exists if the restriction $V_n-\lambda|\WW_n$ is invertible, otherwise $|\alpha_{\lambda,n}|=\infty$.
Similarly,  $\delta_{\lambda,n}$ exists if and only if $V_n-\lambda|\tilde\WW_n$ is invertible, otherwise $|\delta_{\lambda,n}|=\infty$.

\begin{Theo} \label{th-compact-eig} Let $\lambda\in\RR$ and let the channel not be broken somewhere. Then, we have the following:
\begin{enumerate}[{\rm (i)}]
\item Suppose $l\leq m$, $\lambda\not\in B_{l+1,m}$, $\lambda\in A_{l}\cap A_{m+1}$ and $T_{\lambda,l,m} \ovr{x}^{(+)}_{\lambda,l}=C \ovr{x}^{(-)}_{\lambda,m+1}$ for some constant $C\in\CC$.
Then, $\Hh|\VV$ has precisely one eigenfunction (up to scalars) with eigenvalue $\lambda$ which is supported on $\bigcup_{n=l-1}^{m+1} S_n$.
For $l>0$ this is also an eigenfunction of $\Hh^+$.
\item Suppose $T_{\lambda,n}$ exists for $1\leq n \leq m$ and not for $n=m+1$ and $T_{\lambda,0,m} \smat{1\\0}=C \ovr{x}^{(-)}_{\lambda,m+1}$. 
Then, $\Hh^+|\VV^+$ has precisely one eigenfunction (up to scalars) with eigenvalue $\lambda$ which is supported on $\bigcup_{n=1}^{m+1} S_n$.
\item Any eigenfunction of $\Hh|\VV$ or $\Hh^+|\VV^+$ with compact support is a finite sum of such eigenvectors as in {\rm (i)} or {\rm (ii)}.
\item Let $\lambda \in A_l$, $\lambda\not \in A_n$ for all $n>l$ and let
$\smat{a_{n+1} x_{n+1} \\ \tilde x_n}=T_{\lambda,l,n}\ovr{x}^{(+)}_{\lambda,l}$, be the corresponding solution for $l<n$.
Then, for some specific $\psi_l\in\ell^2(S_l)$ the vector $\psi_l\oplus \bigoplus_{n>l} \Psi^x_{\lambda,n}$ is a formal solution of $\Hh\Psi=\lambda\Psi$.
If it is in $\ell^2(\GG)$, then it is the unique eigenfunction (up to scalars) of $\Hh|\VV$ (and $\Hh^+|\VV^+$ if $l> 0$) with eigenvalue $\lambda$ which is supported on $\bigsqcup_{n=l}^\infty S_n$.
\item Let $\lambda \in A_{m+1}$, $\lambda\not \in A_n$ for all $n\leq m$ and let $\smat{a_{n+1} x_{n+1} \\ \tilde x_n}=T_{\lambda,m,n}\ovr{x}^{(-)}_{\lambda,m+1}$, be the corresponding solution for $n<m$. 
Then, for some specific $\psi_{m+1}\in\ell^2(S_{m+1})$ the vector  $\bigoplus_{n\leq m} \Psi^x_{\lambda,n} \,\oplus\,\psi_{m+1}$ is a formal solution of $\Hh\Psi=\lambda\Psi$.
If it is in $\ell^2(\GG)$, then it is the unique eigenfunction (up to scalars) of $\Hh|\VV$ (and $\Hh^+|\VV^+$ if $l> 0$) with eigenvalue $\lambda$ which is supported on $\bigsqcup_{n=-\infty}^{m+1} S_n$.
\end{enumerate}
\end{Theo}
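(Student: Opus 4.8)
We prove all parts by translating each assertion about eigenfunctions of $\Hh|\VV$ (resp.\ $\Hh^+|\VV^+$) at $\lambda$ into an assertion about \emph{generalized solutions} of the transfer recursion \eqref{eq-transfer}, using throughout the \emph{boundary dictionary} of Lemma~\ref{lem-boundary}, which I take for granted: it records, for a shell $S_n$ with $\lambda\in A_n$, the one-dimensional space of state vectors $\ovr{x}_m=\smat{a_{m+1}x_{m+1}\\ \tilde x_m}$ that are admissible on either side of it. The structural fact underlying everything is this: if $\Psi\neq 0$ is an eigenfunction at $\lambda$ and we set $x_n=\Ups_n^*\Psi_n$, $\tilde x_n=\Phi_n^*\Psi_n$, then on any run of indices on which $\lambda\notin A_n$ the matrices $T_{\lambda,n}$ are defined and invertible, so by \eqref{eq-transfer} the vectors $\ovr{x}_n$ there are all nonzero scalar multiples of one another. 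Hence $\{n:\ovr{x}_n\neq 0\}$ is a union of intervals whose finite endpoints lie in $\{n:\lambda\in A_n\}$, and, reading the eigenvalue equation \eqref{eq-eig-1} one shell outside such an endpoint, Lemma~\ref{lem-boundary} forces $\ovr{x}_l\in\CC\,\ovr{x}^{(+)}_{\lambda,l}$ at a left endpoint (where $\lambda\in A_l$) and $\ovr{x}_m\in\CC\,\ovr{x}^{(-)}_{\lambda,m+1}$ at a right endpoint (where $\lambda\in A_{m+1}$); the ``$\infty$'' alternatives in the definitions of $\ovr{x}^{(\pm)}$ are precisely the sub-cases in which $V_n-\lambda$ fails to be invertible on $\WW_n$, resp.\ $\tilde\WW_n$. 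On the half-line the Dirichlet boundary at $0$ behaves as a left endpoint whose admissible state is the initial vector $\smat{1\\0}$ of the $u$-solution, cf.\ \eqref{eq-T-u-v}.

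Parts (i) and (ii) follow at once. For (i): an eigenfunction supported on $\bigcup_{n=l-1}^{m+1}S_n$ satisfies $\ovr{x}_l=c\,\ovr{x}^{(+)}_{\lambda,l}$ for some $c$, and since $\lambda\notin B_{l+1,m}$ the state propagates to $c\,T_{\lambda,l,m}\ovr{x}^{(+)}_{\lambda,l}=\ovr{x}_m\in\CC\,\ovr{x}^{(-)}_{\lambda,m+1}$, which is exactly the hypothesis; conversely, under the hypothesis one puts $\ovr{x}_l=\ovr{x}^{(+)}_{\lambda,l}$, propagates it by $T_{\lambda,l,m}$ to obtain $(x_n,\tilde x_n)$ for $l\le n\le m$, reconstructs $\Psi_n:=\Phi^x_{\lambda,n}+\Ups^x_{\lambda,n}$ there via \eqref{eq-def-psi_z} (legitimate since $\lambda\notin A_n$, by Proposition~\ref{prop-def-T}), and defines $\Psi_{l-1},\Psi_{m+1}$ by the two boundary relations of Lemma~\ref{lem-boundary}. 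The recursion together with the matching $T_{\lambda,l,m}\ovr{x}^{(+)}_{\lambda,l}=C\,\ovr{x}^{(-)}_{\lambda,m+1}$ makes the resulting vector an exact solution of $\Hh\Psi=\lambda\Psi$ vanishing outside $\bigcup_{n=l-1}^{m+1}S_n$, lying in $\VV$ (resp.\ $\VV^+$ if $l>0$) by Proposition~\ref{prop-eigf}; uniqueness up to scalars holds because the single scalar $c$ determined the whole solution. Part (ii) is identical with the left endpoint replaced by the Dirichlet boundary and $\ovr{x}^{(+)}_{\lambda,l}$ by $\smat{1\\0}$.

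For (iii), a nonzero compactly supported eigenfunction has $\{n:\ovr{x}_n\neq 0\}$ equal to a \emph{finite} union of intervals $[l_j-1,m_j]$ with $\lambda\in A_{l_j}\cap A_{m_j+1}$ and $\lambda\notin B_{l_j+1,m_j}$; restricting the reconstruction above to each block yields an eigenfunction of type (i) (or (ii) at a block abutting $0$), and the sum over the finitely many blocks is $\Psi$. (If $\Psi$ itself vanishes on a shell $S_n$ interior to its support, \eqref{eq-eig-1} at that shell forces $\ovr{x}_{n-1}=\ovr{x}_n=0$ unless $\Phi_n,\Ups_n$ are co-linear, in which case $S_n$ is simply absorbed into one block; either way the bookkeeping is unchanged.) Parts (iv) and (v) are the one-sided version of (i): when $\lambda\in A_l$ is the only break to the left of every defined transfer matrix, there is no constraint on the right, so one takes the forced value $\psi_l\in\ell^2(S_l)$ on $S_l$ provided by Lemma~\ref{lem-boundary} — it solves $(V_l-\lambda)\psi_l=a_{l+1}x_{l+1}\Phi_l$ in $\VV_l$ with $\Ups_l^*\psi_l=0$, which is solvable precisely because of the type of $\lambda\in A_l$ — propagates $\ovr{x}_l=\ovr{x}^{(+)}_{\lambda,l}$ by $T_{\lambda,l,n}$ to produce $\Psi^x_{\lambda,n}$ for $n>l$, and assembles $\psi_l\oplus\bigoplus_{n>l}\Psi^x_{\lambda,n}$, a formal solution of $\Hh\Psi=\lambda\Psi$; if it is square summable then, by the standing assumption that $\Hh$ is uniquely self-adjoint, it lies in the domain of $\Hh$ and is thus a genuine eigenfunction, unique with that support by the same step-by-step determinacy. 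Part (v) follows by the reflection $n\mapsto -n$.

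The only step requiring genuine work beyond this linear bookkeeping is the boundary dictionary Lemma~\ref{lem-boundary}. One must treat separately the sub-cases $\beta_{\lambda,n}=0$, $\gamma_{\lambda,n}=0$, $\lambda$ a multiplicity-two eigenvalue of $V_n|\VV_n$, and $\lambda$ a simple eigenvalue of $V_n|\VV_n$ whose eigenvector is not in $\WW_n\cap\tilde\WW_n$; for each, one must identify the exact one-dimensional space of admissible states on each side of $S_n$ and check, using the $\VV_n$-reduction of Proposition~\ref{prop-eigf}, that \eqref{eq-def-psi_z} still produces a genuine (not merely formal) local eigenvector there. The two alternatives ``$|\alpha_{\lambda,n}|=\infty$'' and ``$|\delta_{\lambda,n}|=\infty$'' built into $\ovr{x}^{(\pm)}_{\lambda,n}$ are exactly the reflection of this case distinction, and getting them consistent with the various index shifts is where the delicate part of the argument sits.
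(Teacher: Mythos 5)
Your proposal is correct and follows essentially the same route as the paper: propagate the boundary states of Lemma~\ref{lem-boundary} through the transfer matrices, reconstruct $\Psi_n$ via \eqref{eq-def-psi_z} on the interior shells and by the lemma's case analysis on the breaking shells, and peel a compactly supported eigenfunction into blocks between consecutive breaks. The one step you leave essentially as an assertion is the uniqueness in (i)--(ii): the paper shows that if two candidate eigenfunctions share the channel data $(x_n,\tilde x_n)$, then their difference, concentrated on the end shells $S_l,S_{m+1}$, is annihilated by $V_n-\lambda$ and by $\Ups_n^*,\Phi_n^*$ and hence lies in $\VV_n^\perp\cap\VV_n=\{0\}$; your remark that ``the single scalar $c$ determined the whole solution'' is precisely this fact and needs that short argument, since the channel data alone does not a priori pin down $\Psi_n$ on a shell where $V_n-\lambda$ has a kernel.
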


\begin{rem*}
 In case (ii), the special limiting case where $m=0$ and $T_{\lambda,1}$ is not defined with $\ovr{x}^{(-)}_{\lambda,1}=\smat{1\\0}$, i.e. $\Phi_1^*(V_1-\lambda)^{-1} \Phi_1=0$ leads to an eigenvector
 $\varphi$ of $\WW_1 \cap \tilde \WW_1^\perp$ which is formally already included in Proposition~\ref{prop-eigf}.
 The existence of $T_{\lambda,1}$ depends on the $\Ups_1$ but changing this vector does not change the operator $\Hh^+$, it however changes the space $\VV_1$. If an eigenvector of $\Hh^+$ for $\lambda$ lies in $\VV_1^\top$ it does not affect existence of 
 $T_{\lambda,1}$, but if it lies in $\VV_1 \setminus \tilde \WW_1$, then $T_{\lambda,1}$ does not exist.\\
 \end{rem*}

Thus, real values were certain transfer matrices do not exist can lead to compactly supported eigenfunctions. Complex energies where certain transfer matrices do not exist could be seen as some form of resonances but it might be interesting to classify these further.

Let us now  define the spectral measures $\mu_{\Psi}$ and $\mu_{\Psi}^{\pm}$ by
\begin{equation}
\langle \Psi; f(\Hh^\flat) \Psi \rangle\,=\,\int_\RR f(\lambda)\,\mu^\flat_{\Psi}(d\lambda)
\end{equation}
where $\flat$ either denotes $+$, $-$ or no upper index.
For simplicity we may write $\mu^\flat_{\Ups_n}$ and $\mu^\flat_{\Phi_n}$ instead  of the formally correct notation $\mu^\flat_{P_n \Ups_n}$ or $\mu^\flat_{P_n\Phi_n}$.

\begin{Theo}\label{th:H+msr}
Let $\Hh$ and hence also $\Hh^+$ be uniquely self-adjoint and let the channel not be broken.
\begin{enumerate}[{\rm (i)}]
\item There is some pure point measure $\nu^+$ supported on $B_{1,\infty}\cap \RR$ and induced by the eigenvectors as in Theorem~\ref{th-compact-eig}, such that
 \begin{align}\label{eq-mu1+}
&a_1^2\,{\rm d}\mu_{\Ups_1}^+(\lambda)\,=\,{\rm d}\nu^+(\lambda)\,+\,\lim_{n\to\infty} \pi^{-1}\,\left\| T_{\lambda,0,n} \smat{1 \\ 0} \right\|^{-2}{\rm d}\lambda\;
\end{align}
where ${\rm d}\lambda$ denotes the ordinary Lebesgue measure along $\RR$ and the limit is a weak limit of measures.
\item For $\varphi \in \VV_n\subset \ell^2(S_n),\,n\in\NN$ we find that $\mu^+_{P_n\varphi}$ is absolutely continuous with respect to $\mu_{\Ups_1}^+$ on $\RR\setminus B_{1,n}$ and in the sense of an $\mu_{\Ups_1}$-almost sure defined Radon Nikodym derivative, 
\begin{align}
& \frac{{\rm d}\mu_{P_n \varphi}^+(\lambda)}{{\rm d}\mu^+_{\Ups_1}(\lambda)} \,=\, |\varphi^*\Psi_{\lambda,n}^u|^2\,,\quad \lambda\in \RR\setminus B_{1,n}\;.
\end{align}
\item There is some positive pure point measure $\nu$ supported on $B_{\infty}\cap \RR$ and induced by the eigenvectors as in Theorem~\ref{th-compact-eig}, such that
for any sequence of real unit vectors $\ovr{x}_m\in\RR^2$ we have
 \begin{align}
& {\rm d}\left(a_1^2\mu_{\Ups_1}\,+\,\mu_{\Phi_0}\,-\,\nu\right)(\lambda)\,=\, \notag \\
&\qquad =\, \lim_{m,n\to\infty} \frac{d\lambda}{\pi}\,\frac{\left\|T_{\lambda,-m,0} \ovr{x}_m \right\|^2}{\left\| T_{\lambda,-m,n}\ovr{x}_m \right\|^2}\,=\,
\lim_{m,n\to\infty} \frac{d\lambda}{\pi}\,\left\| T_{\lambda,0,n} \frac{T_{\lambda,-m,0} \ovr{x}_m}{\|T_{\lambda,-m,0} \ovr{x}_m\|} \right\|^{-2} \\
&\qquad =\,\lim_{m,n\to\infty} \frac{{\rm d}\lambda}{\pi^{2}}\,\int_0^\pi{\rm d\theta} \left\{
\left\| T_{\lambda,0,-m} \smat{\cos(\theta) \\ \sin(\theta)} \right\|^{-2}\left\| T_{\lambda,0,n} \smat{\cos(\theta) \\ \sin(\theta)} \right\|^{-2}\;\right\}\,.
\label{eq-mu1}
\end{align}
where ${\rm d}\lambda$ denotes the ordinary Lebesgue measure along $\RR$. The double limits $m,n\to\infty$ are weak limits of measures and can be taken along any sequence $n_k, m_k$.
\item For $\varphi \in \VV_n\subset \ell^2(S_n),\,n\in\ZZ$ we find that $\mu_{P_n\varphi}$ is absolutely continuous with respect to $\mu_{\Phi_0}+ \mu_{\Ups_1}$ on $\RR\setminus B_{\infty}$ and for the Radon-Nikodym derivative we find
\begin{align}
&\frac{{\rm d}\mu_{P_n \varphi}(\lambda)}{{\rm d}(\mu_{\Ups_1}+a_1^2 \mu_{\Phi_0})(\lambda)} \,\leq\, |\varphi^* \Psi^u_{\lambda,n}|^2\,+\,|\varphi^*\Psi^w_{\lambda,n}|^2
\end{align}
\end{enumerate}
\end{Theo}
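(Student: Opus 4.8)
\medskip
\noindent\textbf{Proof proposal.}
The plan is to realize every spectral measure as a weak limit of the corresponding measures for finite truncations of $\Hh$ that carry a variable self-adjoint boundary condition, and to evaluate those truncated measures explicitly from the transfer matrices, following the $1$D pattern (cf.\ \cite{CL,KLS}). After invoking Proposition~\ref{prop-eigf} to reduce to $\VV_n=\CC^{s_n}$ for all $n$ (replacing $\Hh,\Hh^+$ by $\Hh|\VV,\Hh^+|\VV^+$), I would introduce for $1\le n\le N$ the truncation $\Hh^{+,(N),\theta}$ on $\bigoplus_{k=1}^N\ell^2(S_k)$ obtained from $\Hh^+$ by replacing $V_N$ with $V_N+t(\theta)\,\Phi_N\Phi_N^*$, where $t\colon[0,\pi)\to\RR\cup\{\infty\}$ runs through all boundary conditions at the last shell ($\theta=0$ being Dirichlet, $\Psi_{N+1}=0$), together with the two-sided analogues $\Hh^{(N',N),\theta',\theta}$ for $\Hh$. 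Unique self-adjointness --- the standing hypothesis of this subsection, cf.\ Proposition~\ref{prop-unique-sa} --- says precisely that $\Dd^+_{\min}$ is a core, so $\Hh^{+,(N),\theta}\to\Hh^+$ in the strong resolvent sense as $N\to\infty$ for each fixed $\theta$ (on $\varphi\in\Dd^+_{\min}$ the two operators agree once $N>\supp\varphi$), whence the spectral measures $\mu^{+,(N),\theta}_{P_k\psi}$ of $\Hh^{+,(N),\theta}$ satisfy $\mu^{+,(N),\theta}_{P_k\psi}\xrightarrow{w}\mu^+_{P_k\psi}$ for every $\psi\in\ell^2(S_k)$ and every $\theta$; likewise for $\Hh$.

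Next I would read off the atoms of the truncated measures from the transfer matrices. For $\lambda\in\RR\setminus B_{1,n}$ all of $T_{\lambda,1},\dots,T_{\lambda,n}$ are defined (Proposition~\ref{prop-def-T}) and the left-Dirichlet solution $(u_\lambda,\tilde u_\lambda)$ produces, via \eqref{eq-def-psi_z}, the formal solution $\bigoplus_k\Psi^u_{\lambda,k}$; an eigenvector $\psi$ of $\Hh^{+,(N),\theta}$ with eigenvalue $\lambda\notin B_{1,N}$ either vanishes on $S_1,\dots,S_n$ (contributing $0$ both to $\mu^{+,(N),\theta}_{P_n\varphi}$ and to $\mu^{+,(N),\theta}_{\Ups_1}$) or equals a scalar multiple of $\bigoplus_k\Psi^u_{\lambda,k}$ there, the eigenvalue condition at shell $N$ being that $T_{\lambda,0,N}\smat{1\\0}$ point in the boundary direction $v_\theta\in\RR^2$. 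Using $\Ups_1^*\Psi^u_{\lambda,1}=u_{\lambda,1}=1$ this gives
\[
\mu^{+,(N),\theta}_{\Ups_1}(\{\lambda\})=\Big(\,\textstyle\sum_{k=1}^N\|\Psi^u_{\lambda,k}\|^2\Big)^{-1},\qquad
\mu^{+,(N),\theta}_{P_n\varphi}(\{\lambda\})=|\varphi^*\Psi^u_{\lambda,n}|^2\;\mu^{+,(N),\theta}_{\Ups_1}(\{\lambda\}),
\]
while the finitely many $\lambda\in B_{1,N}$ carry $\theta$-independent atoms, namely the compactly supported eigenvectors of Theorem~\ref{th-compact-eig}(i)--(ii) living inside $[1,N]$ (plus the eigenvalues removed by Proposition~\ref{prop-eigf}, all of which lie in $B_{1,\infty}$), collected into a pure point measure $\nu^{+,(N)}$ supported on $B_{1,N}$. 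This already proves (ii): the identity $\mu^{+,(N)}_{P_n\varphi}=|\varphi^*\Psi^u_{\cdot,n}|^2\,\mu^{+,(N)}_{\Ups_1}$ holds on $\RR\setminus B_{1,n}$ for the single choice $\theta=0$ and all $N\ge n$, and testing against $f\in C_c(\RR\setminus B_{1,n})$ --- on whose support $|\varphi^*\Psi^u_{\cdot,n}|^2$ is continuous and bounded by Proposition~\ref{prop-def-T} --- and letting $N\to\infty$ yields $\int f\,d\mu^+_{P_n\varphi}=\int f\,|\varphi^*\Psi^u_{\cdot,n}|^2\,d\mu^+_{\Ups_1}$. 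For (iv) the two-sided truncation has eigenvectors of the form $c^u\Psi^u_{\lambda,\cdot}+c^w\Psi^w_{\lambda,\cdot}$; applying Cauchy--Schwarz to $|\varphi^*(c^u\Psi^u_{\lambda,n}+c^w\Psi^w_{\lambda,n})|^2$ and noting that $|c^u|^2$ and $|c^w|^2$ account, up to the common normalization $\|\psi\|^2$ and the $a_1$-factors of \eqref{eq-T-u-v}, for the $\lambda$-weights of $\mu_{\Ups_1}$ and $a_1^2\mu_{\Phi_0}$ in the truncation gives the stated inequality for the Radon--Nikodym derivative on $\RR\setminus B_\infty$ after $N,N'\to\infty$.

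Parts (i) and (iii) need, in addition, spectral averaging. The family $\theta\mapsto\Hh^{+,(N),\theta}$ is a rank-one perturbation in the direction $\Phi_N$, so the Kotani--Simon averaging identity (as used for $1$D Schr\"odinger operators in \cite{CL}, see also \cite{KLS}) gives, on $\RR\setminus B_{1,N}$,
\[
\frac1\pi\int_0^\pi\mu^{+,(N),\theta}_{\Ups_1}(d\lambda)\,d\theta
=\frac{1}{\pi\,a_1^2}\,\big\|\,T_{\lambda,0,N}\smat{1\\0}\,\big\|^{-2}\,d\lambda\;+\;(\text{pure point part on }B_{1,N}),
\]
the normalization coming from \eqref{eq-T-u-v}. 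Multiplying by $a_1^2$, letting $N\to\infty$, and interchanging $\lim_N$ with $\frac1\pi\int_0^\pi(\cdot)\,d\theta$ by dominated convergence (all measures have total mass $1$) and strong resolvent convergence, the left side tends to $a_1^2\mu^+_{\Ups_1}$ and the atomic pieces stabilize to a pure point measure $\nu^+$ on $B_{1,\infty}\cap\RR$; hence $\pi^{-1}\|T_{\lambda,0,N}\smat{1\\0}\|^{-2}d\lambda$ converges weakly to $a_1^2\mu^+_{\Ups_1}-\nu^+$, which is \eqref{eq-mu1+}. (The weak limit may itself carry singular mass --- at subordinate-solution eigenvalues of $\Hh^+$ outside $B_{1,\infty}$ --- which the statement allows.) For (iii) I would split $\Hh=(\Hh^-\oplus\Hh^+)+W$ with $W=-a_1(P_1\Ups_1\Phi_0^*P_0^*+P_0\Phi_0\Ups_1^*P_1^*)$ a rank-$2$ coupling supported on $\mathrm{span}\{P_1\Ups_1,P_0\Phi_0\}$, so that the relevant spectral data --- carried by $a_1^2\mu_{\Ups_1}+\mu_{\Phi_0}$ --- are governed by the two half-line Weyl functions $m^\pm(z)$ through this coupling; truncating both half-lines, averaging over \emph{both} boundary angles, and passing $m,n\to\infty$ produces on $\RR\setminus B_\infty$ the product density of \eqref{eq-mu1}. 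The three displayed forms there are algebraic rewritings of one another --- the second from the first by splitting $T_{\lambda,-m,n}=T_{\lambda,0,n}T_{\lambda,-m,0}$ and normalizing the vector at level $0$, the third by symmetrizing the roles of $\pm\infty$ and carrying out the angular average --- and all have the \emph{same} weak limit $a_1^2\mu_{\Ups_1}+\mu_{\Phi_0}-\nu$ because, in the limit-point case, the boundary data at $\pm\infty$ wash out; this is why the limits do not depend on the sequences $n_k,m_k$ or on the unit vectors $\ovr{x}_m$.

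The hard part will be the spectral-averaging step itself. In the present non-propagating setting, where the $T_{\lambda,n}$ lie only in $\exp(i\RR)\SL(2,\RR)$ and are missing on the sets $A_n$, one must check that the boundary-condition family genuinely exhausts all self-adjoint truncations at the last shell and that the averaged density is \emph{exactly} $\pi^{-1}\|T_{\lambda,0,N}\smat{1\\0}\|^{-2}$ once the $a_1$-normalizations of \eqref{eq-T-u-v} are tracked, and then that the countable exceptional set $B_\infty$ is treated without double counting --- inside $B_\infty$ the weak limits contribute only the identified atoms $\nu^\flat$, while outside $B_\infty$ the ``$d\lambda$-looking'' term may still carry genuine singular mass. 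Once that is in place, the remaining passages to the limit ($N\to\infty$ for $\Hh^+$, then the double limit $m,n\to\infty$ for $\Hh$) are routine and rest only on the core property furnished by unique self-adjointness.
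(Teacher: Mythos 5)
Your strategy is essentially the paper's: truncate at shell $N$, vary the self-adjoint boundary condition there (your family $V_N+t(\theta)\Phi_N\Phi_N^*$ is the paper's $\Hh_{N,c}=\Hh_{N,0}-c\,P_N\Phi_N\Phi_N^*P_N^*$ up to reparametrization), use unique self-adjointness for strong resolvent convergence, average over the boundary condition to extract the transfer-matrix density, and identify the boundary-independent atoms with the eigenfunctions of Theorem~\ref{th-compact-eig}. Two execution differences are worth recording. For (ii) and (iv) you expand the truncated measures over eigenvectors of the finite matrices and test against $f\in C_c(\RR\setminus B_{1,n})$, whereas the paper works with the Green's function identities of Proposition~\ref{prop:green} and boundary values of the Herglotz functions $m^+_{P_n\varphi}$ (invoking \cite[Lemma~A.2]{Sa3}); your route is more elementary and is sound, provided you invoke Proposition~\ref{prop-def-T} to make sense of $\Psi^u_{\lambda,k}$ at the holomorphically extended energies in $\spec(V_k|\WW_k\cap\tilde\WW_k)\setminus A_k$, and Theorem~\ref{th-compact-eig}(iii) to exclude compactly supported eigenfunction pieces overlapping shells $1,\dots,n$ when $\lambda\notin B_{1,n}$. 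The step you defer as ``the hard part'' is exactly where the paper's specific machinery enters, and it does go through: averaging over the Cauchy distribution in the boundary parameter amounts to evaluating the M\"obius formula \eqref{eq-m} at $c=i$, and then $\im(m_{N,i}(\lambda))=\big(a_{N+1}^2|u_{\lambda,N+1}|^2+|\tilde u_{\lambda,N}|^2\big)^{-1}=\| T_{\lambda,0,N}\smat{a_1\\0}\|^{-2}$ because for real $\lambda$ all entries of $T_{\lambda,0,N}$ carry a common phase (each $\beta_{\lambda,n}T_{\lambda,n}$ is real), so the cross terms $\tilde w_{\lambda,N}\overline{\tilde u}_{\lambda,N}$ and $a_{N+1}^2w_{\lambda,N+1}\overline{u}_{\lambda,N+1}$ are real and only the ``$i$'' contributes to the imaginary part; together with Lemma~\ref{lem-bar-u} this is the only non-routine verification in the $\exp(i\RR)\SL(2,\RR)$ setting, and the same common-phase observation is what justifies the change of variables $T_{\lambda,m,0}\smat{1\\b}=C\smat{1\\ \tan(\theta)}$ producing the angular integral in (iii).
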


\begin{rem*} 
\begin{enumerate}[{\rm (i)}]
\item Note that $\Hh^+$ does not depend on $\Ups_1$ and $a_1$, but $T_{z,1}$, $T_{z,0,n}$, $\nu^+(d\lambda)$ and also $A_1$, $B_{1,n}$ do.
Particularly, this means that one can take any $\Ups\not\in \tilde \WW_0^\perp$ (no broken channel) and apart from the eigenvalues as classified in Theorem~\ref{th-compact-eig}, 
the measure $\mu_{\Ups}^+$ gives the spectral type of $\Hh^+|\VV^+$.
For some choices of $\Ups_1=\Ups$ the measure $\nu^+$ has additional support on some points coming from eigenfunctions of $\Hh^+$ supported in $S_1$.
\item The measure $\mu_{\Phi_0}+\mu_{\Ups_1}$ gives the spectral type of $\Hh|\VV$ apart from the possible eigenvalues in $B_\infty$ with eigenfunctions as described in Theorem~\ref{th-compact-eig}.
\end{enumerate}
\end{rem*}

An immediate consequence is the following criteria for pure absolutely continuous spectrum which is well known in the Jacobi operator case \cite{CL,LaSi}.
It follows directly from the above Theorem with essentially the same proof as in \cite[Theorem~1.3]{LaSi}

\begin{Theo}\label{th:ac-spec}
Assume that for some $p>2$ and some interval $(a,b)$ one has 
$$
\liminf_{n\to \infty} \int_a^b \| T_{\lambda,0,n}\|^p\,d\lambda\,<\,\infty\,.
$$
Then, the interval $[a,b]$ is in the spectrum of $\Hh^+|\VV^+$ and of $\Hh|\VV$. Furthermore, the spectrum of $\Hh^+|\VV^+$ is purely absolutely continuous in $(a,b)\setminus B_{1,\infty}$ and the spectrum of $\Hh|\VV$ is purely absolutely continuous in $(a,b)\setminus B_\infty$.
Moreover, the density of $\mu^+_{\Ups_1}$ and $\mu_{\Ups_1}+\mu_{\Phi_0}$ restricted to $(a,b)\setminus B_\infty$ with respect to the Lebesgue measure is in $L^{p/2}((a,b))$.
Possible eigenfunctions with eigenvalues in $(a,b)\cap B_\infty$ are of the form as described in Theorem~\ref{th-compact-eig}.
\end{Theo}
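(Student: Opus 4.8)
The plan is to deduce everything from Theorem~\ref{th:H+msr}, following the scheme of \cite[Theorem~1.3]{LaSi} and using only standard harmonic analysis. First I would fix the interval $(a,b)$ and the exponent $p>2$ and, passing to a subsequence $n_k$ along which the liminf is attained, set $I_{k}:=\int_a^b \|T_{\lambda,0,n_k}\|^p\,d\lambda$, so that $\sup_k I_k<\infty$. The key elementary observation is that for a matrix $T\in\exp(i\RR)\SL(2,\RR)$ (which is the case for $T_{\lambda,0,n}$ at real $\lambda\notin B_{1,n}$, by the $\SL(2,\RR)$-type discussion preceding the definition of the transfer matrices) one has, for the unit vector $e_1=\smat{1\\0}$, the two-sided bound $\|T e_1\|\geq \|T\|^{-1}$ and $\|T e_1\|\le \|T\|$; combined with $\|T^{-1}\|=\|T\|$ for such matrices, this shows $\|T_{\lambda,0,n}e_1\|^{-2}\ge \|T_{\lambda,0,n}\|^{-2}$. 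Hence by part (i) of Theorem~\ref{th:H+msr}, writing $g_n(\lambda):=\pi^{-1}\|T_{\lambda,0,n}e_1\|^{-2}$, the absolutely continuous part of $a_1^2\mu^+_{\Ups_1}$ on $(a,b)$ is the weak limit of the measures $g_n\,d\lambda$ (the pure-point piece $\nu^+$ being supported on the countable set $B_{1,\infty}$ and not interfering with Lebesgue-a.e.\ statements).

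Next I would control the densities $g_{n_k}$ in $L^{p/2}$. Reverse H\"older (or simply $\|Te_1\|^{-1}\le \|T\|$ giving $g_n\le \|T_{\cdot,0,n}\|^{2}/\pi$, which is the wrong direction) is not quite what is needed; instead one uses that $\|Te_1\|\,\|Te_2\|\ge |\det T|=1$ together with $\|Te_1\|\le\|T\|$ to get $\|Te_1\|^{-1}\le\|Te_2\|\le\|T\|$, hence $g_n(\lambda)^{p/2}\le \pi^{-p/2}\|T_{\lambda,0,n}\|^{p}$, so $\int_a^b g_{n_k}^{p/2}\,d\lambda\le \pi^{-p/2} I_k$ is uniformly bounded. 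Since $p/2>1$, the sequence $(g_{n_k})$ is bounded in the reflexive space $L^{p/2}((a,b))$ and therefore has a weakly convergent subsequence with some limit $g_\infty\in L^{p/2}((a,b))$; by uniqueness of weak-$*$ limits of measures, $g_\infty\,d\lambda$ is exactly the absolutely continuous part of $a_1^2\mu^+_{\Ups_1}$ on $(a,b)$, and the pure-point part $\nu^+$ is supported off $(a,b)\setminus B_{1,\infty}$. This already gives: on $(a,b)\setminus B_{1,\infty}$ the measure $\mu^+_{\Ups_1}$ is purely absolutely continuous with density in $L^{p/2}$. To upgrade this to \emph{pure} a.c.\ spectrum of $\Hh^+|\VV^+$ on $(a,b)\setminus B_{1,\infty}$, invoke part (ii) of Theorem~\ref{th:H+msr}: every $\mu^+_{P_n\varphi}$ with $\varphi\in\VV_n$ is absolutely continuous with respect to $\mu^+_{\Ups_1}$ on $\RR\setminus B_{1,n}$, and since $\{P_n\varphi:\varphi\in\VV_n,\ n\ge1\}$ spans a dense subspace of $\VV^+$ (indeed of the cyclic subspace, which is all of $\VV^+$ after the reduction in Proposition~\ref{prop-eigf}), the spectral type of $\Hh^+|\VV^+$ on $(a,b)\setminus B_{1,\infty}$ is that of $\mu^+_{\Ups_1}$, hence purely a.c. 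The statement that $[a,b]$ lies in the spectrum follows because $\mu^+_{\Ups_1}$ has strictly positive a.c.\ density a.e.\ on $(a,b)$: indeed $g_n(\lambda)\ge\pi^{-1}\|T_{\lambda,0,n}\|^{-2}>0$ for every $n$, and a Fatou-type lower bound for the weak limit (or the positivity of the Radon–Nikodym derivative already built into \eqref{eq-mu1+}) keeps the limiting density positive a.e.

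Finally, for the two-sided operator $\Hh|\VV$ I would run the identical argument using parts (iii) and (iv) of Theorem~\ref{th:H+msr} in place of (i) and (ii), with $\mu_{\Ups_1}+a_1^2\mu_{\Phi_0}$ playing the role of $\mu^+_{\Ups_1}$; the relevant density is now a double limit of the expressions in \eqref{eq-mu1}, but each of those is again bounded below by products/quotients of $\|T_{\lambda,0,n}\|^{-2}$-type quantities, and the same reverse-H\"older bound turns $\liminf_n\int_a^b\|T_{\lambda,0,n}\|^p\,d\lambda<\infty$ into a uniform $L^{p/2}$ bound for the densities, hence weak $L^{p/2}$-compactness and pure absolute continuity on $(a,b)\setminus B_\infty$. (One small point to check: the double limit in (iii) can be taken along the chosen subsequence $n_k$ for the $n$-variable and \emph{any} sequence $m_k\to\infty$ for the $m$-variable, which is exactly what the last sentence of Theorem~\ref{th:H+msr}(iii) permits; the $m$-dependent factor $\|T_{\lambda,-m,0}\ovr{x}_m\|$ is a positive density bounded in $L^{p/2}$ by the same argument applied on the left half-line, but one does not even need a bound there—only that the full expression stays in $L^{p/2}$, which follows from \eqref{eq-mu1} exhibiting it as a genuine Radon–Nikodym derivative of a finite measure against $d\lambda$ together with the pointwise lower bound by $\pi^{-2}\|T_{\lambda,0,n_k}\|^{-2}\cdot(\text{bounded})$.) The eigenfunction statement in $(a,b)\cap B_\infty$ is immediate from Theorem~\ref{th-compact-eig}, since all $\ell^2$ eigenfunctions are accounted for there.

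The main obstacle I expect is the \emph{uniform lower bound} turning $\|T\|^p$-integrability into an $L^{p/2}$-bound for $g_n=\pi^{-1}\|T e_1\|^{-2}$: one must use that the transfer matrices are (up to a unimodular scalar) in $\SL(2,\RR)$ so that $\|T e_1\|\le\|T\|$ and $\|Te_1\|^{-1}\le\|T\|$ simultaneously — the second inequality is the non-obvious one and rests on $\|T e_1\|\,\|T e_2\|\ge 1$. Everything else (weak compactness in reflexive $L^{p/2}$, identification of weak limits of measures, transfer of the spectral type via cyclicity and the Radon–Nikodym bounds of Theorem~\ref{th:H+msr}) is routine, exactly as in \cite{LaSi}.
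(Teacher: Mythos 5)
Your proposal is correct and follows essentially the route the paper itself takes: the paper gives no separate proof of Theorem~\ref{th:ac-spec}, stating only that it follows from Theorem~\ref{th:H+msr} by the argument of \cite[Theorem~1.3]{LaSi}, and that is exactly what you reconstruct — the bound $\|T_{\lambda,0,n}v\|^{-2}\le\|T_{\lambda,0,n}\|^{2}$ for unit vectors $v$ (valid because $|\det T_{\lambda,0,n}|=1$), uniform boundedness of the densities in the reflexive space $L^{p/2}$, weak compactness to identify the absolutely continuous part, and the Radon--Nikodym bounds of parts (ii) and (iv) to propagate pure absolute continuity to all of $\VV^{+}$ resp.\ $\VV$. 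The one step you state imprecisely is the inclusion $[a,b]\subset\spec$: pointwise positivity of each $g_{n}$ together with weak convergence does not by itself keep the limiting density positive (the mass could concentrate and vanish on subintervals); the standard Last--Simon mechanism is the H\"older estimate $\int_{c}^{d}\|T_{\lambda,0,n}\|^{-2}\,d\lambda\;\ge\;(d-c)^{(p+2)/p}\bigl(\int_{c}^{d}\|T_{\lambda,0,n}\|^{p}\,d\lambda\bigr)^{-2/p}$, which gives a uniform positive lower bound for $\mu_{n_k}([c,d])$ on every subinterval $[c,d]\subset[a,b]$, combined with the portmanteau inequality $\mu([c,d])\ge\limsup_{k}\mu_{n_k}([c,d])$ for closed sets. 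With that substitution (and taking $\varphi\in\tilde\WW_1$ rather than $\VV_1$ in the $n=1$ fibre of $\VV^{+}$) the argument is complete.
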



\section{Application to random operators}

One may choose the data $\Xi_n:=(V_n,\;a_n,\;,\Phi_n,\;\Upsilon_n)$ as random variables depending on some element $\omega$ in an abstract probability space $\Omega$.
If we make sure to almost surely fulfill some unique self-adjointness criterion then we create
a random family of Hermitian one-channel operators $\Hh_\omega$ (and $\Hh^+_\omega$ in the half space), with respect to a fixed (non-random) partition $\bS$.

In this case, $\alpha_{z,n},\,\beta_{z,n},\,\gamma_{z,n},\,\delta_{z,n}$ are random variables. Moreover,  the sets $A_n$ and hence $B_{1,\infty},=B_{1,\infty}(\omega)$,  $B_\infty=B_\infty(\omega)$ are random as well as
the spaces $\VV=\VV_\omega$ and $\VV^+=\VV^+_\omega$.
If one chooses $(\Xi_n(\omega))_n$ to be stochastically independent (with respect to $n$), then for any fixed $z$, the vectors
transfer matrices will be stochastically independent in $n$ as well.

In order to make the statement a bit easier let us give the following definition. 

\begin{definition}\label{def-tn-fluc}
Let $(X_n)_{n\in \NN}$ be some sequence of real or complex valued random variables and let $(t_n)_{n\in\NN}$, $t_n>0$ be some sequence of positive numbers with $t_n\to 0$ for $n\to \infty$.
We say that {\it $X_n$ is well-balanced order $\Oo(t_n)$ close to $X\in \CC$ up to $K$ moments}, iff there are constants $C_k$, $k=0,1,\ldots K$ such that
$$
\EE(|X_n-X|^k)\,\leq\,C_k\,t_n^k\qtx{and} |\EE(X_n)-X|<C_0 t_n^2\;
$$
for all $n\in\ZZ_+$ and all $k=1,\ldots,K$.
 Here and below,  $\EE$ will always denote the expectation value of a random variable. Symbolically we will write
 $$
 X_n \xrightarrow[K]{\Oo(t_n)} X \qtx{and} X_n \xrightarrow{\Oo(t_n)} X
 $$
 if it holds for every $K\in\NN$.
 While $|X_n-X|$ can fluctuate on the order of $t_n$, the second, well-balanced condition states that the expectation $\EE(X_n)$ only fluctuates to the order of $t_n^2$ around $X$.
 \\
A family $(X_n(\lambda))_n$ depending on an additional parameter $\lambda \in I$ is {\it uniformly} well-balanced $\Oo(t_n)$ close to $X(\lambda)$ up to $K$ moments if one can choose the constants
$C_k(\lambda)$ as above to be uniformly bounded in $\lambda\in I$. 
Finally we make the same definition for sequences of vector valued random variables by considering each component individually.
\end{definition}

\begin{Theo}\label{th-random}
Let $\Hh_\omega$ be some random family of Hermitian one-channel operators (with respect to a fixed partition $\bS$) such that the transfer matrices $(T_{\lambda,n})_n$ 
are stochastically independent in $n$.
Let $I\subset \RR$ be some open set and for $\lambda\in I$ let\footnote{$\exp(i\RR) \SL(2,\RR)$ denotes the set of $2\times 2$ matrices which are given by the product of a unit element times a real matrix with determinant 1. For real $\lambda$ all transfer matrices $T_{\lambda,n}$ which exist are of this form.}   $T(\lambda) \in \exp(i\RR) \SL(2,\RR)$ be some continuous function and $(t_n)_n$ some sequence
such that
$$
\sum_{n=1}^\infty t_n^2 < \infty \qtx{and}
|\Tr(T(\lambda))|\,<\,2 \qtx{and} T_{\lambda,n} \xrightarrow[4]{\Oo(t_n)} T(\lambda) 
$$
uniformly on compact intervals $\lambda\in [a,b]\subset I$.\\
Then, the spectrum of $\Hh_\omega|\VV_\omega$ in $I\setminus B_\infty$ and the spectrum of $\Hh^+_\omega|\VV^+_\omega$
in $I\setminus B_{1,\infty}$ is almost surely 
purely absolutely continuous.
\end{Theo}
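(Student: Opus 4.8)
\noindent\emph{Proof strategy.} The plan is to reduce the statement to the deterministic $L^p$-criterion of Theorem~\ref{th:ac-spec}. Fix a compact interval $[a,b]\subset I$ and a number $p\in(2,4]$ (say $p=3$). Since the integrand is nonnegative, Tonelli's theorem gives $\EE\int_a^b\|T_{\lambda,0,n}\|^p\,d\lambda=\int_a^b\EE(\|T_{\lambda,0,n}\|^p)\,d\lambda$, so it suffices to establish the uniform moment bound
\begin{equation}\label{eq-goal-random}
\sup_{n\in\NN}\ \sup_{\lambda\in[a,b]}\,\EE\bigl(\|T_{\lambda,0,n}\|^p\bigr)\,<\,\infty\,.
\end{equation}
(For each fixed $\lambda$ the matrices $T_{\lambda,n}$ are almost surely defined, which is implicit in the hypothesis $T_{\lambda,n}\xrightarrow[4]{\Oo(t_n)}T(\lambda)$ since it forces $\EE(|T_{\lambda,n}-T(\lambda)|)<\infty$; and for a fixed realisation each $B_{1,n}$ is finite, hence irrelevant for the $\lambda$-integral.) Indeed, \eqref{eq-goal-random} gives $\sup_n\EE\int_a^b\|T_{\lambda,0,n}\|^p\,d\lambda<\infty$, so by Fatou's lemma $\EE\bigl(\liminf_n\int_a^b\|T_{\lambda,0,n}\|^p\,d\lambda\bigr)<\infty$, whence almost surely $\liminf_n\int_a^b\|T_{\lambda,0,n}\|^p\,d\lambda<\infty$; Theorem~\ref{th:ac-spec} then yields, almost surely, pure absolute continuity of $\Hh^+_\omega|\VV^+_\omega$ on $(a,b)\setminus B_{1,\infty}$ and of $\Hh_\omega|\VV_\omega$ on $(a,b)\setminus B_\infty$. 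Writing the open set $I$ as a countable union of such intervals $(a_j,b_j)$ with $[a_j,b_j]\subset I$ and intersecting the corresponding full-measure events concludes the proof, since a spectral measure that is absolutely continuous on each $(a_j,b_j)\setminus B_\infty$ is absolutely continuous on $I\setminus B_\infty$ (and likewise with $B_{1,\infty}$).

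To prove \eqref{eq-goal-random} I would first diagonalise the deterministic limit. Since $T(\lambda)\in\exp(i\RR)\SL(2,\RR)$ with $|\Tr(T(\lambda))|<2$, writing $T(\lambda)=e^{ir(\lambda)}T'(\lambda)$ with $T'(\lambda)\in\SL(2,\RR)$ makes $T'(\lambda)$ elliptic with simple eigenvalues $e^{\pm i\theta(\lambda)}$, so there is $M(\lambda)\in\GL(2,\CC)$, chosen continuous in $\lambda$, with $M(\lambda)^{-1}T(\lambda)M(\lambda)=e^{ir(\lambda)}U(\lambda)$ for a unitary matrix $U(\lambda)$; on the compact $[a,b]$ this gives $\kappa:=\sup_\lambda\bigl(\|M(\lambda)\|+\|M(\lambda)^{-1}\|\bigr)<\infty$ by continuity. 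Put $\widetilde T_{\lambda,n}:=M(\lambda)^{-1}T_{\lambda,n}M(\lambda)=e^{ir(\lambda)}U(\lambda)+E_{\lambda,n}$ with $E_{\lambda,n}:=M(\lambda)^{-1}(T_{\lambda,n}-T(\lambda))M(\lambda)$. Since for $2\times2$ matrices the operator norm is comparable to the largest entry and conjugation by the bounded $M(\lambda)^{\pm1}$ preserves the well-balanced property, the hypothesis yields constants uniform in $\lambda\in[a,b]$ with
\begin{equation}\label{eq-E-bounds}
\EE\bigl(\|E_{\lambda,n}\|^k\bigr)\,\leq\,C\,t_n^k\quad(k=1,2,3,4)\,,\qquad\bigl\|\EE(E_{\lambda,n})\bigr\|\,\leq\,C\,t_n^2\,.
\end{equation}
Because $\|T_{\lambda,0,n}\|\leq\kappa^2\|\widetilde T_{\lambda,0,n}\|$ with $\widetilde T_{\lambda,0,n}=\widetilde T_{\lambda,n}\cdots\widetilde T_{\lambda,1}$, and $\|\widetilde T_{\lambda,0,n}\|\leq\|\widetilde T_{\lambda,0,n}e_1\|+\|\widetilde T_{\lambda,0,n}e_2\|$, it is enough to bound $\EE\bigl(\|\widetilde T_{\lambda,0,n}v\|^p\bigr)$ for a fixed unit vector $v$.

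The core is a Prüfer-type recursion combined with iterated conditioning. Set $v_n:=\widetilde T_{\lambda,0,n}v$, $R_n:=\|v_n\|$ and, where $R_{n-1}>0$, $\widehat v_{n-1}:=v_{n-1}/R_{n-1}$. As $U(\lambda)$ is unitary and $|e^{ir(\lambda)}|=1$, expanding $\|\widetilde T_{\lambda,n}v_{n-1}\|^2$ gives $R_n^2=R_{n-1}^2(1+b_n)$ with $1+b_n=\|\widetilde T_{\lambda,n}\widehat v_{n-1}\|^2\geq0$ and $|b_n|\leq2\|E_{\lambda,n}\|+\|E_{\lambda,n}\|^2$. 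Let $\Ff_{n-1}$ be the $\sigma$-algebra generated by $T_{\lambda,1},\dots,T_{\lambda,n-1}$; by the assumed stochastic independence in $n$, $E_{\lambda,n}$ is independent of $\Ff_{n-1}$ while $\widehat v_{n-1}$ is $\Ff_{n-1}$-measurable, so
\[
\EE\bigl(b_n\mid\Ff_{n-1}\bigr)\,=\,2\,\re\bigl\langle e^{ir}U\widehat v_{n-1},\,\EE(E_{\lambda,n})\,\widehat v_{n-1}\bigr\rangle\,+\,\EE\bigl(\|E_{\lambda,n}\widehat v_{n-1}\|^2\mid\Ff_{n-1}\bigr)\,,
\]
and hence, by \eqref{eq-E-bounds} and Cauchy--Schwarz, $\bigl|\EE(b_n\mid\Ff_{n-1})\bigr|\leq2\|\EE(E_{\lambda,n})\|+\EE(\|E_{\lambda,n}\|^2)\leq3C\,t_n^2$; similarly $\EE(|b_n|^2\mid\Ff_{n-1})\leq C't_n^2$. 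To bound $\EE\bigl((1+b_n)^{p/2}\mid\Ff_{n-1}\bigr)$ I would split on $\{\|E_{\lambda,n}\|\leq\tfrac14\}$ and its complement: on the first event $|b_n|<1$ and $(1+b_n)^{p/2}=1+\tfrac p2 b_n+O(b_n^2)$ with remainder $O(|b_n|^3)=O(\|E_{\lambda,n}\|^3)$, while on the second $(1+b_n)^{p/2}\leq(1+\|E_{\lambda,n}\|)^p\leq C''\|E_{\lambda,n}\|^4$ because $p\leq4$ and $\|E_{\lambda,n}\|>\tfrac14$. Taking conditional expectations and inserting the estimates above gives $\EE\bigl((1+b_n)^{p/2}\mid\Ff_{n-1}\bigr)\leq1+C_\star t_n^2$ with $C_\star$ uniform in $\lambda\in[a,b]$; iterating,
\[
\EE\bigl(R_n^p\bigr)\,=\,\EE\Bigl(R_{n-1}^p\,\EE\bigl((1+b_n)^{p/2}\mid\Ff_{n-1}\bigr)\Bigr)\,\leq\,(1+C_\star t_n^2)\,\EE\bigl(R_{n-1}^p\bigr)\,\leq\,\|v\|^p\prod_{k=1}^n(1+C_\star t_k^2)\,,
\]
which is bounded uniformly in $n$ and $\lambda\in[a,b]$ since $\sum_k t_k^2<\infty$ (the finitely many indices with $t_k\geq1$ contribute a fixed finite factor, estimated crudely from \eqref{eq-E-bounds} via Jensen's inequality). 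Applying this with $v=e_1$ and $v=e_2$ gives \eqref{eq-goal-random}.

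The main obstacle is the moment estimate in the last paragraph, i.e.\ showing that the \emph{expected} logarithmic growth of $\|T_{\lambda,0,n}\|$ stays summable. The naive submultiplicative bound $\|\widetilde T_{\lambda,n}w\|\leq(1+\|E_{\lambda,n}\|)\|w\|$ only produces a drift of order $t_n$, which need not be summable; the improvement to order $t_n^2$ comes from conjugating the deterministic part to a \emph{unitary} (so that $b_n$ is, to leading order, a linear functional of $E_{\lambda,n}$) and then using the well-balanced half of the hypothesis, $\|\EE(E_{\lambda,n})\|=O(t_n^2)$, to annihilate the conditional mean of this linear term. The second delicate point is that only four moments of $T_{\lambda,n}-T(\lambda)$ are available, which is exactly enough to control the rare event where $E_{\lambda,n}$ is not small; this dictates the good/bad split and the restriction $p\leq4$. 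Uniformity in $\lambda$ is not an obstacle: it is inherited from the uniform well-balancedness on $[a,b]$ and from continuity of $T(\lambda)$ with $|\Tr(T(\lambda))|<2$, which keeps $M(\lambda)^{\pm1}$ uniformly bounded.
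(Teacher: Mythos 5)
Your proposal is correct and follows essentially the same route as the paper: the paper's Lemma~5.1 likewise conjugates the elliptic limit $T(\lambda)$ to a unitary, uses independence together with the well-balanced condition $\|\EE(T_{\lambda,n}-T(\lambda))\|=\Oo(t_n^2)$ to reduce the per-step growth of $\EE(\|T_{\lambda,0,n}v\|^4)$ to a factor $1+\Oo(t_n^2)$, and then concludes via Fubini, Fatou and Theorem~\ref{th:ac-spec} on a countable cover of $I$. The only cosmetic difference is that the paper fixes $p=4$ and expands $(1+b_n)^2$ as an explicit polynomial, which avoids your good/bad-event split for general $p\in(2,4]$.
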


Now as an application we will consider the Anderson model on certain graphs which generalize the antitrees as considered in \cite{Sa3}.
For this we will focus on 'half-space' cases, this means that the index $n$ will now only go through $\NN$ and the random operator $\Hh_\omega$ corresponds to 
$\Hh^+$ above.

\subsection{Partial antitrees}

A partial antitree is a graph $\GG$ of the following structure: As a point set, 
the graph is the disjoint union of finite sets,
$$
\GG=\bigsqcup_{n=1}^\infty R_n, \qtx{where} r_n:=\#(R_n)<\infty
$$
denotes the number of vertices in $R_n$ and we assume that
$$
r_{3n-2}> 0, \quad r_{3n-1} \geq 0, \quad r_{3n}>0 \qtx{for} n\in\NN \;.
$$
So in particular we allow that some of the sets $R_{3n-1}$ for some $n\in \NN$ are empty, but the sets $R_{3n}$ and $R_{3n+1}$ are never empty.
Between the points $R_{3n}$ and $R_{3n+1}$ we draw every edge, i.e. every vertex of $R_{3n}$ is connected to each vertex of $R_{3n+1}$ and
the edge weights will be re-normalized. These are the antitree-type connections as in \cite{Sa3}. 
In order to describe them later let us define the unit vector
$$
\varphi_n:=\frac{1}{\sqrt{r_n}} \pmat{1\\\vdots \\1}\,\in\,\ell^2(R_n)\cong \CC^{r_n}\,.
$$
For the connections within a group 
$$S_n:=R_{3n-2}\sqcup R_{3n-1}\sqcup R_{3n},\quad{n\geq 1} $$
we just want $S_n$ to be a connected graph for a most general partial antitree. 

The partition $\bS=(S_n)_{n\in\NN}$ is a quasi-spherical partition. The $\#(S_n) \times \#(S_n)$ matrix $A_n$ shall describe the edges and weights of connections within $S_n$, i.e. $(S_n)_{ij}=0$ if there is no edge between $i,j \in S_n$ and $(S_n)_{ij}=(S_n)_{ji} \in \RR$ denotes the weight of an edge between $i,j \in S_n$ if there is one.
Then letting 
 $$
\Upsilon_n=\pmat{\varphi_{3n-2} \\ \nul \\ \nul}\,\quad \Phi_n=\pmat{\nul \\ \nul \\ \varphi_{3n}},\quad a_n=-1 \qtx{and} V_n=A_n\,
$$
the adjacency operator $\Aa$ of the partial antitree corresponds to $\Hh^+$ as in \eqref{eq-Hh} and \eqref{eq-Hh+}\footnote{One may note that the choice of $\Upsilon_1,\,a_1$ does not change the operator $\Hh^+$}.
We call this part of the adjacency operator $A_n$ as for the Anderson model later we will change to $V_n$ and add an additional random diagonal potential.

\begin{remark}
Note that if we set $R_{3n-1}=\emptyset$ for all $n$, hence, $S_n=R_{3n-2}\sqcup R_{3n}$, and connect each vertex of $R_{3n-2}$ with each one of $R_{3n}$ and normalize the weights by $1/\sqrt{r_{3n-2} r_{3n}}$
then $(\GG,\Aa)$ corresponds to an antitree with normalized edge weights as defined in \cite{Sa3} and 
$A_n=\smat{\nul & \varphi_{3n-2}\varphi_{3n}^* \\ \varphi_{3n} \varphi_{3n-2}^* & \nul}$.
\end{remark}

Here we will work with some examples with additional homogeneity structures and consider the following partial antitrees:

\vspace{.2cm}

\noindent {\bf Example 1:} The stretched antitree $\Sb_\bfs$ associated to the sequence $\bfs=(s_n)_n, s_n\in\NN$:
Here, $r_{3n-1}=0$, i.e. $R_{3n-1}=\emptyset$ for all $n\in \NN$, and  $r_{3n-2}=r_{3n}=s_n$ and
 $A_n=\smat{\nul & \one \\ \one & \nul}$ where each entry is an $s_n\times s_n$ block.
See Figure~\ref{fig1}
\vspace{.2cm}

\noindent {\bf Example 2:} A partial antitree with homogeneous connecting modes: 
Let $k_1, k_2, k_3$ be positive integers\footnote{we allow $k_2=0$ iff $r_{3n-1}=0$ for all $n$, otherwise $k_2>0$, we always will have $k_1>0,\,k_3>0$} and $k=k_1+k_2+k_3$.
We assume that $r_n=\#(R_n)$ is divisible by $k_i$ for $n\equiv j \mod 3,\;\; (j=1,2,3)$ and we split
$R_n=\bigsqcup_{i=1}^{k_j} R_{n,i}$ into sets of equal size,  i.e. $\#(R_{n,i})=r_n / k_j$.
We either assume $r_{3n-1}>0$ for all $n$ in which case $k_2>0$, or $r_{3n-1}=0$ for $n$ in which case we set $k_2=0$.
Then let $O\in {\rm O(k)}$ be an orthogonal\footnote{i.e. $O^*=O^\top=O^{-1}$}, 
$k\times k$ matrix and let $\bfa=\diag(a_1,\ldots,a_k)$ be a real diagonal matrix.
Then we let $O_n$ be a $(r_{3n-2}+r_{3n-1}+r_{3n}) \times k=\#(S_n) \times k$ matrix as operator from $\CC^k$ to $\ell^2(S_n)$ defined by 
$$
(O_n)_{im}=\begin{cases} \sqrt{k_j/r_{3n-2}} \;O_{\ell m}, &\qtx{if} i\in R_{3n-2,\ell},\quad\ell=1,\ldots,k_1; \\ 
\sqrt{k_j/r_{3n-1}} \;O_{\ell+k_1, m}, &\qtx{if} i\in R_{3n-1,\ell},\quad\ell=1,\ldots,k_2; \\
\sqrt{k_j/r_{3n}} \;O_{\ell+k_1+k_2, m}, &\qtx{if} i\in R_{3n-2,\ell},\quad\ell=1,\ldots,k_3;
\end{cases}
$$
Note that $O_n: \CC^k\to \ell^2(S_n)$ is an isometry, $O_n^* O_n=\one$.
Then we let $A_n:=O_n \bfa O_n^*$.
We denote this partial antitree by $\GG=\Ab_\bfr(\vec{k},O,\bfa)$ or short $\Ab_\bfr$, where $\bfr=(r_n)_n$ symbolizes the sequence $r_n$ and
$\vec{k}=(k_1,k_2,k_3)$ the vector $k$.
\vspace{.2cm}

The vectors in $O$ represent shapes of spherical waves. For knowing the graph structure and weighs it is sufficient to know $O\bfa O^*$.
Some special example below are the graphs $\widehat \Ab_\bfr$ with $k_1=k_2=k_3=2$ and 
$O\bfa O^*=\smat{\nul & \one & \nul \\ \one & \nul & \one \\ \nul & \one & \nul}$ where each entry is a $2\times 2$ matrix and $O\bfa O^*$ is a $6\times 6$ matrix. Some example is given in Figure~\ref{fig2}.

\begin{figure}[ht]
\begin{center}
\begin{tabular}{cc}
\begin{minipage}{6.4cm}
\begin{center}
\includegraphics[width=5cm]{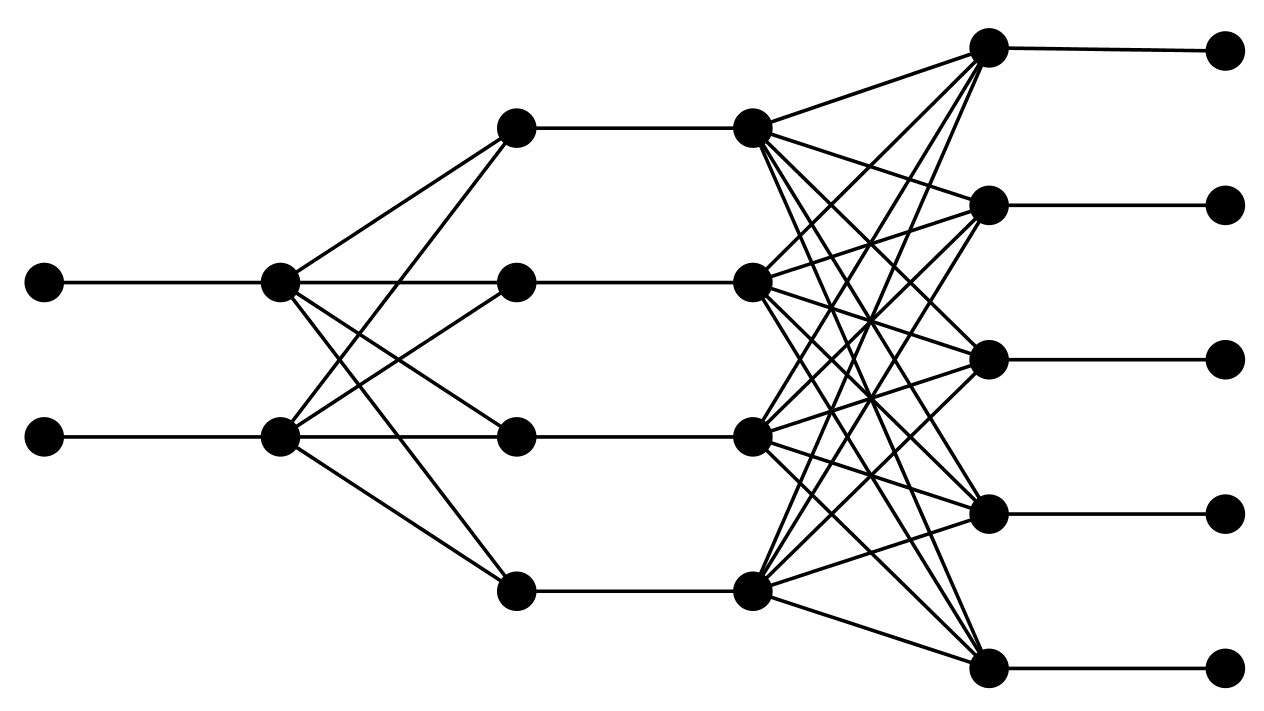} 
\caption{example $\Sb_\bfs$ \label{fig1} }
\end{center}
\end{minipage}
& \begin{minipage}{7.9 cm}
\begin{center}
\includegraphics[width=7.5cm]{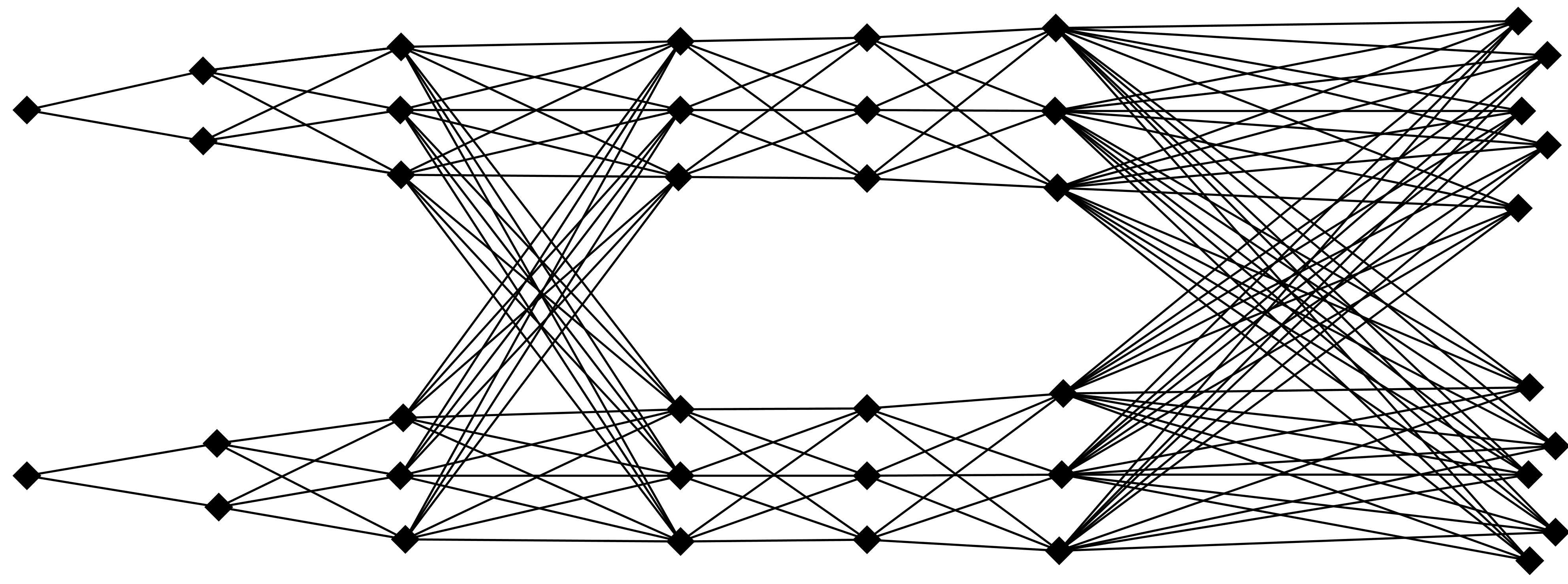} 
\caption{example for $\widehat\Ab_\bfr$,  with $R_{n,1}$ above and $R_{n,2}$ below \label{fig2}}
\end{center}
\end{minipage} 
\end{tabular} 
\end{center}
\end{figure}

\subsection{Anderson type models}

The Anderson type model on these graph structures is then an operator given by the sum of the adjacency operator and a random independent identically distributed potential.
The distribution of a single potential will be called $\nu$.

\begin{assumption}
The single site distribution $\nu$ is a probability distribution on $\RR$ which is compactly supported.
Moreover, by $[\sigma_-,\sigma_+]$ we denote the convex hull of the support of $\nu$ so that in particular
$$
\supp \nu \,\subset\, [\sigma_-, \sigma_+]\;.
$$
\end{assumption}

Then, to construct the random potential we let $(\Omega,\Aaa,\PP)=([\sigma_-,\sigma_+],\BBb,\nu)^{\otimes \GG}$ be the product space with product measure ($\BBb$ denotes the Borel sigma-algebra), where either $\GG=\Sb_{\bfs}$ in the
stretched antitree example or $\GG=\Ab_\bfr$. Then the random potential $\Vv_\omega$ is a multiplication operator on $\ell^2(\GG)$ given by
\begin{equation}\label{eq-Vv_om}
\Vv_\omega \psi(x)\,=\,\omega_x  \psi(x)\,\qtx{where}  \omega=(\omega_x)_{x\in \GG} \in [\sigma_-,\sigma_+]^\GG\;.
\end{equation}
The Anderson model is then given by the random operator
\begin{equation}\label{eq-Hh_om}
\Hh_\omega\,=\,\Aa\,+\, \Vv_\omega\;.
\end{equation}

Now we will describe the region in which we find absolutely continuous spectrum for $\Hh_\omega$ under a certain growth condition.
For Example $1$ we first define
\begin{equation}
I_-:=\{\lambda\in\RR: |\lambda-x|<1 \:\text{for all}\; x\in\supp \nu\}\,,\;I_+:=\{\lambda\in\RR: |\lambda-x|>1\; \text{for all}\; x\in\supp \nu\}
\end{equation}
and for $\lambda \in I_-\cup I_+$ we can then define
\begin{align}
\alpha^\Sb_\lambda\,&=\, \int \frac{x'-\lambda}{(x'-\lambda)(x-\lambda)-1} d\nu(x)\,d\nu(x')\\
\beta^\Sb_{\lambda}\,&=\, \int \frac{1}{(x'-\lambda)(x-\lambda)-1} d\nu(x) d\nu(x')\;
\end{align}
and let
\begin{equation}
I_{\Sb,\nu}:=\{\lambda\in I_+\cup I_-\,:\, \left|\beta^\Sb_{\lambda}+[1-(\alpha^\Sb_\lambda)^2]/\beta^\Sb_\lambda\right|<2 \}\;.
\end{equation}

For Example 2 let us first define the
following unit vectors $\Upsilon,\,\Phi\in\CC^k$ by
\begin{equation}
\Upsilon_j=\begin{cases} 1/\sqrt{k_1} & \qtx{if} j\leq k_1 \\
0 & \qtx{if} k_1 < j \leq k
\end{cases}
\;, \quad
\Phi_j=\begin{cases} 0 & \qtx{if} j\leq k_1+k_2 \\
1/\sqrt{k_3} & \qtx{if} k_1+k_2 < j \leq k
\end{cases}\,.
\end{equation}
Moreover, we let $\DD$ be the set of diagonal $k\times k$ matrices with  $\sigma_-\leq D \leq \sigma_+$
and take $I_0$ to be the set of $\lambda\not\in[\sigma_-,\sigma_+]$ where $D+O\bfa O^*$ is invertible and
$\Upsilon^*(D+O\bfa O^*)^{-1}\Phi\neq 0$ for all $D\in\DD$, i.e.
$$
I_0\,=\,\{\lambda\in \RR\,:\,\lambda\not\in[\sigma_-,\sigma_+]\;\text{and}\; \Upsilon^*(D-\lambda+O\bfa O^*)^{-1}\Phi\neq 0,\;\forall D \in \DD \; \}\;.
$$
Now, for $\lambda\not\in[\sigma_-,\sigma_+]$ let us define the harmonic average
\begin{equation}
h_\lambda\,=\,\left(\int (x-\lambda)^{-1}\,d\nu(x) \right)^{-1}\;, 
\label{eq-def-h}
\end{equation}
and for $\lambda\in I_0$ let
\begin{equation}
\pmat{\alpha_{\lambda}^\Ab & \beta_{\lambda}^\Ab \\ \beta_{\lambda}^\Ab & \delta_\lambda^\Ab}\,=\,
\pmat{\Upsilon^* \\ \Phi^*} (h_\lambda \one+O\bfa O^*)^{-1} \pmat{\Upsilon & \Phi} \;.
\end{equation}
Finally, define
\begin{equation}
I_{\Ab,\nu}\,=\,\{\lambda\in I_0\,:\, \left|\beta^\Ab_\lambda+[1-\alpha^\Ab_\lambda \delta^\Ab_\lambda]/\beta^\Ab_\lambda \right|<2\}\,.
\end{equation}

\begin{Theo}\label{theo-Anderson}
Let $\Hh_\omega$ be the Anderson model on $\ell^2(\GG)$ fulfilling the assumptions above with either $\GG=\Sb_\bfs$ or $\GG=\Ab_\bfr$
as described in Examples 1 and 2 as above.
\begin{enumerate}[{\rm (i)}]
\item For Example 1, assume that $\sum_{n=1}^\infty s_n^{-1}<\infty$, then $I_{\Sb,\nu}\subset \spec(\Hh_\omega)$ almost surely and the spectrum of $\Hh_\omega$ in $I_{\Sb,\nu}$ is almost surely purely absolutely continuous. 
\item For Example 2, in case $r_{3n-1}=0$ for all $n$, assume that $\sum_{n=1}^\infty (\min(r_{3n-2}, r_{3n}))^{-1}<\infty$, and for the case $r_n>0$ for all $n$ let $\sum_{n=1}^\infty (\min(r_{3n-2}, r_{3n-1},r_{3n}))^{-1}<\infty$.\\
In these cases,
$I_{\Ab,\nu}\subset \spec(\Hh_\omega)$ almost surely and the spectrum of $\Hh_\omega$ is almost surely purely absolutely continuous in $I_{\Ab,\nu}$  

\end{enumerate}

\end{Theo}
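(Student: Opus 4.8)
The plan is to verify that each of the two Anderson models fits the hypotheses of Theorem~\ref{th-random} and then read off the conclusion. The first step is to record the one-channel structure. For Example~1 we take $S_n = R_{3n-2}\sqcup R_{3n}$, $a_n=-1$, $\Upsilon_n = \smat{\varphi_{3n-2}\\ \nul}$, $\Phi_n = \smat{\nul\\ \varphi_{3n}}$, and $V_n = A_n + \Vv_\omega|_{S_n} = \smat{\nul & \one\\ \one & \nul} + \diag(\omega)$; for Example~2 we take $S_n = R_{3n-2}\sqcup R_{3n-1}\sqcup R_{3n}$ with $\Upsilon_n,\Phi_n$ as in the text and $V_n = O_n\bfa O_n^* + \diag(\omega)$. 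In both cases the transfer matrices $T_{\lambda,n}$ are i.i.d.\ in $n$ because the potential entries on distinct shells are independent and the deterministic data $(A_n,a_n,\Phi_n,\Upsilon_n)$ depend on $n$ only through the (non-random) block sizes, which enter $T_{\lambda,n}$ only via the rescaled vectors $\varphi$. One must first check the self-adjointness hypothesis: since $a_n=-1$, $\sum|a_n|^{-1}=\infty$, so Proposition~\ref{prop-unique-sa} gives unique self-adjointness of $\Hh^+_\omega$, and the channel is non-broken on $I_{\Sb,\nu}$ (resp.\ $I_{\Ab,\nu}$) precisely because those sets are contained in $I_-\cup I_+$ (resp.\ $I_0$), where $\beta_{\lambda,n}\neq 0$ almost surely.

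The second step, which is the technical heart, is to compute the limiting transfer matrix $T(\lambda)$ and the fluctuation scale $t_n$. For a shell of ``size'' $s_n$ the key observation is a law-of-large-numbers effect: quantities like $\Upsilon_n^*(V_n-\lambda)^{-1}\Upsilon_n$ are averages of $s_n$ weakly-dependent terms built from the i.i.d.\ potential on that shell, so they concentrate around their expectations at rate $t_n \asymp s_n^{-1/2}$ (resp.\ $(\min(r_{3n-2},r_{3n-1},r_{3n}))^{-1/2}$). Here I would exploit the explicit structure: in Example~1, $(V_n-\lambda)^{-1}$ can be written via the resolvent of the $2\times 2$ block operator with diagonal perturbation, and the relevant scalars $\alpha_{\lambda,n},\beta_{\lambda,n},\gamma_{\lambda,n},\delta_{\lambda,n}$ are of the form $\frac1{s_n}\sum$ of terms $\frac{x'-\lambda}{(x'-\lambda)(x-\lambda)-1}$ over the shell's potential values, whose expectation is exactly $\alpha^\Sb_\lambda,\beta^\Sb_\lambda$. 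One then verifies the \emph{well-balanced} condition of Definition~\ref{def-tn-fluc}: the bias of the empirical average is $\Oo(t_n^2)$ because the summands are (conditionally) independent, so only diagonal terms survive in $\EE(X_n)-X$. Similarly for Example~2, using $V_n = O_n\bfa O_n^* + \diag(\omega)$ and the Schur/Krein-type formula to reduce $(V_n-\lambda)^{-1}$ along the $k$-dimensional range of $O_n$ to $(h_{\lambda,n}\one + O\bfa O^*)^{-1}$ where $h_{\lambda,n}$ is an empirical harmonic mean converging to $h_\lambda$. This yields $T_{\lambda,n} \xrightarrow[4]{\Oo(t_n)} T(\lambda)$ uniformly on compacts, with $T(\lambda)$ the $\SL(2,\RR)$-type matrix whose trace is (after the $\exp(i\RR)$ normalization) $\beta^\Sb_\lambda + (1-(\alpha^\Sb_\lambda)^2)/\beta^\Sb_\lambda$ in Example~1 and $\beta^\Ab_\lambda + (1-\alpha^\Ab_\lambda\delta^\Ab_\lambda)/\beta^\Ab_\lambda$ in Example~2 (read off from the formula for $T_{z,n}$ in \eqref{eq-transfer} with $a_n=-1$ and $\gamma=\bar\beta$ for real $\lambda$).

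The third step is bookkeeping. The growth hypothesis $\sum s_n^{-1}<\infty$ (resp.\ $\sum(\min r)^{-1}<\infty$) is exactly $\sum t_n^2<\infty$, and the ellipticity condition $|\Tr T(\lambda)|<2$ is exactly the defining inequality of $I_{\Sb,\nu}$ (resp.\ $I_{\Ab,\nu}$). Thus Theorem~\ref{th-random} applies on $I=I_{\Sb,\nu}$ (resp.\ $I_{\Ab,\nu}$), giving that the spectrum of $\Hh^+_\omega|\VV^+_\omega$ in $I\setminus B_{1,\infty}$ is a.s.\ purely absolutely continuous. To upgrade this to a statement about $\Hh_\omega$ on all of $\ell^2(\GG)$: by Proposition~\ref{prop-eigf} the complement $\VV^{+\perp}_\omega$ contributes only eigenvalues of the finite matrices $V_n|\VV_n^\perp$, and since $I\subset I_-\cup I_+$ (resp.\ $I_0$) lies outside $[\sigma_-,\sigma_+]$ while $\spec(V_n|\VV_n^\perp)$ — being eigenvalues of $A_n + \diag(\omega)$ restricted to a subspace on which $A_n$ acts, with $\|\omega\|_\infty\le\max(|\sigma_-|,|\sigma_+|)$ — can be controlled; more directly, $\VV_n^\perp \subset \ker(\Upsilon_n^*)\cap\ker(\Phi_n^*)$ consists of vectors on which $\Hh_\omega$ acts as $V_n$ alone, and for the partial antitree the matrices $A_n$ are small rank perturbations whose $\VV_n^\perp$-eigenvalues one argues do not accumulate in $I$, or are handled as in \cite{Sa3}. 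Finally one must also rule out the finitely-supported eigenfunctions of Theorem~\ref{th-compact-eig}: almost surely $B_{1,\infty}\cap I = \emptyset$ on $I\subset I_-\cup I_+$ (resp.\ $I_0$) because $\beta_{\lambda,n}\gamma_{\lambda,n}\ne0$ and $V_n|\VV_n$ has no eigenvalue there, the latter again because $I$ avoids $\spec(V_n)\subset[\sigma_--\|A_n\|,\sigma_++\|A_n\|]$ after suitably choosing the normalizations — for $\Sb_\bfs$, $\|A_n\|=1$, so $I_\pm$ already does this. Assembling these gives $I_{\Sb,\nu}\subset\spec(\Hh_\omega)$ (from the first clause of Theorem~\ref{th:ac-spec}/Theorem~\ref{th-random}) with purely a.c.\ spectrum there, and likewise for $I_{\Ab,\nu}$.

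The main obstacle is the second step: verifying the \emph{well-balanced} $\Oo(t_n)$ estimate up to $4$ moments \emph{uniformly in $\lambda$} on compact subintervals. The moment bounds $\EE|X_n-X|^k\le C_k t_n^k$ are a standard concentration computation for empirical averages of bounded i.i.d.\ variables, but the sharp bias bound $|\EE(X_n)-X|<C_0 t_n^2$ requires genuinely using independence within a shell (so that cross terms vanish) together with a careful expansion of the various resolvent scalars — in Example~2 this is more delicate because one must first pass through the $k$-dimensional reduction and show the empirical harmonic mean $h_{\lambda,n}$ itself satisfies a well-balanced estimate, then propagate that through the smooth map $h\mapsto (h\one + O\bfa O^*)^{-1}$, which requires $\lambda\in I_0$ to keep the map bounded with bounded derivatives uniformly.
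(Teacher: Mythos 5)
Your plan reproduces the paper's proof essentially step for step: the paper likewise writes $\alpha_{\lambda,n},\beta_{\lambda,n},\delta_{\lambda,n}$ as empirical averages over the shell (via Schur complements in Example 1, and via the $k\times k$ reduction through the empirical harmonic mean $\hat\Vv_{\lambda,n}$ in Example 2), proves well-balanced $\Oo(t_n)$-closeness through two lemmas (concentration of averages of i.i.d.\ variables with the $\Oo(t_n^2)$ bias, and stability of well-balancedness under sums, inverses and smooth maps), and then feeds the resulting $|\Tr T(\lambda)|<2$ condition into Theorem~\ref{th-random}. One small correction to your step three: $I_-$ does \emph{not} lie outside $[\sigma_-,\sigma_+]$, so the right reason that $\spec(V_n)$ and the zeros of $\beta_{\lambda,n}$ avoid $I_{\Sb,\nu}$ is that the block determinants $(\omega-\lambda)(\omega'-\lambda)-1$ have a uniform sign on $I_\pm$, not a norm bound on $A_n$.
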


\begin{rem*}
\begin{enumerate}[{\rm (i)}]
\item In all cases we would like to note that the conditions for obtaining absolutely continuous spectra are fulfilled if $r_n\sim n^d$ (i.e. $s_n\sim n^d$ in Example 1) and $d>2$ in which case the underlying graphs have a $d$-dimensional growth rate for dimension $d>2$.
\item Let us note that for small $\sigma_\pm$ the set $I_{\Sb,\nu}$ is an order $\sigma=\max\{|\sigma_-|,|\sigma_+|\}$ perturbation\footnote{an order $\sigma$ perturbation of a union of $m$ intervals is a union of $m$ similar intervals where the boundary values of each interval change of order $\sigma$.} of 
$I=(-2,-1) \cup (-1,1) \cup (1,2)$ and not empty for small $\sigma$. 
\item For the example $\Ab_\bfr=\widehat \Ab_\bfr$ as above we find  
 $$I_{\Ab,\nu}=\{\lambda\,:\,|h_\lambda|<2,\,|h_\lambda|\neq 1\} \cap \big\{\lambda \in \RR\,:\, \forall \varepsilon\in[\sigma_-,\sigma_+]\,:\,\varepsilon-\lambda \not\in \{-\sqrt{2},0,\sqrt{2} \}\,\big\}$$  
 which for small $\sigma_{\pm}$ is an order $\sigma$ perturbation of
 $$\hat I =(-2,-\sqrt{2})\cup(-\sqrt{2},-1)\cup(-1,0)\cup(0,1)\cup(1,\sqrt{2})
 \cup \sqrt{2}, 2)\;.$$
\end{enumerate}
\end{rem*}

\section{Proof of Proposition~\ref{prop-def-T}}

Here we prove Proposition~\ref{prop-def-T}. Recall that $\WW_n$ and $\tilde \WW_n$ are the cyclic spaces with respect to $V_n$ generated by $\Ups_n$ and $\Phi_n$, respectively. Moreover, $\VV_n=\WW_n+\tilde \WW_n$.

As each eigenvalue in $\WW_n,\,\tilde \WW_n$ has maximum multiplicity $1$, $\lambda\in\spec(V_n|\WW_n\cap \tilde \WW_n)$ means that
$\WW_n$ and $\tilde \WW_n$ contain the same normalized eigenvector $\zeta$ and the multiplicity of $\lambda$ in $\VV_n$ is also just one.
Moreover, let $a:=\zeta^*\Ups_n\neq 0$ and $b:=\zeta^*\Phi_n \neq 0$.
We have to show that the formal appearing order $(\lambda-z)^{-1}$ terms in the definition of $T_{z,n}$ cancel in the limit $z\to\lambda$ and $T_{\lambda,n}$ is well-defined.
With meromorphic functions $\hat \alpha_z,\,\hat \beta_z\,,\hat \gamma_z$ and $\hat \delta_z$ all being equal to $1$ at $z=\lambda$ we find
$$
\alpha_{z,n}=\frac{|a|^2}{\lambda-z} \hat \alpha_z\;,\; \beta_{z,n}=\frac{\bar a b}{\lambda-z} \hat \beta_z\;,\;
\gamma_{z,n}=\frac{\bar b a}{\lambda-z} \hat \gamma_{z}\;,\;\delta_{z,n}=\frac{|b|^2}{\lambda-z} \hat \delta_z\;,\;
$$
which leads to
$$
T_{z,n}\,=\,\pmat{a_n^{-1} (\lambda-z) \hat \beta_z^{-1} \frac{1}{\bar a b} & -a_n\,\frac{a}{b}\, \hat \alpha_z \hat \beta_z^{-1} \\ a_n^{-1} \,\frac{\bar b}{\bar a} \hat \delta_z \hat \beta_z^{-1} & a_n\,\frac{a\bar b}{\lambda-z} \left[\hat \gamma_z -  \hat \delta_z \hat \beta_z^{-1} \hat \alpha_z \right] }\;.
$$
Noting that $\hat \gamma_z -  \hat \delta_z \hat \beta_z^{-1} \hat \alpha_z =0$ at $z=\lambda$ and as it is meromorphic, we find the limit
$$
T_{\lambda,n}\,=\,\pmat{0 & -a_n\;ab^{-1} \\ a_n^{-1}\;\bar b \bar a^{-1} & a_n\,a\bar b\;\frac{d}{dz}\left[ \hat \delta_\lambda \hat \beta_\lambda^{-1} \hat \alpha_\lambda \,-\,\hat \gamma_\lambda  \right]}
$$
where in the lower right entry we take the derivative of $z\mapsto \left[\hat \delta_z \hat \beta_z^{-1} \hat \alpha_z \,-\,\hat \gamma_z \right]$ at $z=\lambda$.
We also have 
$$
\Ups_{z,n}\,=\,\frac{a}{\lambda-z}\zeta\,+\,\hat \Ups_z\;,\quad
\Phi_{z,n}\,=\,\frac{b}{\lambda-z}\zeta\,+\,\hat \Phi_z\;,\quad
$$
where $\hat \Ups_z$ and $\hat \Phi_z$ are holomorphic at $z=\lambda$.
We therefore obtain
$$
a_n \tilde x_{n-1} \Ups_{z,n}\,=\,a_n \tilde x_{n-1} \left(\frac{a}{\lambda-z} \zeta\,+\,\hat\Ups_{z} \right)
$$
and 
$$
 a_{n+1} x_{n+1} \Phi_{z,n}\,=\,\left( \frac{(\lambda-z)\,x_n }{\hat \beta_z\,\bar a\,b} \,-\,a_n \tilde x_{n-1} \hat \alpha_z \hat \beta_z^{-1} \frac{a}{b}\right)\,\left( \frac{b}{\lambda-z}\zeta\,+\,\hat \Phi_z\right)\;.
$$
Adding these terms leads to
$$
\Psi_{z,n}^{x}\,=\, a_n \tilde x_{n-1}\left(\frac{a-a \hat \alpha_z \hat \beta_z^{-1}}{\lambda-z} \right)\,\zeta \,+\,\hat \Psi_{z,n}^{x}
$$
where $\hat \Psi_{z,n}^{x}$ is holomorphic at $z=\lambda$. As $\hat \alpha_z \hat \beta_z$ is holomorphic and equal to $1$  at $z=\lambda$ we see that the limit $\Psi_{\lambda,n}^{x}$ exists.

In the case where $\lambda$ is a simple eigenvalue of $\WW_n$ but not of $\tilde \WW_n$ we find $b=\zeta^* \Phi_n=0$ causing the problem that $\beta_{z,n} \not \to \infty$ for $z\to \lambda$, but
$\alpha_{z,n}\to \infty$ for $z\to \lambda$, hence  $\alpha_{z,n} \beta_{z,n}^{-1}$ has infinite limit.
Vice versa, if $\lambda\in \spec(\tilde \WW_n)$ but $\lambda\not \in \spec (\WW_n)$ we get a blow up of $\delta_{z,n} \beta_{z,n}^{-1}$

Finally, if $\lambda\in\spec(V_n|\VV_n)$ is a double eigenvalue, then we should replace $a$ and $b$ by vectors ${\bfa}$, ${\bfb}$ being the projections of $\Ups_n$ and $\Phi_n$ onto this two dimensional eigenspace. 
Note that this means ${\bfa} \in \WW_n$ and ${\bfb}\in\tilde \WW_n$ and ${\bfa}$ and ${\bfb}$ are not zero and not co-linear.
If ${\bfa}$ and ${\bfb}$ are orthogonal, then $\beta_{z,n}$ stays bounded while $\alpha_{z,n}, \delta_{z,n}\to \infty$ for $z\to \lambda$ and we have the same problem as before.
If they are not orthogonal, then the products $\bar a b$ and $\bar b a$ are replaced by $\bfa^* {\bfb}$ and ${\bfb}^* {\bfa}$ in the relations above for $\beta_{z,n}$ and $\gamma_{z,n}$ and we obtain
$$
\gamma_{z,n}-\delta_{z,n}\beta_{z,n}^{-1}\alpha_{z,n}\,=\, \frac{|{\bfb}^* {\bfa}|^2 \hat \gamma_z - |{\bfa}|^2 \,|{\bfb}|^2 \;\hat \delta_z \hat \beta_z^{-1} \hat \alpha_z}{(\lambda-z){\bfa}^* {\bfb} }\;.
$$
For $z\to \lambda$ the numerator converges to $\left|{\bfb}^* {\bfa}\right|^2 - |{\bfa}|^2\;|{\bfb}|^2 $ which is negative and not zero by Cauchy-Schwarz as ${\bfa}$ and ${\bfb}$ are not co-linear. Hence the expression converges to $\infty$ for $z \to \lambda$ and the matrix $T_{\lambda,n}$ is not defined.

\section{Eigenfunctions with finite support}

Here we want to classify the eigenfunctions of finite support other than the ones supported in one shell coming from $V_n|\VV_n^\perp$.
Thus, we may assume $\VV_n=\ell^2(S_n)$ and each eigenvector of $\VV_n$ overlaps with $\Phi_n$ or $\Ups_n$.

\begin{lemma}\label{lem-boundary}
 Assume $\Hh \Psi = \lambda \Psi$, $\lambda\in\RR$ and as above, let $x_n=\Ups_n^*\Psi_n$ and $\tilde x_n=\Phi_n^*\Psi_n$.
 Assume that $T_{\lambda,n}$ is not defined, i.e. $\lambda \in A_n \cap \RR$.
 Then, one has the following local 'boundary conditions' at $n$:
 At the left to the shell $S_n$ one finds $\smat{a_n x_n \\ \tilde x_{n-1}}=C \ovr{x}^{(-)}_{\lambda,n}$, i.e.
  $$
  \pmat{a_n x_{n} \\ \tilde x_{n-1}}\,=\,C \cdot \pmat{b^2_n \Ups_n^*(V_n-\lambda)^{-1} \Ups_n \\ 1} \qtx{or} \tilde x_{n-1}=0
  $$
  depending on whether $\Ups_n^*(V_n-\lambda)^{-1} \Ups_n$ exists or not.
  At the right to the shell $S_n$ one finds $\smat{a_{n+1} x_{n+1} \\ \tilde x_{n}}=C \ovr{x}^{(+)}_{\lambda,n}$, i.e.
  $$
   \pmat{a_{n+1} x_{n+1} \\ \tilde x_n}\,=\,C\cdot \pmat{1 \\ \Phi_n^*(V_n-\lambda)^{-1}\Phi_n } \qtx{or} x_{n+1}=0
  $$
  depending on whether $\Phi_n^*(V_n-\lambda)^{-1}\Phi_n$ exists or not.\\
\end{lemma}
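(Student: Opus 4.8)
The plan is to extract the boundary conditions directly from the $n$-th component of $\Hh\Psi=\lambda\Psi$, splitting according to whether $\lambda\in\spec(V_n|\VV_n)$. By Proposition~\ref{prop-eigf} we may assume $\Psi_n\in\VV_n$: replacing $\Psi_n$ by its $\VV_n$-component changes neither $x_n=\Ups_n^*\Psi_n$ nor $\tilde x_n=\Phi_n^*\Psi_n$, and since $V_n$ preserves $\VV_n$ and $\VV_n^\perp$ the discarded part solves $(V_n-\lambda)\,\cdot=0$. Written out, the equation at level $n$ is
$$
(V_n-\lambda)\,\Psi_n\;=\;a_{n+1}\,x_{n+1}\,\Phi_n\;+\;a_n\,\tilde x_{n-1}\,\Ups_n\,.\qquad(\ast)
$$

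If $\lambda\notin\spec(V_n|\VV_n)$, then $(V_n-\lambda)^{-1}$ exists on $\VV_n$ and applying $\Ups_n^*$ and $\Phi_n^*$ to $(\ast)$ recovers $x_n=\beta_{\lambda,n}\,a_{n+1}x_{n+1}+\alpha_{\lambda,n}\,a_n\tilde x_{n-1}$ and $\tilde x_n=\delta_{\lambda,n}\,a_{n+1}x_{n+1}+\gamma_{\lambda,n}\,a_n\tilde x_{n-1}$. Since $\lambda\in A_n$, $\lambda$ real and $\gamma_{\lambda,n}=\overline{\beta_{\lambda,n}}$, Proposition~\ref{prop-def-T} forces $\beta_{\lambda,n}=\gamma_{\lambda,n}=0$. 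The first identity then reads $x_n=\alpha_{\lambda,n}\,a_n\tilde x_{n-1}$, so $\smat{a_n x_n\\ \tilde x_{n-1}}$ is proportional to $\ovr{x}^{(-)}_{\lambda,n}$; the second reads $\tilde x_n=\delta_{\lambda,n}\,a_{n+1}x_{n+1}$, so $\smat{a_{n+1}x_{n+1}\\ \tilde x_n}$ is proportional to $\ovr{x}^{(+)}_{\lambda,n}$.

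If $\lambda\in\spec(V_n|\VV_n)$, put $E:=\ker(V_n-\lambda)|\VV_n$, so $1\le\dim E\le 2$; since $T_{\lambda,n}$ is undefined, Proposition~\ref{prop-def-T} gives $\lambda\notin\spec(V_n|\WW_n\cap\tilde\WW_n)$. Let $Q$ be the orthogonal projection onto $E$; as a polynomial in $V_n$ it preserves the cyclic subspaces $\WW_n,\tilde\WW_n$, hence $Q\WW_n=E\cap\WW_n$, $Q\tilde\WW_n=E\cap\tilde\WW_n$, $E=(E\cap\WW_n)+(E\cap\tilde\WW_n)$, and by cyclicity $Q\Ups_n\neq 0\Leftrightarrow\lambda\in\spec(V_n|\WW_n)\Leftrightarrow|\alpha_{\lambda,n}|=\infty$, and symmetrically $Q\Phi_n\neq 0\Leftrightarrow|\delta_{\lambda,n}|=\infty$. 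Solvability of $(\ast)$ forces its right-hand side into $E^\perp$, i.e.
$$
a_{n+1}x_{n+1}\,Q\Phi_n\;+\;a_n\tilde x_{n-1}\,Q\Ups_n\;=\;0\,.
$$
If $\dim E=2$, then $Q\Ups_n,Q\Phi_n$ are nonzero and linearly independent (they span $E$; if collinear then $E\cap\WW_n=E\cap\tilde\WW_n\subseteq\WW_n\cap\tilde\WW_n$, impossible), so $x_{n+1}=\tilde x_{n-1}=0$ and $|\alpha_{\lambda,n}|=|\delta_{\lambda,n}|=\infty$, giving the ``$\infty$'' branches $\smat{1\\0}$ and $\smat{0\\1}$. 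If $\dim E=1$, say $E=\CC\zeta$, then exactly one of $Q\Ups_n,Q\Phi_n$ vanishes (if both vanished $E=0$; if neither, $\zeta\in\WW_n\cap\tilde\WW_n$). Say $Q\Phi_n=0$; then $\tilde\WW_n\subseteq E^\perp$, so $\delta_{\lambda,n}$ is finite, $\zeta\perp\Phi_n$, and $|\alpha_{\lambda,n}|=\infty$; the displayed constraint gives $\tilde x_{n-1}=0$, so $\smat{a_n x_n\\ \tilde x_{n-1}}$ is proportional to the $|\alpha_{\lambda,n}|=\infty$ value of $\ovr{x}^{(-)}_{\lambda,n}$. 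Moreover $(\ast)$ becomes $(V_n-\lambda)\Psi_n=a_{n+1}x_{n+1}\Phi_n$ with $\Phi_n\in\tilde\WW_n\subseteq E^\perp$, so $\Psi_n=a_{n+1}x_{n+1}(V_n-\lambda)^{-1}|_{E^\perp}\Phi_n$ modulo $E$; since $(V_n-\lambda)^{-1}|_{E^\perp}$ agrees with $(V_n-\lambda)^{-1}|\tilde\WW_n$ on $\tilde\WW_n$, applying $\Phi_n^*$ yields $\tilde x_n=\delta_{\lambda,n}\,a_{n+1}x_{n+1}$, i.e. $\smat{a_{n+1}x_{n+1}\\ \tilde x_n}$ is proportional to $\ovr{x}^{(+)}_{\lambda,n}$. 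The case $Q\Ups_n=0$ is symmetric, interchanging the roles of the two boundaries.

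I expect the degenerate Case with $\lambda\in\spec(V_n|\VV_n)$ to be the main obstacle: one must track which of $\alpha_{\lambda,n},\delta_{\lambda,n}$ is finite and which blows up and match this to the two branches in the definitions of $\ovr{x}^{(\pm)}_{\lambda,n}$, and the linear-algebra bookkeeping ($Q$ preserves the cyclic subspaces, $E=(E\cap\WW_n)+(E\cap\tilde\WW_n)$, the eigenvectors of $V_n|\VV_n$ having multiplicity at most two) must be set up cleanly. Heuristically $\ovr{x}^{(\pm)}_{\lambda,n}$ are the $z\to\lambda$ limits of the normalized generic data $\smat{a_n^2\alpha_{z,n}\\1}$ and $\smat{1\\ \delta_{z,n}}$, which motivates the statement, but the degenerate cases require the direct argument above rather than a limiting one (the relations between $x_n$, $\tilde x_n$ and $x_{n\pm1}$, $\tilde x_{n\mp1}$ hold for the fixed solution, not for a family).
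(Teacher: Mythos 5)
Your proof is correct and takes essentially the same route as the paper's: both start from the shell-$n$ component $(V_n-\lambda)\Psi_n=a_{n+1}x_{n+1}\Phi_n+a_n\tilde x_{n-1}\Ups_n$, kill the right-hand side against $\ker(V_n-\lambda)$ to force $\tilde x_{n-1}=0$ or $x_{n+1}=0$ in the degenerate branches, and apply $(V_n-\lambda)^{-1}$ on the appropriate invariant subspace to get the $\alpha_{\lambda,n}$- and $\delta_{\lambda,n}$-proportionalities. Your organization by $\dim\ker(V_n-\lambda)|\VV_n$ together with the spectral projection $Q$ is just a more systematic packaging of the paper's four cases (and supplies the exhaustiveness argument the paper leaves implicit), so no substantive difference.
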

  Note, that formally the equations $\tilde x_{n-1}=0$ or $x_{n+1}=0$  can be included in the other equations with $|\Ups_n^*(V_n-\lambda)^{-1} \Ups_n|= \infty$ or
  $|\Phi_n^*(V_n-\lambda)^{-1}\Phi_n|=\infty$, respectively, interpreted as an equation on projective space.
\begin{proof}
 Start with the equation
 \begin{equation} \label{eq-eig}
 0\,=\,(V_n-\lambda)\Psi_n\,=\,a_{n+1}\Phi_n\,x_{n+1}\,+\,a_n \Ups_n\,\tilde x_{n-1}\,
 \end{equation}
 which follows from $\Hh\Psi=\lambda \Psi$.
 There are four cases where $T_{\lambda,n}$ is not defined.\\[.2cm]
 {\bf Case 1}: $\lambda\not\in\spec(V_n|\VV_n), \beta_{\lambda,n}=0=\gamma_{\lambda_n}$, i.e. $\Phi_n^*(V-\lambda)^{-1} \Ups_n=0=\Ups_n^*(V-\lambda)^{-1} \Phi_n$.
 Multiplying \eqref{eq-eig} with $\Phi_n^*(V_n-\lambda)^{-1}$ or $\Ups_n^*(V_n-\lambda)^{-1}$ leads directly to the claimed equations.\\[.2cm]
 {\bf Case 2}: There exist ${\bfa}, {\bfb} \in \ker(V_n-\lambda)$ with ${\bfa}^*\Ups_n\neq 0, {\bfa}^*\Phi_n=0={\bfb}^*\Phi_n, {\bfb}^*\Phi_n\neq 0$. In this case, neither
 $\Ups_n^*(V_n-\lambda)^{-1}\Ups_n$ nor $\Phi_n^*(V_n-\lambda)^{-1}\Phi_n$ exist.
 Multiplying \eqref{eq-eig} with ${\bfa}^*$ from the left, one immediately gets $\tilde u_{n-1}=0$. Multiplying with ${\bfb}^*$ from the left gives $u_{n+1}=0$.\\[.2cm]
 {\bf Case 3}: $V_n-\lambda|\tilde \WW_n$ is invertible, and there exists ${\bfa}\in \ker(V_n-\lambda)$ with ${\bfa}^*\Ups_n\neq 0, {\bfa}^*\Phi_n=0$. In this case $\Phi_n^*(V_n-\lambda)^{-1} \Phi_n$ exists (calculated by restriction to $\tilde \WW_n$).
 Multiplying \eqref{eq-eig} by ${\bfa}^*$ from the left first gives $\tilde u_{n-1}=0$. Plugging this back in \eqref{eq-eig} one obtains an equation within the vector space $\tilde \WW_n$. Multiplying by $\Phi_n^*(V_n-\lambda)^{-1}$ from the left then
 gives the corresponding condition at the right of $S_n$.\\[.2cm]
 {\bf Case 4}: $V_n-\lambda|\WW_n$ is invertible, and there exists ${\bfb}\in \ker(V_n-\lambda)$ with ${\bfb}^*\Phi_n\neq 0, {\bfb}^*\Ups_n=0$. Here, $\Ups_n^*(V_n-\lambda)^{-1} \Ups_n$ exists.
 This case is symmetric to case 3 and everything follows analogously.
\end{proof}

Now we have the ingredients to classify the compactly supported eigenvectors of $\Hh|\VV$ or $\Hh^+|\VV^+$.
\begin{proof}[Proof of Theorem~\ref{th-compact-eig}]
 In part (i) we have $\lambda\in \RR\cap A_{l} \cap A_{m+1}$ and $\lambda \not \in B_{l+1,m}$. By the assumptions, there exists a (unique) solution $(x_{n+1},\tilde x_n)_{n=l}^m$ of the transfer matrix equation with
 $\smat{a_{l+1} x_{l+1} \\ x_l}=\ovr{x}^{(+)}_{\lambda,l}$ and $\smat{a_{m+1} x_{m+1} \\ x_m}=C \ovr{x}^{(-)}_{\lambda,m}$.
 For $l<m\leq n$ let $\Psi_n:=a_{n+1} x_{n+1} \Phi_{\lambda,n}+a_n \tilde x_{n-1} \Ups_{\lambda} \lambda$, cf. \eqref{eq-def-psi_z}.
 Moreover, using the case classification in the proof of Lemma~\ref{lem-boundary} we let 
$$
 \Psi_l:=a_{l+1} u_{l+1} (V_l-\lambda)^{-1} \Phi_l \in \tilde \WW_l \quad \text{in Case 1 and 3}
$$
(which are the cases where $V_l-\lambda|\tilde \WW_l$ is invertible) and
$$
 \Psi_l \in \tilde \WW_l \qtx{such that} V_n\Psi_l = \lambda \Psi_l \qtx{and} \Phi_l^* \Psi_l=u_l \quad \text{in Case 2 and 4.}
$$
Note that also in the Cases 2 and 4, $\Psi_l$ is uniquely defined and a multiple of the vector ${\bfb}$ as in the proof of Lemma~\ref{lem-boundary}.
Similarly, we let
$$
\Psi_{m+1}:= a_m \tilde u_{m}\,(V_{m+1}-\lambda)^{-1} \Ups_m \quad \text{in Case 1 and 4}
$$
(which are the cases where $V_m-\lambda|\WW_{m+1}$ is invertible) and
$$
\Psi_{m+1} \in \WW_{m+1} \qtx{such that} V_{m+1} \Psi_{m+1}=\lambda \Psi_{m+1} \qtx{and} \Ups_{m+1}^*\psi_{m+1}=u_{m+1}
$$ in Case 2 and 3.
Moreover, let $\Psi_n=\nul$ for $n<l$ or $n>m+1$. Then, it is easy to check that $\Hh\Psi=\lambda\Psi$ for $\Psi=\bigoplus_n \Psi_n$.
Any other eigenfunction $\Theta\in\VV$ for the eigenvalue $\lambda$, which is supported on $\bigsqcup_{n=l}^{m+1} S_n$ satisfies $\Ups_{n+1}^* \Theta_{n+1}=C u_{n+1}$ and $\Phi_n^* \Theta_n=C \tilde u_n$ for $l\leq n\leq m$ and some fixed $C$ by
Lemma~\ref{lem-boundary}. Let $\psi=C\Psi-\Theta$, from \eqref{eq-eig-1} it follows that $\psi_n=0$ except for $n=l$ and $n=m+1$, and one has $\Phi_l^* \psi_l=0=\Ups_{m+1}^*\psi_{m+1}=0$. 
This leads to $V_l \psi_l=\lambda \psi_l$ and $V_{m+1} \psi_{m+1}=\lambda\psi_{m+1}$.
The equation at $l-1$ or $m+2$ also reveals $\Ups_l^* \psi_l=0=\Phi_{m+1}^* \psi_{m+1}$. All together we find $\psi_l \in \VV_l^\perp$ and $\psi_{m+1}\in \VV_{m+1}^\perp$. As $\psi\in\VV$ this gives $\psi=\nul$ and thus $\Theta=C\Psi$ which finishes the proof of part (i).
Part (ii) works with the same construction.

For part (iii), let us focus again on the operator $\Hh$, the calculations for $\Hh^+$ are analogue.
Assume $\Hh \Theta = \lambda \Theta$, $\Theta\in\VV$, $\Theta_l \neq \nul, \Theta_m \neq \nul$, and $\Theta_n=\nul$ for $n<l$ and $n>m$. Moreover, let $u_n=\Ups_n^*\Theta_n $ and $\tilde u_n=\Phi_n^*\Theta_n$.
Then, the eigenvalue equation \eqref{eq-eig-1} at $n=l-1$ gives $u_l=0$ and of course we have $\tilde u_{l-1}=0$. Similarly, we find $\tilde u_m=0=u_{m+1}$.

\noindent {\bf Claim: Neither $T_{\lambda,l}$ nor $T_{\lambda,m}$ are defined.}\\
If $T_{\lambda,l}$ were well defined, then $\smat{a_{l+1} u_{l+1}\\ \tilde u_l}=T_{\lambda,l} \nul = \nul$ and by similar considerations as above we get $V_l\Theta_l=\lambda \Theta_l$ and $\Theta_l\in \VV_l^\perp$, hence $\Theta_l=\nul$, which contradicts our assumptions.
Thus $T_{\lambda,l}$ is not defined. A similar argument gives that $T_{\lambda,m}$ is not defined.

Now, if $T_{\lambda,n}$ is well defined for all $l<n<m$, then part (i) shows $\Theta=C\Psi$ for some $\Psi$ as constructed above. Otherwise, let $l < m' < m$ be the smallest number such that $T_{\lambda,m'}$ is not defined.
Lemma~\ref{lem-boundary} and similar arguments as above give some eigenfunction $\Psi$ supported on $\bigcup_{n=l}^{m'} S_n$ as in (i) and a constant $C$ such that $\Theta- C\Psi$ is supported on $\bigcup_{n=m'}^m S_n$.
Iterating the process shows that $\Theta$ is a linear combination of vectors $\Psi$ as constructed in part (i).

Parts (iv) and (v) are very analogue to (i).
\end{proof}


\section{Basic Green's function identities \label{sec:basic-green}}

\subsection{Finite cut off operators}

Let $\Hh_{N,c}$ be the restriction of $\Hh^+$ to $\bigoplus_{k=1}^n \ell^2(S_k)=\bigoplus_{k=1}^n \CC^{s_k}$ with boundary conditions $\Psi_0=\nul$ and $a_{N+1} \Ups_{N+1}^*\Psi_{N+1}=c\Phi_N^*\Psi_N$.
Then, $\Hh_{N,0}$ has Dirichlet boundary conditions and 
$$
\Hh_{N,c}=\Hh_{N,0}-c \,P_N \Phi_N \Phi_N^* P_N^*
$$ 
is a self-adjoint matrix for real $c$.
For $c\in \RR,\; z\not\in\spec(\Hh_{n,c})$ we define
\begin{equation}
m_{N,c}(z)\,:=\,\langle P_1 \Ups_1;(\Hh_{N,c}-z)^{-1} P_1\Ups_1\rangle\,=\,\Ups_1^* P_1^* (\Hh_{N,c}-z)^{-1} P_1 \Ups_1 \label{eq-def-m}
\end{equation}
and 
\begin{equation}
\tilde m_{N,c}(z)\,:=\, \langle P_1 \Phi_1; (\Hh_{N,c}-z)^{-1} P_1 \Phi_1\rangle\,=\,\Phi_1^* P_1^* (\Hh_{N,c}-z)^{-1} P_1 \Phi_1 \label{eq-def-tm}
\end{equation}

Furthermore, let $(x_{z}^{(N,c)},\tilde x_{z}^{(N,c)})$ be the solution at the spectral parameter $z$ with the right boundary condition $$a_{N+1}\,x_{z,N+1}^{(N,c)}=c\,,\quad\tilde x^{(N,c)}_{z,N}=1\;.$$
One should note that formally changing $a_1$ does not change the operator $\Hh_{N,c}$, it does also not change $u_{z,n},\,\tilde u_{z,n},\, w_{z,n},\,\tilde w_{z,n},\,x_{z,n},\,\tilde x_{z,n}$ for positive $n\geq 1$.
However, $\tilde x_{z,0}$ is proportional to $1/a_1$, thus the products $a_1 \tilde x_{z,0}$ and $\det(T_{z,0,n})$ are independent of the choice of $a_1$.
\begin{lemma}\label{lem-bar-u}
 We have
 $$
\frac{u_{z,n+1}}{\overline{u}_{\bar z,n+1}} \,=\,
\frac{w_{z,n+1}}{\overline{w}_{\bar z,n+1}} \,=\,
\frac{\tilde u_{z,n}}{\overline{\tilde u}_{\bar z,n}}\,=\, 
\frac{\tilde w_{z,n}}{\overline{\tilde w}_{\bar z,n}}\,=\, \det(T_{z,0,n})\;,
$$ 
$$
a_{n+1}\,\left(\overline{u}_{\bar z,n+1} \,\tilde w_{z,n}\,-\,w_{z,n+1}\,\overline{\tilde u}_{\bar z,n} \right)\,=\,1
$$
and 
$$
\overline{x}^{(N,c)}_{\bar z,n+1} \,=\,x^{(N,c)}_{z,n+1} \det(T_{z,n,N})\;,\qquad
\overline{\tilde x}^{(N,c)}_{\bar z,n}\,=\, \tilde x^{(N,c)}_{z,n} \det(T_{z,n,N})\;.
$$ 
\end{lemma}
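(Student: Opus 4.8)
The whole lemma reduces to a single reflection identity for the one-step transfer matrix, with everything else being bookkeeping via \eqref{eq-T-u-v}. First I would record how complex conjugation interacts with the ingredients of $T_{z,n}$. Since $V_n=V_n^*$ and $a_n\in\RR$, applying complex conjugation to \eqref{eq-def-alph} and using $\big((V_n-z)^{-1}\big)^*=(V_n-\bar z)^{-1}$ together with $\overline{\langle\varphi,M\psi\rangle}=\langle\psi,M^*\varphi\rangle$ gives
$$
\overline{\alpha_{\bar z,n}}=\alpha_{z,n},\qquad \overline{\delta_{\bar z,n}}=\delta_{z,n},\qquad \overline{\beta_{\bar z,n}}=\gamma_{z,n},\qquad \overline{\gamma_{\bar z,n}}=\beta_{z,n}\,.
$$
Next, a direct $2\times2$ determinant computation from the explicit form in \eqref{eq-transfer} shows that the $a_n$'s cancel and
$$
\det T_{z,n}\,=\,\frac{\gamma_{z,n}}{\beta_{z,n}}\,.
$$

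With these two facts the main step is to verify, entry by entry against \eqref{eq-transfer}, the identity
$$
\overline{T_{\bar z,n}}\,=\,(\det T_{z,n})^{-1}\,T_{z,n}\,,
$$
i.e. that conjugation combined with $z\mapsto\bar z$ acts on the transfer matrix by simply dividing by its determinant (the conjugated matrix has $\beta\leftrightarrow\gamma$ interchanged, and the scalar $(\det T_{z,n})^{-1}=\beta_{z,n}/\gamma_{z,n}$ rebalances the entries). Since the factors $(\det T_{z,k})^{-1}$ are scalars they pull out of any product, and multiplicativity of $\det$ upgrades this to
$$
\overline{T_{\bar z,l,m}}\,=\,(\det T_{z,l,m})^{-1}\,T_{z,l,m}\qquad(l\le m)\,,
$$
and, by inverting, to the same identity for $T_{z,m,l}$.

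Now I would read off the statements. Applying $\overline{T_{\bar z,0,n}}=(\det T_{z,0,n})^{-1}T_{z,0,n}$ in \eqref{eq-T-u-v} and multiplying on the right by $\diag(a_1,a_1^{-1})$ — which is real, so conjugation is transparent, and has determinant $1$ — identifies
$$
\pmat{a_{n+1}\overline{u}_{\bar z,n+1} & a_{n+1}\overline{w}_{\bar z,n+1}\\ \overline{\tilde u}_{\bar z,n} & \overline{\tilde w}_{\bar z,n}}\,=\,(\det T_{z,0,n})^{-1}\pmat{a_{n+1}u_{z,n+1} & a_{n+1}w_{z,n+1}\\ \tilde u_{z,n} & \tilde w_{z,n}}\,,
$$
which is exactly the four ratio identities. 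Taking the determinant of the right-hand matrix in \eqref{eq-T-u-v} gives $a_{n+1}(u_{z,n+1}\tilde w_{z,n}-w_{z,n+1}\tilde u_{z,n})=\det T_{z,0,n}$; substituting $u_{z,n+1}=\overline{u}_{\bar z,n+1}\det T_{z,0,n}$ and $\tilde u_{z,n}=\overline{\tilde u}_{\bar z,n}\det T_{z,0,n}$ and cancelling $\det T_{z,0,n}$ yields the Wronskian identity $a_{n+1}(\overline{u}_{\bar z,n+1}\tilde w_{z,n}-w_{z,n+1}\overline{\tilde u}_{\bar z,n})=1$. For the last two identities, the solution $(x^{(N,c)}_z,\tilde x^{(N,c)}_z)$ is fixed by the real vector $\ovr{x}^{(N,c)}_{z,N}=\smat{c\\1}$ at $n=N$ (recall $c\in\RR$, and $\tilde x^{(N,c)}_{z,0}$ is a real multiple of $1/a_1$) and propagated backwards by $T_{z,N,n}=(T_{z,n,N})^{-1}$, so $\ovr{x}^{(N,c)}_{z,n}=T_{z,N,n}\smat{c\\1}$; conjugating and using $\overline{T_{\bar z,N,n}}=(\overline{T_{\bar z,n,N}})^{-1}=\det(T_{z,n,N})\,T_{z,N,n}$ gives $\overline{\ovr{x}}^{(N,c)}_{\bar z,n}=\det(T_{z,n,N})\,\ovr{x}^{(N,c)}_{z,n}$, which is the two claimed componentwise equalities.

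The only genuine content is the entrywise check of $\overline{T_{\bar z,n}}=(\det T_{z,n})^{-1}T_{z,n}$; the points needing care are purely formal: conjugation reverses a matrix product while scalars commute out, so the order of the factors $T_{z,k}$ is preserved; the finitely many $z$ where $\beta_{z,n}$ or $\gamma_{z,n}$ vanishes (or $z$ lies in the relevant quotient spectrum, i.e. $z\in A_n$) are excluded by restricting, as in the standing hypotheses, to $z$ at which all transfer matrices involved are defined and invertible; and ratios like $u_{z,n+1}/\overline{u}_{\bar z,n+1}$ are to be read as the equality $u_{z,n+1}=\overline{u}_{\bar z,n+1}\det T_{z,0,n}$ when numerator and denominator vanish simultaneously. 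None of this is a real obstacle — the lemma is a clean conjugation-symmetry statement for $2\times2$ transfer matrices.
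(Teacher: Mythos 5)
Your proposal is correct and follows essentially the same route as the paper: the paper's proof also rests on the single identity $\overline{T_{\bar z,n}}=T_{z,n}\,\beta_{z,n}\gamma_{z,n}^{-1}=T_{z,n}/\det(T_{z,n})$ together with the reality of the initial data $u_{z,1},\tilde u_{z,0},w_{z,1},\tilde w_{z,0},x^{(N,c)}_{z,N+1},\tilde x^{(N,c)}_{z,N}$ and the representation \eqref{eq-T-u-v}. You have merely written out the entrywise verification and the bookkeeping that the paper leaves implicit.
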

\begin{proof}
Note that  $\bar T_{\bar z,n} = T_{z,n} \beta_{z,n} \gamma^{-1}_{z,n}=T_{z,n} / \det(T_{z,n})$ 
and $u_{z,1},\tilde u_{z,0}, w_{z,1},\tilde w_{z,0}, x^{(N,c)}_{z,N+1}$ and $\tilde x^{(N,c)}_{z,N}$ are all real, e.g. in $\RR$.
With \eqref{eq-T-u-v} this immediately gives the identities above. 
\end{proof}

For complex $z$ and real $c$ the solutions $(x^{(N,c)}_z,\tilde x^{(N,c)}_z)$ and $(u_z,\tilde u_z)$ can not be co-linear, otherwise $z$ would be an eigenvalue of $\Hh_{N,c}$. Hence, $\tilde x^{(N,c)}_{z,0}\neq 0$.
Note that for an operator $\Gg$ on $\ell^2(\GG)$ or $\bigoplus_{k\in I} \ell^2(S_k)$ with $m,n\in I$ the operator $P_m^* \Gg P_n$ makes sense as a $s_m \times s_n$ matrix.
Our aim is now to get expressions for the resolvent, in particular $P_m^* (\Hh_{N,c}-z)^{-1} P_n$.

We have to solve $(\Hh_{N,c}-z)\Psi = P_n \varphi$, where $1\leq n \leq N,\,\varphi\in \ell^2(S_n)$. Furthermore, for the unique solution $(\Psi_1,\ldots,\Psi_N)$ we let $x_n:=\Ups_n^*\Psi_n$ and $\tilde x_n=\Phi_n^*\Psi_n$. 
For simpler notation we will use $x_{z,n}$ and $\tilde x_{z,n}$ instead of $x^{(N,c)}_{z,n}$ and $\tilde x^{(N,c)}_{z,n}$.

Using the right and left boundary conditions and the fact that $((\Hh-z)\Psi)_m=0$ for $m\neq n$, we obtain:
\begin{equation}\label{eq-basic-psi}
 \Psi_m\,=\,\Phi_{z,m}\,a_{m+1} x_{m+1} + \Ups_{z,m} a_m \tilde x_{m-1} \qtx{for} m\neq n
\end{equation}
and 
\begin{equation}\label{eq-rel-x-u-w}
 x_m = \begin{cases}
        c_1 u_{z,m} & \;\;\text{for}\;\; m \leq n \\
        c_2 x_{z,m} & \;\;\text{for}\;\; m > n 
       \end{cases}
\qtx{and} \tilde x_m = \begin{cases}
                    c_1 \tilde u_{z,m} & \;\;\text{for}\;\; m < n \\
		    c_2 \tilde x_{z,m} & \;\;\text{for}\;\; m \geq n 
                   \end{cases}
\end{equation}
for some ($z$ and $\varphi_n$-dependent) constants $c_1, c_2$.
The identity $(\Hh_{N,c}-z)\Psi)_n=\varphi$ gives
$$
\Psi_n\,=\,\Phi_{z,n}\,a_{n+1} x_{n+1} + \Ups_{z,n} a_n \tilde x_{n-1}+(V_n-z)^{-1} \varphi
$$
leading to
$$
x_n\,=\,\beta_{z,n} a_{n+1} x_{n+1}+\alpha_{z,n} a_n \tilde x_{n-1} + \Ups_{\bar z,n}^* \varphi
$$
and 
$$
\tilde x_n\,=\,\delta_{z,n} a_{n+1} x_{n+1} + \gamma_{z,n} a_n \tilde x_{n-1}+\Phi_{\bar z,n}^* \varphi\,.
$$
Note that $\Ups_{\bar z,n}^*\varphi_n = \Ups_n^*[(V_n-\bar z)^*]^{-1} \varphi_n=\Ups_n^*(V_n-z)^{-1}\varphi$.
Similar calculations as above lead to
$$
\pmat{a_{n+1} x_{n+1} \\ \tilde x_n}\,=\,
T_{z,n} \pmat{a_n x_n \\ \tilde x_{n-1}}\,-\, \pmat{ \beta_{z,n}^{-1}\Ups_{\bar z,n}^* \varphi \\ (\beta_{z,n}^{-1} \delta_{z,n} \Ups_{\bar z,n}^*-\Phi_{\bar z,n}^*)\varphi}
$$
implying  that
$$
c_2 \pmat{a_{n+1} x_{z,n+1} \\ \tilde x_{z,n}}\,=\,c_1 \pmat{a_{n+1} u_{z,n+1} \\ \tilde u_{z,n}}
\,-\,\pmat{ \beta_{z,n}^{-1} \Ups_{\bar z,n}^* \varphi \\ (\beta_{z,n}^{-1} \delta_{z,n} \Ups_{\bar z,n}^*-\Phi_{\bar z,n}^*)\varphi}\;.
$$
For $\im(z)>0$ and $c\in \RR$ the vectors $(a_{n+1} u_{z,n+1}\,, \tilde u_{z,n})$ and $(a_{n+1} x_{z,n+1}, \,\tilde x_{z,n})$
can not be co-linear, otherwise \eqref{eq-eig-1} would give an eigenvector $(\Psi_1,\ldots,\Psi_N)$ of $\Hh_N$ with eigenvalue $z$.
Therefore, one can uniquely solve for $c_1,\, c_2$ and we find
\begin{align}
c_1\,&=\,\frac{\tilde x_{z,n} \beta_{z,n}^{-1} \Ups_{\bar z,n}^* \varphi - a_{n+1} x_{z,n+1} (\beta_{z,n}^{-1} \delta_{z,n} \Ups_{\bar z,n}^*-\Phi_{\bar z,n}^*)\varphi  }{a_{n+1} (u_{z,n+1} \tilde x_{z,n} - x_{z,n+1} \tilde u_{z,n})} \label{eq-c1-1} \\
c_2\,&=\,\frac{\tilde u_{z,n} \beta_{z,n}^{-1}\Ups_{\bar z,n}^* \varphi - a_{n+1} u_{z,n+1} (\beta_{z,n}^{-1} \delta_{z,n} \Ups_{\bar z,n}^*-\Phi_{\bar z,n}^*)\varphi  }{a_{n+1} (u_{z,n+1} \tilde x_{z,n} - x_{z,n+1} \tilde u_{z,n})}\;. \label{eq-c2-1}
\end{align}
Note that the denominator equals
$$
 \det\pmat{a_{n+1} u_{z,n+1} & a_{n+1} x_{z,n+1}  \\ \tilde u_{z,n} & \tilde x_{z,n} }\,=\, \det\left(T_{z,0,n}\pmat{a_1& a_1 x_{z,1} \\ 0 & \tilde x_{0,z}}\right)\;.
$$
As $(x_z,\tilde x_z)$ solve the transfer matrix equation, we find $\tilde x_{z,n}-a_{n+1} x_{z,n+1} \delta_{z,n}=\gamma_{z,n} a_n \tilde x_{z,n-1}$ and a similar equation replacing $(x_z,\tilde x_z)$ by $(u_z,\tilde u_z)$. This leads to
\begin{align}
c_1\,&=\,\frac{(\beta_{z,n}^{-1}\gamma_{z,n} a_n \tilde x_{z,n-1} \Ups_{\bar z,n}^*\,+\,a_{n+1} x_{z,n+1} \Phi_{\bar z,n}^*)\,  \varphi  }{\det(T_{z,0,n})\, a_1 \tilde x_{z,0}} 
\,=\,\frac{(\Psi^{x}_{\bar z,n})^*\varphi}{a_1\,\overline{\tilde x}_{\bar z,0}}\label{eq-c1-2} \\
c_2\,&=\,\frac{(\beta_{z,n}^{-1}\gamma_{z,n} a_n \tilde u_{z,n-1} \Ups_{\bar z,n}^*\,+\,a_{n+1} u_{z,n+1} \Phi_{\bar z,n}^*)\,  \varphi  }{\det(T_{z,0,n})\, a_1 \tilde x_{z,0}}\,=\,\frac{(\Psi^{u}_{\bar z,n})^*\varphi}{a_1\,\tilde x_{z,0}} \label{eq-c2-2}
\end{align}
for the last equations we used Lemma~\ref{lem-bar-u} and \eqref{eq-def-psi_z}.
We have two special cases,
$$
c_1\,=\,\frac{1}{\det(T_{z,0,n})a_1\tilde x_{z,0}}\;\cdot\;\begin{cases}
                                          \beta_{z,n}^{-1} \gamma_{z,n} x_{z,n}\, & \;\;\text{if}\;\;\varphi_n=\Ups_n \\
                                          \tilde x_{z,n} & \;\;\text{if}\;\; \varphi_n= \Phi_n
                                        \end{cases}
$$
and a similar equation holds when replacing $c_1$ by $c_2$ and $(w,\tilde w)$ by $(u,\tilde u)$. These identities lead to 
the following important equations:
\begin{align}
& \langle P_m \Ups_m\,;\, (\Hh_{N,c}-z)^{-1} P_n \Ups_n\rangle\,=\,\frac{\beta_{z,n}^{-1}\gamma_{z,n} }{\det(T_{z,0,n}) a_1 \tilde x^{(N,c)}_{z,0}} \;\cdot\; \begin{cases}
                                                                                                                                                  u_{z,m} x^{(N,c)}_{z,n} & \;\;\text{if}\;\; m\leq n \\
                                                                                                                                                  x^{(N,c)}_{z,m} u_{z,n} & \;\;\text{if}\;\; m\geq n
                                                                                                                                                 \end{cases} 
                                                                                                                                                 \label{eq-mn}\\
& \langle P_m \Ups_m\,;\, (\Hh_{N,c}-z)^{-1} P_n \Phi_n\rangle\,=\,\frac{1}{\det(T_{z,0,n}) a_1 \tilde x^{(N,c)}_{z,0}} \;\cdot\; \begin{cases}
                                                                                                                                                  u_{z,m} \tilde x^{(N,c)}_{z,n} & \;\;\text{if}\;\; m\leq n \\
                                                                                                                                                  x^{(N,c)}_{z,m} \tilde u_{z,n} & \;\;\text{if}\;\; m> n
                                                                                                                                                 \end{cases} 
                                                                                                                                                 \label{eq-mtn}\\
& \langle P_m \Phi_m\,;\, (\Hh_{N,c}-z)^{-1} P_n \Ups_n\rangle\,=\,\frac{\beta_{z,n}^{-1}\gamma_{z,n} }{\det(T_{z,0,n}) a_1 \tilde x^{(N,c)}_{z,0}} \;\cdot\; \begin{cases}
                                                                                                                                                  \tilde u_{z,m} x^{(N,c)}_{z,n} & \;\;\text{if}\;\; m< n \\
                                                                                                                                                  \tilde x^{(N,c)}_{z,m} u_{z,n} & \;\;\text{if}\;\; m\geq n
                                                                                                                                                 \end{cases} \label{eq-tmn} \\
& \langle P_m \Phi_m\,;\, (\Hh_{N,c}-z)^{-1} P_n \Phi_n\rangle\,=\,\frac{1 }{\det(T_{z,0,n}) a_1\tilde x^{(N,c)}_{z,0}} \;\cdot\; \begin{cases}
                                                                                                                                                  \tilde u_{z,m} \tilde x^{(N,c)}_{z,n} & \;\;\text{if}\;\; m\leq n \\
                                                                                                                                                  \tilde x^{(N,c)}_{z,m} \tilde u_{z,n} & \;\;\text{if}\;\; m\geq n
                                                                                                                                                 \end{cases} \label{eq-tmtn}
\end{align}

In particular, we find the following expressions
\begin{equation}\label{eq-def-m-fct}
 m_{N,c}(z)\,=\,\frac{x^{(N,c)}_{z,1}}{a_1\tilde x^{(N,c)}_{z,0}}\;,\quad
 \tilde m_{N,c}(z)\,=\,\frac{\delta_{z,1} \tilde x^{(N,c)}_{z,1}}{\gamma_{z,1} a_1\,\tilde x^{(N,c)}_{z,0}}\,=\,
 \frac{\tilde u_{z,1}\tilde x^{(N,c)}_{z,1} }{\beta^{-1}_{z,1}\gamma_{z,1} a_1\tilde x^{(N,c)}_{z,0}}.
\end{equation}
The first equation leads to
\begin{equation}\label{eq-rel-m-w}
 \frac{x^{(N,c)}_{z,n}}{a_1\tilde x^{(N,c)}_{z,0}}\,=\,m_{N,c}(z) u_{z,n}+w_{z,n} \;,\qquad
 \frac{\tilde x^{(N,c)}_{z,n}}{a_1\tilde x^{(N,c)}_{z,0}}\,=\,m_{N,c}\tilde u_{z,n}+\tilde w_{z,n}\;.
\end{equation}
One only needs to check the first equation for the case $n=1$ and the second one for $n=0$. The general case follows then by linearity from the transfer matrix equation.
In particular, we obtain
\begin{equation}\label{eq-def-psi-na}
  \Psi^{x^{(N,c)}}_{z,n}\,/\,(a_1\tilde x^{(N,c)}_{z,0})\,=\,m_{N,c}(z)\,\Psi^u_{z,n}\,+\,\Psi^w_{z,n}\,=:\,\Psi^{(N,c)}_{z,n}\;.
\end{equation}
where the latter equation is the definition of $\Psi^{(N,c)}_{z,n}$.

 
Using the vector solutions $\Psi^u_{z,n}$ and $\Psi^{(N,c)}_{z,n}$, we find some simple expressions for the part of the resolvents:
\begin{prop}\label{prop:green}
Let $m<n\leq N$. Then,
\begin{align}
 & P_m^* (\Hh_{N,c}-z)^{-1} P_n\,=\, \Psi^u_{z,m}\,(\Psi^{(N,c)}_{\bar z,n})^* \label{eq-Green-1} \\
 & P_n^* (\Hh_{N,c}-z)^{-1} P_m\,=\, \Psi^{(N,c)}_{z,n}\,(\Psi^u_{\bar z,n})^* \label{eq-Green-2} 
\end{align}
where the products of the type $\Psi'\Psi^*$ have to be considered as matrix product of a column vector with a row vector giving a matrix. 
In the case $m=n$ we find
\begin{align} 
P_n^*(\Hh_{N,c}-z)^{-1} P_n \,=&\;
m_{N,c}(z)\; \Psi^{u}_{z,n} (\Psi^{u}_{\bar z,n})^* \,+\,(V_n-z)^{-1} \notag \\ & +\, \Phi^w_{z,n}\,(\Psi^u_{\bar z,n})^*\,+\,\Ups^u_{z,n}\,(\Psi^w_{\bar z,n})^*
\label{eq-Green-3}
\end{align}
Particularly, we have
\begin{equation}
 ((\Hh_{N,c}-z)^{-1}\,P_1\Ups_1)_n\,=\,\Psi^{x^{(N,c)}}_{z,n}\,/\,(a_1 \tilde x_{z,0})\,=\,\Psi^{(N,c)}_{z,n}
\end{equation}
for $n=1,\ldots,N$.
\end{prop}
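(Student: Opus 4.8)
The proof is essentially a matter of assembling the resolvent computation already carried out in Section~\ref{sec:basic-green}. Since an $s_m\times s_n$ matrix is determined by its action on every $\varphi\in\ell^2(S_n)\cong\CC^{s_n}$, I would fix $z$ with $\im(z)>0$ (so that $z\notin\spec(\Hh_{N,c})$ and, as noted there, $\tilde x^{(N,c)}_{z,0}\neq 0$), fix $1\leq n\leq N$ and $\varphi\in\ell^2(S_n)$, and set $\Psi:=(\Hh_{N,c}-z)^{-1}P_n\varphi$, so that $\Psi_m=P_m^*(\Hh_{N,c}-z)^{-1}P_n\varphi$. By \eqref{eq-basic-psi} and \eqref{eq-rel-x-u-w} one has $\Psi_m=c_1\,\Psi^u_{z,m}$ for $m<n$ and $\Psi_m=c_2\,\Psi^{x^{(N,c)}}_{z,m}$ for $m>n$, with $c_1,c_2$ depending linearly on $\varphi$ through \eqref{eq-c1-2} and \eqref{eq-c2-2}. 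Reading these expressions off for all $\varphi$ gives the three matrix identities for $\im(z)>0$, and since both sides of each identity extend holomorphically in $z$ it then suffices to establish them in the upper half plane.

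For the off-diagonal blocks I would rewrite the constants. Equation \eqref{eq-c1-2} reads $c_1=(\Psi^{x^{(N,c)}}_{\bar z,n})^*\varphi/(a_1\overline{\tilde x}_{\bar z,0})$, and by the definition \eqref{eq-def-psi-na} of $\Psi^{(N,c)}_{\bar z,n}$ together with $a_1\in\RR$ this equals $(\Psi^{(N,c)}_{\bar z,n})^*\varphi$; hence $\Psi_m=\Psi^u_{z,m}\,(\Psi^{(N,c)}_{\bar z,n})^*\varphi$ for all $\varphi$ when $m<n$, which is \eqref{eq-Green-1}. The same computation with $c_2$ in place of $c_1$: by \eqref{eq-c2-2} and \eqref{eq-def-psi-na} one has $c_2=(\Psi^u_{\bar z,n})^*\varphi/(a_1\tilde x^{(N,c)}_{z,0})$ and $\Psi^{x^{(N,c)}}_{z,m}/(a_1\tilde x^{(N,c)}_{z,0})=\Psi^{(N,c)}_{z,m}$, so that $\Psi_m=\Psi^{(N,c)}_{z,m}\,(\Psi^u_{\bar z,n})^*\varphi$ for $m>n$, which gives \eqref{eq-Green-2}.

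For the diagonal block $m=n$ the relation $((\Hh_{N,c}-z)\Psi)_n=\varphi$ produces the additional term $(V_n-z)^{-1}\varphi$, so here $\Psi_n=c_2\,\Phi^{x^{(N,c)}}_{z,n}+c_1\,\Ups^u_{z,n}+(V_n-z)^{-1}\varphi$. I would rewrite $c_2\,\Phi^{x^{(N,c)}}_{z,n}$ via \eqref{eq-rel-m-w} as $\bigl(m_{N,c}(z)\Phi^u_{z,n}+\Phi^w_{z,n}\bigr)(\Psi^u_{\bar z,n})^*\varphi$ and $c_1\,\Ups^u_{z,n}$ as $\Ups^u_{z,n}\,(\Psi^{(N,c)}_{\bar z,n})^*\varphi$, then expand $\Psi^{(N,c)}_{\bar z,n}=m_{N,c}(\bar z)\Psi^u_{\bar z,n}+\Psi^w_{\bar z,n}$ and use that, $\Hh_{N,c}$ being self-adjoint, $\overline{m_{N,c}(\bar z)}=m_{N,c}(z)$. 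Collecting the terms that carry the factor $m_{N,c}(z)$ and using $\Psi^u=\Phi^u+\Ups^u$ turns them into $m_{N,c}(z)\,\Psi^u_{z,n}(\Psi^u_{\bar z,n})^*$, and what remains is exactly \eqref{eq-Green-3}. The final display is then the specialization $n=1$, $\varphi=\Ups_1$ of \eqref{eq-Green-2} and \eqref{eq-Green-3}, using the normalizations $\tilde u_{z,0}=0$ and $a_1\tilde w_{z,0}=1$ of the solutions $(u_z,\tilde u_z)$, $(w_z,\tilde w_z)$ (so that $\Ups^u_{z,1}=0$ and $\Ups^w_{z,1}=\Ups_{z,1}$) together with the identity $(\Psi^u_{\bar z,1})^*\Ups_1=1$, which follows from $a_2u_{z,2}=\beta_{z,1}^{-1}$ and $\gamma_{z,1}=\overline{\beta_{\bar z,1}}$.

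I expect the one genuinely delicate point to be the conjugation and argument bookkeeping — keeping straight which quantities are evaluated at $z$ versus $\bar z$ and which are complex-conjugated — in particular the step $(\Psi^{x^{(N,c)}}_{\bar z,n})^*/(a_1\overline{\tilde x}_{\bar z,0})=(\Psi^{(N,c)}_{\bar z,n})^*$ in the off-diagonal case and the use of $\overline{m_{N,c}(\bar z)}=m_{N,c}(z)$ in the diagonal one. A secondary point is simply to record that the computation of Section~\ref{sec:basic-green} is valid for $\im(z)>0$, where all the denominators $\tilde x^{(N,c)}_{z,0}$ are nonzero and the relevant transfer matrices exist, the general case $z\notin\spec(\Hh_{N,c})$ then being obtained by holomorphic continuation of both sides (reading the right-hand sides through the holomorphic extensions on the discrete exceptional set).
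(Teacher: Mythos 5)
Your proposal is correct and follows essentially the same route as the paper's proof: both identify $\Psi_m$ for $m<n$, $m=n$, $m>n$ via \eqref{eq-basic-psi}, \eqref{eq-rel-x-u-w}, \eqref{eq-c1-2}, \eqref{eq-c2-2}, rewrite the constants through \eqref{eq-def-psi-na} and \eqref{eq-rel-m-w} using $\overline{m_{N,c}(\bar z)}=m_{N,c}(z)$, and obtain the last display by specializing to $n=1$, $\varphi=\Ups_1$ with $u_{z,1}=1$, $w_{z,1}=0$. The only (harmless) differences are that you make the holomorphic-continuation step and the verification of $(\Psi^u_{\bar z,1})^*\Ups_1=1$ explicit where the paper leaves them implicit.
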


\begin{proof}
For $m<n$ we find $\Psi_m=c_1 \Psi^u_{z,m}$ and for $m>n$ we have $\Psi_m=c_2 \Psi^x_{z,m}$
which together with \eqref{eq-basic-psi}, \eqref{eq-c1-2} and \eqref{eq-c2-2} shows \eqref{eq-Green-1} and \eqref{eq-Green-2}.

The expression for $\Psi_n$ gives
$$
\Psi_n\,=\,(V_n-z)^{-1}\varphi\,+\,\frac{\Phi^x_{z,n} (\Psi^u_{z,n})^*}{a_1\,\overline{\tilde x}_{\bar z,0}}\,+\,
\frac{\Ups^u_{z,n} (\Psi^x_{z,n})^*}{a_1\,\tilde x_{z,0}}
$$
and using \eqref{eq-rel-m-w} we find
$$
\frac{1}{a_1 \tilde x_{0,z}}\Phi^x_{z,n}\,=\,m_{N,c}(z)\,\Phi^u_{z,n}\,+\,\Phi^w_{z,n}
$$
and the analogue equation holds for $\Psi^x_{z,n}$. Noting that $\bar m_{N,c}(\bar z) = m_{N,c}(z)$ we finally obtain
\begin{align*}
 & \Psi_n\,=\,\left((V_n-z)^{-1} \,+\, m_{N,c}(z)\; \Psi^{u}_{z,n} (\Psi^{u}_{\bar z,n})^* \,+\,\Phi^w_{z,n}\,(\Psi^u_{\bar z,n})^*\,+\,\Ups^u_{z,n}\,(\Psi^w_{\bar z,n})^*\right)\,\varphi
\end{align*}
This proves \eqref{eq-Green-3}.

In the case of $n=1$ and $\varphi=\Ups_1$ we find $(\Psi^u_{\bar z,1})^* \Ups_1=\bar u_{z,1}=1$ and 
$(\Psi^w_{\bar z,1})^*\Ups_1=\bar w_{z,1}=0$ giving
$\Psi_m=\Psi^x_{z,m}\,/\,(a_1\tilde x_{z,0})=\Psi^{(N,c)}_{z,m}$ for all $m\geq 1$.
\end{proof}

\subsection{Unique self-adjointness and limit point\label{sec:selfadj}}

The main criterion for essential self-adjointness or existence of self-adjoint extensions are the so called deficiency indices.
So let us consider the operators $\Hh_{\min}^\flat$, $\flat$ symbolizing $+$, $-$ or no index at all.
The deficiency indices are the dimensions of the kernels of $(\Hh_{\min}^\flat)^*\pm i=\Hh^\flat_{\max}\pm i$, i.e. let us define
$$
d^\flat_{+}=\dim\; \ker(\Hh^\flat_{\max}-i) \qtx{and} d^\flat_-=\dim\; \ker (\Hh^\flat_{\max}- (-i)\,).
$$
By general theory, the dimension of the kernels $\Hh^\flat_{\max} - z$ are constant for $z$ varying in the upper and lower half planes $\CC^{\pm}=\{z\in\CC\,:\,\pm \im(z) > 0\}$. We also have $d_\pm =d^+_\pm + d^-_\pm$.
Moreover, $\Hh^\flat_{\max}$ is the unique self-adjoint closure of $\Hh^\flat_{\min}$ if and only if both deficiency indices are zero.

\vspace{.2cm}

We start with the following identity.
\begin{lemma}\label{lem-sum-est}
Let $(w,\tilde w)$ be any solution of the transfer matrix equation at $z$ and let $\Psi^x_z$ the corresponding solution vector as in \eqref{eq-def-psi_z}.
Then one finds
$$
\im(z)\;\sum_{k=m}^n \|\Psi^x_{z,n}\|^2\,=\,\im\left(a_{n+1}\, \overline x_{n+1}\,\tilde x_n\,-\,a_m\, \overline x_m\,\tilde x_{m-1} \right)\;.
$$
\end{lemma}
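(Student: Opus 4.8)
The plan is to run the classical Lagrange/Green identity on the finite block of shells $S_m,\dots,S_n$ and read off the boundary contributions at the two ends, exactly as one does for Jacobi matrices.

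First I would note that $\Psi^x_z$ solves the eigenvalue equation pointwise in the obvious algebraic sense. By \eqref{eq-def-psi_z} and \eqref{eq-def-chi_z},
\[
\Psi^x_{z,k}\,=\,a_{k+1}x_{k+1}\,(V_k-z)^{-1}\Phi_k\,+\,a_k\tilde x_{k-1}\,(V_k-z)^{-1}\Ups_k\;,
\]
so applying $(V_k-z)$ gives $(V_k-z)\Psi^x_{z,k}=a_{k+1}x_{k+1}\Phi_k+a_k\tilde x_{k-1}\Ups_k$ with no further hypothesis. Moreover, since $(x,\tilde x)$ solves the transfer matrix equation \eqref{eq-transfer}, applying $\Ups_k^*$ and $\Phi_k^*$ to the previous display and using \eqref{eq-def-alph} yields $\Ups_k^*\Psi^x_{z,k}=\beta_{z,k}a_{k+1}x_{k+1}+\alpha_{z,k}a_k\tilde x_{k-1}=x_k$ and $\Phi_k^*\Psi^x_{z,k}=\delta_{z,k}a_{k+1}x_{k+1}+\gamma_{z,k}a_k\tilde x_{k-1}=\tilde x_k$.

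Next I would pair the identity $(V_k-z)\Psi^x_{z,k}=a_{k+1}x_{k+1}\Phi_k+a_k\tilde x_{k-1}\Ups_k$ with $\Psi^x_{z,k}$ from the left. With the paper's convention that the inner product is antilinear in the first argument, $(\Psi^x_{z,k})^*\Phi_k=\overline{\tilde x_k}$ and $(\Psi^x_{z,k})^*\Ups_k=\overline{x_k}$, so
\[
(\Psi^x_{z,k})^*V_k\Psi^x_{z,k}\,-\,z\,\|\Psi^x_{z,k}\|^2\,=\,a_{k+1}x_{k+1}\,\overline{\tilde x_k}\,+\,a_k\tilde x_{k-1}\,\overline{x_k}\;.
\]
Since $V_k\in\Her(s_k)$ the term $(\Psi^x_{z,k})^*V_k\Psi^x_{z,k}$ is real; taking imaginary parts, and using $\im(\overline w)=-\im(w)$ and $a_k\in\RR$, gives
\[
\im(z)\,\|\Psi^x_{z,k}\|^2\,=\,\im\!\big(a_{k+1}\overline{x}_{k+1}\tilde x_k\big)\,-\,\im\!\big(a_k\overline{x}_k\tilde x_{k-1}\big)\;.
\]
Writing $F_k:=\im(a_{k+1}\overline{x}_{k+1}\tilde x_k)$, the right-hand side is $F_k-F_{k-1}$, so summing over $k=m,\dots,n$ telescopes to $F_n-F_{m-1}=\im\big(a_{n+1}\overline{x}_{n+1}\tilde x_n-a_m\overline{x}_m\tilde x_{m-1}\big)$, which is the claimed identity (the summand on the left being, of course, $\|\Psi^x_{z,k}\|^2$).

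There is no genuine obstacle here: the whole thing is bookkeeping, and the only points that require care are the conjugation bars produced by the antilinear-in-first-argument convention, the reality of $(\Psi^x_{z,k})^*V_k\Psi^x_{z,k}$, and keeping the index shift straight in the telescoping sum so that the lower endpoint term comes out at $m$ (involving $\tilde x_{m-1}$) rather than at $m-1$.
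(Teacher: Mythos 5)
Your proof is correct and follows essentially the same route as the paper: pair the pointwise identity $(V_k-z)\Psi^x_{z,k}=a_{k+1}x_{k+1}\Phi_k+a_k\tilde x_{k-1}\Ups_k$ with $\Psi^x_{z,k}$, use that $(\Psi^x_{z,k})^*V_k\Psi^x_{z,k}$ is real together with $\Ups_k^*\Psi^x_{z,k}=x_k$, $\Phi_k^*\Psi^x_{z,k}=\tilde x_k$, take imaginary parts, and telescope. The only difference is that you verify the identities $\Ups_k^*\Psi^x_{z,k}=x_k$ and $\Phi_k^*\Psi^x_{z,k}=\tilde x_k$ explicitly from \eqref{eq-def-alph}, which the paper merely asserts.
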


\begin{proof}
Using $\Phi_n^*\Psi^x_{z,n}=\tilde x_n$ and $\Ups_n^* \Psi^x_{z,n}=x_n$ one has
$$
z\,\Psi_{z,n}^x\,=\,(\Hh \Psi_z^x)_n\,=\,-a_{n+1} x_{n+1} \Phi_n\,-\,a_n \tilde x_{n-1} \Ups_n\,+\,V_n \Psi_{z,n}\;.
$$
Multiplying with $(\Psi^x_{z,n})^*$ from the left gives
$$
z\,\| \Psi^x_{z,n} \|^2\,=\, -a_{n+1} x_{n+1}\,\overline{\tilde x}_n\,-\,a_n\,\overline{x}_n\,\tilde x_{n-1}\,+\,(\Psi^x_{z,n})^* \,V_n\,\Psi^x_{z,n}\;.
$$
Taking imaginary parts and summing up proves the lemma.
\end{proof}
The following proposition corresponds to Proposition~\ref{prop-unique-sa}~(i).
Part (ii) follows analogously and part (iii) follows from part (i) and (ii).
\begin{prop}
If $\sum_{n=2}^\infty |a_n^{-1}|=\infty$ then $\Hh^+$ is uniquely self-adjoint.
\end{prop}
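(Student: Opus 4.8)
The plan is to show that both deficiency indices $d^+_+=\dim\ker(\Hh^+_{\max}-i)$ and $d^+_-=\dim\ker(\Hh^+_{\max}+i)$ vanish. Since (as recalled in Section~\ref{sec:selfadj}) the dimension of $\ker(\Hh^+_{\max}-z)$ is constant for $z$ ranging over each open half plane, it is enough to produce one $z$ with $\im(z)>0$ and one with $\im(z)<0$ for which the kernel is trivial. So fix $z$ with $\im(z)>0$ and take $\Psi\in\ker(\Hh^+_{\max}-z)$, i.e. $\Psi\in\ell^2(\GG^+)$ and $(\Hh^+\Psi)_n=z\Psi_n$ for all $n\geq 1$. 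I would set $x_n=\Ups_n^*\Psi_n$ and $\tilde x_n=\Phi_n^*\Psi_n$ as in \eqref{eq-def-u}; since $\im(z)\neq 0$ the matrices $V_n-z$ are invertible, so \eqref{eq-eig-1} identifies $\Psi_n$ with $\Psi^x_{z,n}$ in the notation \eqref{eq-def-psi_z}, where the half-space convention $\Psi_0=\Phi_0=\nul$ forces $\tilde x_0=0$.

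Next I would invoke Lemma~\ref{lem-sum-est} with $m=1$. Note that its proof uses only the eigenvalue identity $(\Hh\Psi^x_z)_n=z\Psi^x_{z,n}$ together with $\Phi_n^*\Psi^x_{z,n}=\tilde x_n$ and $\Ups_n^*\Psi^x_{z,n}=x_n$, so it applies verbatim to the present $\Psi$ even at the finitely many $n$ where $\beta_{z,n}=0$ and $T_{z,n}$ is undefined. Using $\tilde x_0=0$ this yields, for every $n\geq 1$,
$$
\im(z)\,\sum_{k=1}^{n}\|\Psi_k\|^2\,=\,\im\!\left(a_{n+1}\,\overline x_{n+1}\,\tilde x_n\right)\,.
$$
Letting $n\to\infty$ and using $\Psi\in\ell^2(\GG^+)$, the left-hand side tends to $\im(z)\,\|\Psi\|^2$, so $\lim_{n\to\infty}\im\!\left(a_{n+1}\,\overline x_{n+1}\,\tilde x_n\right)=\im(z)\,\|\Psi\|^2$.

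Finally I would argue by contradiction: suppose $\Psi\neq\nul$. Then the limit above is strictly positive, so there exist $c>0$ and $N_0$ with $|a_{n+1}|\,|x_{n+1}|\,|\tilde x_n|\geq c$ for all $n\geq N_0$. Since $\|\Ups_n\|=\|\Phi_n\|=1$, Cauchy--Schwarz gives $|x_{n+1}|\leq\|\Psi_{n+1}\|$ and $|\tilde x_n|\leq\|\Psi_n\|$, so for $n\geq N_0$
$$
|a_{n+1}|^{-1}\,\leq\,\tfrac1c\,\|\Psi_{n+1}\|\,\|\Psi_n\|\,\leq\,\tfrac1{2c}\bigl(\|\Psi_{n+1}\|^2+\|\Psi_n\|^2\bigr)\,.
$$
Summing over $n\geq N_0$ and adding the finitely many remaining terms bounds $\sum_{n=2}^\infty|a_n|^{-1}$ by a finite quantity plus $\tfrac1c\|\Psi\|^2<\infty$, contradicting the hypothesis $\sum_{n=2}^\infty|a_n^{-1}|=\infty$. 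Hence $\Psi=\nul$, so $\ker(\Hh^+_{\max}-z)=\{\nul\}$ and $d^+_+=0$. The same computation for some $z$ with $\im(z)<0$ (where the limit equals $\im(z)\|\Psi\|^2<0$ and one bounds $|a_{n+1}\overline x_{n+1}\tilde x_n|$ below by $|\im(z)|\|\Psi\|^2/2$) gives $d^+_-=0$, whence $\Hh^+_{\min}$ is essentially self-adjoint, i.e. $\Hh^+$ is uniquely self-adjoint. There is no real obstacle here: the heart of the argument is the telescoping identity of Lemma~\ref{lem-sum-est} combined with the trivial bound $|x_n|,|\tilde x_n|\leq\|\Psi_n\|$; the only points needing a line of justification are that this identity is valid for an arbitrary $\ell^2$ null vector (not only for genuine transfer-matrix solutions) and the passage to the limit $n\to\infty$, both of which are routine.
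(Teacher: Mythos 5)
Your proof is correct and follows essentially the same route as the paper: both rest on the telescoping identity of Lemma~\ref{lem-sum-est} with $\tilde x_0=0$, the bounds $|x_{n+1}|\le\|\Psi_{n+1}\|$, $|\tilde x_n|\le\|\Psi_n\|$, and the estimate $\|\Psi_{n+1}\|\|\Psi_n\|\le\tfrac12(\|\Psi_{n+1}\|^2+\|\Psi_n\|^2)$ to contradict $\sum|a_n|^{-1}=\infty$. The only (harmless) difference is that you run the argument for an arbitrary element of $\ker(\Hh^+_{\max}-z)$, whereas the paper specializes to the normalized solution $\Psi^u_z$ so that the partial sums are bounded below by $|u_{z,1}|^2=1$ instead of by your constant $c$.
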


\begin{proof}
We have to show that both deficiency indices are zero.
Because of the left boundary condition, the only possible candidate for the kernel of $\Hh^+_{\max}-z$ is $\bigoplus_n \Psi^u_{z,n}$ which is an eigenvector if and only if $\sum_{n=1}^\infty \|\Psi^u_{z,n}\|^2\,<\,\infty$.
Hence, for the deficiency indices we have 
$$
d^+_\pm=0\quad\Leftrightarrow\quad \sum_{n=1}^\infty \|\Psi^u_{z,n}\|^2\,=\,\infty \quad \text{for $\pm\im(z)>0$}\;.
$$
So let us assume $\sum_{n=1}^\infty \|\Psi^u_{z,n}\|^2\,<\,\infty$ for some non-real $z$.
Note that
$$
|u_{z,n}|^2\,=\,|\Ups_n^* \Psi^u_{z,n}|^2\,\leq\,\|\Psi^u_{z,n}\|^2 \qtx{and} |\tilde u_{z,n}|^2\,=\,|\Phi_n^* \Psi^u_{z,n}|^2\,\leq\,\|\Psi^u_{z,n}\|^2\;.
$$
Using Lemma~\ref{lem-sum-est} this means for $n\geq 1$,
$$
\frac{|a_{n+1}|\,|u_{z,n+1} \,\tilde u_{z,n}|}{|\im(z)|}\,\geq\, \frac{|\im(a_{n+1}\,\overline{u}_{z,n+1} \,\tilde u_{z,n})}{|\im(z)|}\,\geq\,\|\Psi^u_{z,1}\|^2\,\geq\,|u_{z,1}|^2\,=\,1\;.
$$
Hence, 
$$
\frac{|\im(z)|}{|a_{n+1}|}\,\leq\,|u_{z,n+1}\,\tilde u_{z,n}| \quad\Rightarrow\quad \sum_{n=2}^\infty \frac{1}{|a_n|}\,\leq\,\frac1{2|\im(z)|} \sum_{n=1}^\infty \big(\|\Psi^u_{z,n}\|^2\,+\,\|\Psi^u_{z,n-1}\|^2\big)\,<\,\infty\;.
$$
Therefore, if $\sum_{n=1}^\infty |a_n^{-1}|=\infty $, then $\sum_{n=1}^\infty \|\Psi^u_{z,n}\|^2\,=\,\infty$ for any $z\not\in\RR$ and $\Hh^+_{\min}$ is essentially self-adjoint.
\end{proof}

Next, let us see how the 'limit point criterion' translates here and have a look at the Weyl circles.
For fixed $z\not \in (B_\infty\cup \RR)$ and $c$ varying in the real lines, the values $m_{n,c}(z)$ and $\tilde m_{n,c}(z)$ form a circle in the upper or lower half plane depending on the sign of $\im(z)$. 
Increasing $n$, the radius of these circles shrink and one either has a limit point or a limit circle. Whether it is limit point or limit circle does not depend on $z$ or the choice of $m$ or $\tilde m$.
Correspondingly, we say that $\Hh^+$ is either in the limit point or limit circle case. 
For Jacobi operators it is well known that the deficiency indices $d^\flat_+$ and $d^\flat_-$ are always equal and this dichotomy also classifies unique self-adjointness. 
Here, one needs some additional assumption. 

\begin{assumption}
 \begin{enumerate}
\item[{\rm (A1)}] For some non-real $z\not\in B_\infty$ the absolute values $|\det(T_{z,0,\pm n})|$ are uniformly bounded away from zero and infinity, 
i.e. for some constants $0<c<C<\infty$ and all $n$ we have $c<\det(T_{z,0,n})<C$.
 \item[{\rm (A2)}] All $\Ups_n,\;\Phi_n$ are real valued vectors and all $V_n$ are real symmetric matrices.
\item[{\rm (A3)}] For all $n$, $\Ups_n$ and $\Phi_n$ are co-linear, i.e. $\Phi_n = e^{i\theta_n} \Ups_n$ for some real phases $\theta_n$. 
\end{enumerate}
\end{assumption}
In fact, (A1) is the key assumption, (A2) and (A3) are special cases which imply (A1).

\begin{prop}\label{prop-limit-point}
\begin{enumerate}[{\rm (i)}]
\item The operator $\Hh^+$ is in the limit point case if and only if one of the deficiency indices $d^+_\pm$ is zero.
Similarly $\Hh^-$ is in the limit point case if and only if one of the deficiency indices $d^-_\pm$ is zero.
\item Assume that {\bf one} of the assumptions {\rm (A1)} or {\rm (A2)} or {\rm (A3)} listed above hold.
Then, $d^+_+=d^+_-,\,d^-_+=d^-_-$ and $d_+=d_-$. The operator $\Hh^\pm_{\min}$ is essentially self-adjoint if and only if $\Hh^\pm$ is in the limit point case,
$\Hh_{\min}$ is essentially self-adjoint if and only if both operators $\Hh^\pm$ are in the limit point case.
\end{enumerate}
\end{prop}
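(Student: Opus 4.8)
The plan is to reduce both statements to controlling when the formal Weyl solution $\bigoplus_n\Psi^u_{z,n}$ belongs to $\ell^2$. As observed in the self-adjointness discussion of Section~\ref{sec:selfadj}, for $\pm\im(z)>0$ the space $\ker(\Hh^+_{\max}-z)$ is either spanned by $\bigoplus_n\Psi^u_{z,n}$ (if this vector is square summable) or is $\{0\}$; hence $d^+_\pm\in\{0,1\}$ with $d^+_\pm=1$ exactly when $\sum_n\|\Psi^u_{z,n}\|^2<\infty$ for $\pm\im(z)>0$. For part (i) I would first record that the limit-circle case is equivalent to \emph{all} solutions of the transfer matrix equation at $z$ being square summable, for some — hence, by the classical Weyl variation-of-parameters invariance argument, for every — non-real $z\notin B_\infty$; since $\Psi^u_z$ and $\Psi^w_z$ span the solution space there, this means $\Psi^u_z,\Psi^w_z\in\ell^2$. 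Taking $z=\pm i$ gives at once: limit-circle $\Rightarrow d^+_+=d^+_-=1$.

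For the converse I would invoke von Neumann theory: if $d^+_+=d^+_-=1$, then $\Hh^+_{\min}$ has self-adjoint extensions, a $\Un(1)$-family. Since $\langle P_1\Ups_1,\Psi^u_{z,n}\rangle=u_{z,1}=1\neq0$, the vector $P_1\Ups_1$ is not orthogonal to the one-dimensional deficiency spaces, so distinct extensions $H$ have distinct values $m_H(z)=\langle P_1\Ups_1,(H-z)^{-1}P_1\Ups_1\rangle$; as every such $m_H(z)$ lies in every Weyl disk $\overline{D_N(z)}$, the nested intersection $\bigcap_N\overline{D_N(z)}$ is nondegenerate, i.e.\ $\Hh^+$ is in the limit-circle case. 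Thus limit-circle $\Leftrightarrow(d^+_+=1$ and $d^+_-=1)$, and negating (both indices lie in $\{0,1\}$) gives limit-point $\Leftrightarrow(d^+_+=0$ or $d^+_-=0)$; the statement for $\Hh^-$ follows identically with $n$ replaced by $-n$, proving (i).

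For part (ii), since $\Hh^\flat_{\min}$ is essentially self-adjoint iff both its deficiency indices vanish and $d_\pm=d^+_\pm+d^-_\pm$, it suffices, given (i), to prove $d^+_+=d^+_-$ and $d^-_+=d^-_-$ under any of (A1)--(A3). Hypotheses (A2) and (A3) reduce to (A1): under (A3), $\Phi_n=e^{i\theta_n}\Ups_n$ gives $\beta_{z,n}=e^{i\theta_n}\alpha_{z,n}$, $\gamma_{z,n}=e^{-i\theta_n}\alpha_{z,n}$, so $|\det(T_{z,n})|=|\gamma_{z,n}/\beta_{z,n}|=1$ and hence $|\det(T_{z,0,n})|\equiv1$; under (A2) one uses that $|\det(T_{z,0,n})|\,|\det(T_{\bar z,0,n})|=1$ always (apply Lemma~\ref{lem-bar-u} at $z$ and at $\bar z$ and take moduli) while reality of $V_n,\Ups_n,\Phi_n$ forces $|\det(T_{\bar z,0,n})|=|\det(T_{z,0,n})|$, so again $|\det(T_{z,0,n})|\equiv1$ (alternatively, (A2) makes $\Hh^+$ a real operator and complex conjugation intertwines the two deficiency spaces directly). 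For the core case (A1), set $\Sigma_N(z):=\sum_{k=1}^N\|\Psi^u_{z,k}\|^2$. Lemma~\ref{lem-sum-est} with the solution $(u,\tilde u)$ (using $\tilde u_{z,0}=0$) gives $\im(z)\Sigma_N(z)=\im(a_{N+1}\,\overline{u_{z,N+1}}\,\tilde u_{z,N})$; substituting the relations $\overline{u_{\bar z,N+1}}=u_{z,N+1}/\det(T_{z,0,N})$ and $\overline{\tilde u_{\bar z,N}}=\tilde u_{z,N}/\det(T_{z,0,N})$ of Lemma~\ref{lem-bar-u} into the same identity at $\bar z$ yields $\Sigma_N(\bar z)=\Sigma_N(z)\,|\det(T_{z,0,N})|^{-2}$. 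Under (A1) the last factor is bounded above and below uniformly in $N$, so $\Sigma_\infty(z)<\infty\Leftrightarrow\Sigma_\infty(\bar z)<\infty$, i.e.\ $d^+_+=d^+_-$; the same computation with sums over $k=-N,\dots,0$ and the minus-direction transfer matrices (also controlled by (A1)) gives $d^-_+=d^-_-$. Combining with (i): $\Hh^+$ limit-point $\Leftrightarrow d^+_+=0$ or $d^+_-=0\Leftrightarrow d^+_+=d^+_-=0\Leftrightarrow\Hh^+_{\min}$ essentially self-adjoint, and likewise for $\Hh^-$; finally $\Hh_{\min}$ is essentially self-adjoint iff $d_+=d_-=0$ iff all four half-line indices vanish iff both $\Hh^+$ and $\Hh^-$ are in the limit-point case.

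The main obstacle is the computational identity $\Sigma_N(\bar z)=\Sigma_N(z)\,|\det(T_{z,0,N})|^{-2}$, which is where Lemmas~\ref{lem-sum-est} and~\ref{lem-bar-u} must be combined carefully (keeping track of the fact that the lemma's bar denotes ordinary complex conjugation, not evaluation at $\bar z$); secondary care is needed in part (i) to justify that ``limit-circle $\Leftrightarrow$ all solutions square summable'' is genuinely $z$-independent in this generality — this is fine because whenever all solutions at $z_0$ are square summable one has $\Sigma_\infty(z_0)<\infty$, which together with the nestedness of the Weyl disks forces $|\det(T_{z_0,0,n})|$ to stay bounded, so the Green kernel underlying the classical invariance argument remains well behaved.
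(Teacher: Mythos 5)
Your part (ii) is essentially sound and runs parallel to the paper: the identity $\Sigma_N(\bar z)=\Sigma_N(z)\,|\det(T_{z,0,N})|^{-2}$ for $\Sigma_N(z):=\sum_{k=1}^N\|\Psi^u_{z,k}\|^2$ does follow from Lemma~\ref{lem-sum-est} combined with Lemma~\ref{lem-bar-u} exactly as you indicate, and your reductions of (A2) and (A3) to (A1) (via $\det T_{z,n}=\gamma_{z,n}/\beta_{z,n}$) are correct. The genuine gap is in part (i), on which part (ii) logically depends. You found your argument on the claim that the limit-circle case is equivalent to \emph{all} solutions at a fixed non-real $z$ (i.e.\ both $\Psi^u_z$ and $\Psi^w_z$) being square summable, and you import this equivalence, its $z$-independence, and the fact that the $m$-functions of all self-adjoint extensions lie in every Weyl disk, from classical Weyl theory. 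None of these transfers is automatic here: the one-step matrices satisfy $\det T_{z,n}=\gamma_{z,n}/\beta_{z,n}$, which is neither $1$ nor of modulus $1$ in general, so the constancy of the Wronskian — on which both the classical variation-of-parameters invariance and the classical description of the disks rest — fails. The paper instead \emph{computes} the radius of the $n$-th Weyl circle from the M\"obius representation $m_{n,c}(z)=a_1^{-2}\,T^{-1}_{z,0,n}\cdot c$, obtaining $2r_{z,n}=|\det(T_{z,0,n})|\big/\bigl(|\im(z)|\,\Sigma_n(z)\bigr)$, and then combines $r_{z,n}=r_{\bar z,n}$ with $|\det(T_{z,0,n})\det(T_{\bar z,0,n})|=1$ (both from Lemma~\ref{lem-bar-u}) to get $4|\im(z)|^2r_{z,n}^2=\bigl(\Sigma_n(z)\,\Sigma_n(\bar z)\bigr)^{-1}$. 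The correct characterization is therefore: limit circle iff $\Psi^u_z\in\ell^2$ \emph{and} $\Psi^u_{\bar z}\in\ell^2$ — the distinguished solution at $z$ and at $\bar z$, not the two solutions at the same $z$. These coincide for Jacobi operators (where $\Psi^u_{\bar z}=\overline{\Psi^u_z}$), but not obviously here, so your starting equivalence is unproven and, absent (A1), possibly not even the right statement.

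A secondary issue is your converse direction in (i): even granting the von Neumann picture, you would still have to show in this setting that distinct self-adjoint extensions yield distinct values $m_H(z)$ and that each such value lies in $\bigcap_N\overline{D_N(z)}$, where the disks are defined through the cut-off operators $\Hh_{N,c}$; neither is established. The radius formula above makes all of this unnecessary: it identifies the limit-point case with $\Sigma_\infty(z)\,\Sigma_\infty(\bar z)=\infty$, which — since $\ker(\Hh^+_{\max}-z)$ is spanned by $\bigoplus_n\Psi^u_{z,n}$ when that vector is square summable and is trivial otherwise, as in the proof of Proposition~\ref{prop-unique-sa} — is exactly the statement that $d^+_+=0$ or $d^+_-=0$. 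I would rework part (i) along these lines; your part (ii) can then be kept essentially as written.
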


One may note that (A3) is particularly always satisfied for Jacobi operators. From now on we assume that $\Hh_{\min}$ is essentially self-adjoint.

\begin{proof}
Let
$
\smat{a&b\\c&d}\,\cdot\,x\,=\,\frac{ax+b}{cx+d}
$
denote the Möbius action.
From the equations \eqref{eq-def-m-fct} and definition of $x^{(n,c)}_z,\,\tilde x^{(n,c)}_z$ we get $T_{z,0,n}\cdot(a_1^2 m_{n,c}(z))=c$. Hence, using \eqref{eq-T-u-v} we get
\begin{equation}\label{eq-m}
m_{n,c}(z)\,=\,\frac{1}{a_1^2} \;T^{-1}_{z,0,n}\,\cdot\,c\,=\,
\frac{\tilde w_{z,n} c\,-\,a_{n+1} w_{z,n+1}}{ a_{n+1} u_{z,n+1}\,-\,\tilde u_{z,n} c}
\;.
\end{equation}
Taking $c\in \RR$ and $z\not\in \RR$ forms a circle in the upper (for $\im(z)>0$) or lower (for $\im(z)<0$) half-plane.
Moreover, for taking $\im(c)>0$ for $\im(z)>0$  (or $\im(c)<0$ for $\im(z)<0$) the Möbius action gives a point inside the circle.
In particular, as 
$$
T^{-1}_{z,0,n+1}\,\cdot\,c\,=\,T^{-1}_{z,0,n}\,\cdot\,(T^{-1}_{z,n+1}\,\cdot\,c)
$$
the $n+1$st circle lies strictly inside the $n$-th circle and the radius $r_{z,n}$ of the $n$-th circle is decreasing.
Depending on whether the radius converges to zero or not, we either have a limit point or a limit circle. The radius of the $n$-th circle satisfies
\begin{align*}
2\,r_{z,n}\,&=\,\sup_{c\in\RR} \left|\frac{\tilde w_{z,n} c\,-\, a_{n+1} w_{z,n+1}}{ a_{n+1} u_{z,n+1}\,-\,\tilde u_{z,n} c}\,+\,\frac{w_{z,n+1}}{u_{z,n+1}} \right|
\,=\,\sup_{c\in\RR} \left|\frac{\det(T_{z,0,n})\,c/a_{n+1}}{(a_{n+1}u_{z,n+1}\,-\,\tilde u_{z,n} c)\overline{u}_{z,n+1}}  \right| \notag \\
&=\,\sup_{c'\in \RR}\,\frac{|\det(T_{z,0,n})|}{\big|\;a_{n+1}\left( |u_{z,n+1}|^2 c'\,-\, \overline{u}_{z,n+1} \tilde u_{z,n}\,\right)\big|}\,=\,
\left|\frac{\det(T_{z,0,n})}{\im(a_{n+1} \overline{u}_{z,n+1} \tilde u_{z,n})}\right|
\end{align*}
In the last line it is not hard to see that the supremum is indeed achieved in the case where the denominator is purely imaginary as the imaginary part stays fixed whereas the real part varies through the whole of $\RR$.
Using Lemma~\ref{lem-bar-u} one can deduce that $r_{z,n}=r_{\bar z,n}$ which also follows from $m_{n,c}(\bar z)\,=\,\overline{m}_{n,c}(z)$ for real $c$.
\begin{equation}
 2\,r_{z,n}\,=\,\frac{\left|\det(T_{z,0,n})\right|}{|\im(z)|\sum_{k=1}^n \|\Psi^u_{z,k}\|^2}\,=\,
 \frac{\left|\det(T_{\bar z,0,n})\right|}{|\im(z)|\sum_{k=1}^n \|\Psi^u_{\bar z,k}\|^2}\,=\,2\,r_{\bar z,n}\;.
\end{equation}
Using the identities in Lemma~\ref{lem-bar-u} we get $|\det(T_{z,0,n})\,\det(T_{\bar z, 0,n})\,|=1$ and hence
\begin{equation}
 4\,|\im(z)|^2\,r_{z,n}^2\,=\,\frac{1}{\left[\sum_{k=1}^n \|\Psi^u_{z,k}\|^2\right]\left[\sum_{k=1}^n \|\Psi^u_{\bar z,k}\|^2\right]}
\end{equation}
and hence we are in the limit point case if and only if $\left[\sum_{k=1}^\infty \|\Psi^u_{z,k}\|^2\right]\left[\sum_{k=1}^\infty \|\Psi^u_{\bar z,k}\|^2\right]=\infty$ which is equivalent to one of the sums approaching infinity.
By the above arguments this is also equivalent to one of the deficiency indices $d^+_\pm$ being zero, showing part (i).

If for some $z$ the determinant $\det(T_{z,0,n})$ is uniformly bounded away from 0 and $\infty$, then the above relations show that $\sum_{n=1}^\infty \|\Psi^u_{z,n}\|^2=\infty \Leftrightarrow \sum_{n=1}^\infty \|\Psi^u_{\bar z,n}\|^2=\infty$ and therefore,
$d^+_+=d^+_-$. This shows part (ii) under additional assumption (A1).

If (A2) is true, i.e. $\Ups_n, \Phi_n, V_n$ are all real valued, then
$$
\beta_{z,n}=\Ups_n^* (V_n-z)^{-1} \Phi_n=\Ups_n^\top (V_n-z)^{-1} \Phi_n = \Phi_n^\top (V_n^\top-z)^{-1} \Ups_n=\Phi_n^*(V_n-z)^{-1} \Ups_n=\gamma_{z,n}.
$$
Hence, $\det(T_{z,n})=1$ and $\det(T_{z,0,n})=1$ for any $z$ and any $n$. Therefore, assumption (A1) is correct.

In case of (A3), $\Ups_n=e^{i\theta_n} \Phi_n$ we find $\gamma_{z,n}=e^{-2i\theta_n}\beta_{z,n}$ and $\det(T_{z,n})=e^{2i\theta_n}$. Thus, we get $|\det(T_{z,0,n})|=1$ for any $z$ and any $n$ giving again (A1).
\end{proof}

\subsection{Spectral measures}

In this section we assume that $\Hh$ (and hence also $\Hh^+$ and $\Hh^-$) are uniquely self-adjoint, in particular, we have the limit point case and we find uniformly for any sequence $(a_n)_n$ of real numbers
and fixed $z\not\in\RR$ that
\begin{align} \label{eq-wc-conv}
 m^+(z)&:=\langle P_1 \Ups_1; (\Hh^+-z)^{-1} P_1\Ups_1\rangle\,=\,\lim_{n\to\infty} m_{n,a_n}(z)\;
\end{align}
Indeed, since $\Dd^+_{\min}$ is a core, and $\Hh_{n,a_n}\Psi \to \Hh^+ \Psi$ for any $\Psi\in \Dd^+_{\min}$, we have strong resolvent convergence, i.e. $(\Hh_{n,a_n}-z)^{-1} \to (\Hh^+-z)^{-1}$ for $z\not \in \spec(\Hh^+)$, strongly.
Thus, the above equation holds for all $z$ not in the spectrum.

The uniformity of the convergence (independent of the sequence $a_n$) follows from the shrinking Weyl circle
for fixed $z\not\in (\RR\cup B_{1,\infty})$. 
In fact, one can replace $m_{n,a_n}(z)$ by any value inside the Weyl circle and formally take $a_n$ in the upper half plane ($\im(a_n)>0$)
using \eqref{eq-m} to define $m_{n,a_n}(z)$ in this case. 
Such a choice of $a_n$ corresponds to a spectral average in the sense of Herglotz functions. This means, introducing the spectral probability measures $\mu_{\Ups_1}^+$ and $\mu_{\Ups_1}^{(n,c)}$ of 
the operators $\Hh^+$ and $\Hh_{n,c}$ at the vector $P_1 \Ups_1$ one has 
$$
m_{n,c}(z)\,=\,\int_\RR (x-z)^{-1}\,\mu_{\Ups_1}^{(n,c)}(dx)\;,\quad m^+(z)\,=\,\int_\RR (x-z)^{-1}\,\mu_{\Ups_1}^+(dx)\;.
$$
Then, the measures $\mu^{(n,c)}_{\Ups_1}$ for $c$ in the upper half plane are averages over the $\mu^{(n,c)}_{\Ups_1}$ for $c$ along the real line. 
Thus, they define holomorphic functions for $\im(z)\neq 0$ (even though \eqref{eq-m} formally only can be used as a definition for $z\not\in B_{1,n}$).
Using the uniform bounds $|m(z)|\leq \frac{1}{|\im(z)|}$ for all these $m$-functions implying local equi-continuity, it is clear that the convergence
\eqref{eq-wc-conv} still holds for all $z\not\in\RR$ and any sequence $a_n$ in the upper half plane and the convergence is locally uniform in $z$.

\begin{proof}[Proof of Theorem~\ref{th:H+msr}] For simplicity we will use notations $\mu_{n,c}$ for $\mu^{(n,c)}_{\Ups_1}$ and
$\mu^+$ for $\mu_{\Ups_1}^{+}$.
Averaging $\mu_{n,c}$ over the standard Cauchy distribution leads to the choice $c=i$, $\mu_{n,i}$.
As the corresponding Herglotz functions converge in the upper half plane, the sequence $\mu_{n,i}$ converges weakly to the spectral measure $\mu^+$.
Taking a real value $z=\lambda\in\RR$, $\lambda\not \in B_{1,n}$ and using \eqref{eq-m} and the identities of 
Lemma~\ref{lem-bar-u} we find
$$
m_{n,i}(\lambda)=\frac{i-\tilde w_{\lambda,n} \overline{\tilde u}_{\lambda,n} - a_{n+1}^2 w_{\lambda,n+1} \bar u_{\lambda,n+1}}{a_{n+1}^2 |u_{\lambda,n+1}|^2\,+\,|\overline{\tilde u}_{\lambda,n}|^2}\;.
$$
In particular, for $\lambda\not\in B_{1,n}$, $m_{n,i}(z)$ extends holomorphically to $\lambda$ and the corresponding measure $\mu_{n,i}$ is absolutely continuous on $\RR\setminus B_{1,n}$.
Now, note that for real energies all entries of the transfer matrix $T_{\lambda,n}$ have the same complex phase as $\beta_{\lambda,n} T_{\lambda,n}$ is real valued. 
Hence, also all entries of the products $T_{\lambda,0,n}$ have the same phase. By the representation \eqref{eq-T-u-v} this means that $\tilde w_{\lambda,n} \overline{\tilde u}_{\lambda,n}$ and 
$w_{\lambda,n+1} \bar u_{\lambda,n+1}$ are real. Hence,
$$
\im(m_{n,i}(\lambda))=\frac1{a_{n+1}^2 |u_{\lambda,n+1}|^2\,+\,|\overline{\tilde u}_{\lambda,n}|^2}=\left\| T_{\lambda,0,n} \pmat{a_1 \\ 0} \right\|^{-2}
$$
and by general theory of Stielties transforms of measures, 
$$
\mu_{n,i}(d\lambda)\,=\,\pi^{-1} \left\| T_{\lambda,0,n} \smat{a_1 \\ 0} \right\|^{-2}{\rm d}\lambda\,+\,\nu_{n}(d\lambda)
$$
where $d\lambda$ denotes the Lebesgue measure on the real line and $\nu_{n}(d\lambda)$ is a positive point measure supported on the finite set $B_{1,n}\cap \RR$.
Clearly, this measure corresponds to values $\lambda$ which are eigenvalues of $\Hh_{n,c}$ for $c \in I$ where $I\subset \RR$ has positive measure.
By rank one perturbation arguments at $\phi_n$, this must correspond to some eigenfunction orthogonal to $P_n\phi_n$ which in fact is an eigenfunction $\Psi$ of $\Hh_{n,c}$ for all $c$
satisfying $\Phi_n^*\Psi_n=0=\Ups_n^*\Psi_n$ and it leads to a compactly supported eigenfunction of $\Hh^+$ as classified in Theorem~\ref{th-compact-eig}.
Vice versa, it is easy to see that any fixed compactly supported eigenfunction $\Psi$ of $\Hh^+$ there exists $n_0$ (upper bound of support) such that 
$\Psi$ is an eigenfunction of $\Hh_{n,c}$ for all $c\in\RR$ if and only if $n>n_0$. 

Therefore, the positive point measures $\nu_{n}(d\lambda)$ are increasing in $n$ and upper bounded (total mass is bounded by one) and hence converge weakly to a point measure $\nu^+$ supported on the countable set $B_{1,\infty}$.
This proves part (i).

Let us now generally define $m^+_\Psi(z):=\langle \Psi;(\Hh^+-z)^{-1} \Psi\rangle$.
From Proposition~\ref{prop:green} and strong resolvent convergence we find for $\varphi\in\VV_n\subset \ell^2(S_n)$ and $z\not\in B_{1,n} \cup \RR$ that
\begin{align}
 m^+_{P_n\varphi}(z)\,&=\,m^+(z)\,\varphi^*\Psi^u_{z,n} (\Psi^u_{\bar z, n})^*\varphi\, +\,\varphi^*(V_n-z)^{-1} \varphi \,+ \notag \\
 &\varphi^* \Phi_{z,n}^v (\Psi^u_{\bar z,n})^* \varphi\,+\,\varphi^* \Ups_{z,n}^u (\Psi^w_{\bar z,n})^* \varphi\,.
\end{align}
As all entries of $T_{\lambda,n}$ have the same complex phase for real $\lambda\in\RR\setminus B_{1,n}$ we find $w_{\lambda,n+1} \bar u_{\lambda,n+1} \in \RR$ and $\tilde w_{\lambda,n} \overline{\tilde u}_{\lambda,n} \in \RR$.
Using the definition \eqref{eq-def-psi_z} this leads to 
$$
\im\left(\varphi^* \Phi_{\lambda,n}^v (\Psi^u_{\lambda,n})^* \varphi\,+\,\varphi^* \Ups_{\lambda,n}^u (\Psi^w_{\lambda,n})^* \varphi \right)\,=\,0
$$
for $\lambda\in \RR \setminus (B_{1,n} \cup \spec(V_n|\VV_n))$. 
Now, the values $\lambda\in \spec(V_n|\VV_n)$ where $T_{\lambda,n}$ is defined are simple eigenvalues with eigenvector ${\bfa}\in\VV_n$ which has overlap to $\Ups_n$ and $\Phi_n$.
A simple calculation shows that the singularities of
$$
\varphi^*(V_n-z)^{-1} \varphi\,+\,\varphi^* \Phi_{z,n}^v (\Psi^u_{\bar z,n})^* \varphi\,+\,\varphi^* \Ups_{z,n}^u (\Psi^w_{\bar z,n})^* \varphi
$$
cancel at such $z=\lambda$. Thus, for $z\to \lambda$ with $\lambda\in\RR\setminus B_{1,n}$ and $\im(z)>0$ we find
$$
\im\left(m^+_{P_n\varphi}(z)\right)\,-\,\im\left(m^+(z)\,\varphi^*\Psi^u_{z,n} (\Psi^u_{\bar z, n})^*\varphi \right)\,\to\,0\;.
$$
Similar arguments as in \cite[Lemma~A.2~(ii)]{Sa3} therefore give part (ii).

\vspace{.2cm}

Before considering the full operator, we want to have a look at some identities again.
But this time we add a right boundary condition and let $\Hh_{b,N,c}=\Hh_{N,c}-b\,a_1^2\,P_1\Ups_1 \Ups_1^* P_1^*$.
Fixing $b$ this would amount to the boundary condition $b a_1 \Ups_1^*\Psi_1=\Phi_0^* \Psi_0$ or $\tilde x_0=b a_1 x_1$. Hence, the base solution $(u_z,\tilde u_z)$ has to be replaced by
$(u_z+b a_1^2 w_z,\tilde u_z + b a_1^2 w_z)$ which by \eqref{eq-T-u-v} amounts to replacing $T_{z,n,0}$ by $T_{z,n,0}\smat{1 & 0 \\ b & 1}$.
Alternatively, redefining $V_1$ to $V_1-b a_1^2 \Ups_1 \Ups_1^*$ and resolvent identities also give the change $T_{z,1} \to T_{z,1} \smat{1 & 0 \\ b & 1}$.
Therefore, fixing $b$ and spectral averaging over $c$ as above\footnote{cf. taking $c=i$ in the Möbius transform for the $m$-function} for $\Hh_{b,N,c}$ leads to the measures $\mu^{(b,N)}_\psi$.
Similar as above, we then have
\begin{equation}\label{prTh3-1}
a_1^2\,\mu^{(b,N)}_{\Ups_1}(d\lambda)\,=\,\frac{1}{\pi}\, \| T_{\lambda,0,n} \smat{1 & 0 \\ b & 1} \smat{1\\0}\,\|^{-2}\,d\lambda\,+\,\nu_{b,N}(d\lambda)
\end{equation}
and for $\lambda \not \in B_{1,N}$, $\varphi\in \VV_k$
\begin{equation}\label{prTh3-2}
\mu_{P_k\varphi}^{(b,N)}(d\lambda)\,=\,|\varphi^* \psi^x_{\lambda,k} |^2\,\mu^{(b,N)}_{\Ups_1}(d\lambda)\;
\end{equation}
where $\psi^x_{\lambda,k}$ belongs to the solution $(x_z,\tilde x_z)$ to the transfer matrix equation with $x_{z,1}=1$ and $\tilde x_{z,0}=b a_1$, i.e. $x_z=u_z+b a_1^2 w_z$, $\psi^x_{z,n}=\psi^u_{z,n}+b a_1^2 \psi^w_{z,n}$.
In particular, we find for $\lambda\not\in B_\infty$. $x,y \in \RR$
\begin{equation}\label{prTh3-3}
\left(x^2 \mu^{(b,N)}_{\Ups_{l+1}}\,+\,y^2 \mu^{(b,N)}_{\Phi_{l}}\right)(d\lambda)\,=\,\left\|\smat{x a_{l+1}^{-1} & \\ & y}T_{\lambda,0,l} \smat{a_1 \\ ba_1} \right\|^2\,\mu^{(b,N)}_{\Ups_1}(d\lambda)
\end{equation}
Let us now shift the indices and consider $\Hh_{[m,n]}$, i.e. $\Hh$ restricted to $\bigoplus_{k=m+1}^n \ell^2(S_k)$ with left and right boundary conditions $(b,c)$ as for the operator $\Hh_{b,N,c}$. 
The corresponding spectral averages over the right boundary are denoted by $\mu^{b,(m,n)}_\psi$. 
Then, shifting the indices we get from \eqref{prTh3-1} and \eqref{prTh3-3} that for $m<0<n$,
\begin{align}
& \left(a_1^2 \mu^{b,(m,n)}_{\Ups_1}+\mu^{b,(m,n)}_{\Phi_0}-\nu^{(m,n)}\right)(d\lambda)\,=\,  \frac{d\lambda}{\pi} \; \frac{\|T_{\lambda,m,0}\smat{1\\b}\|^2}{\|  T_{\lambda,0,n} T_{\lambda,m,0} \smat{1\\b}\|^2}\,.
\end{align}
The positive point measure $\nu^{(m,n)}(d\lambda)$ is supported on $B_{m,n}\cap \RR$ and induced by the boundary-condition-independent eigenfunctions
as in Theorem~\ref{th-compact-eig}.
Averaging over the left boundary leads to
\begin{align}
\int_\RR \frac{db}{\pi(1+b^2)} \frac{\|T_{\lambda,m,0}\smat{1\\b}\|^2}{\|  T_{\lambda,0,n} T_{\lambda,m,0} \smat{1\\b}\|^2}
 &=\int_\RR \frac{db}{\pi(1+b^2)} \frac{\|\smat{1\\b}\|^4}{\|T_{\lambda,0,n} \smat{1 \\ b}\|^2 \|T_{\lambda,m,0}^{-1}\smat{1\\b}\|^2} \\
 &= \int_0^\pi d\theta \left\| T_{\lambda,0,n} \smat{\cos(\theta) \\ \sin(\theta)} \right\|^{-2} \left\|T_{\lambda,m,0}^{-1}\smat{\cos(\theta)\\ \sin(\theta)}\right\|^{-2}
\end{align}
The equations come from changes of variables, $T_{\lambda,m,0} \smat{1 \\ b} = C \smat{1 \\ b'} = C \smat{1 \\ \tan(\theta)}$. 
Here one should note that for real energies $z=\lambda\in\RR$ the complex phases of all entries of the transfer matrix $T_{\lambda,m,0}$ are the same. Hence, $T_{\lambda,m,0}$ acts naturally on the real projective space
identifying $\smat{a\\b}$ with $\frac{a}{b} \in \RR \cup\{\infty\}$.
Using resolvent convergence, $m\to -\infty$ and $n\to\infty$ gives \eqref{eq-mu1} proving part (iii).

For the analogue of \eqref{prTh3-2} we need to take the solutions $(x_z, \tilde x_z)$ with 
$x_{z,m+1}=1$  and $\tilde x_{z,m}=b a_{m+1}$. Then,
$\smat{a_1 x_{z,1} \\ \tilde x_{z,0}}\,=\,T_{z,m,0} \smat{a_{m+1} \\ b a_{m+1}}$ and $x_z=x_{z,1} u_z+\tilde x_{z,0} a_1 w_z$, implying 
$$
\varphi^* \psi_{z,k}^w\,=\, \ovr{v}_{\varphi,z,k}^* \pmat{a_1^{-1} \\ & a_1} T_{z,m,0} \pmat{a_{m+1} \\ b a_{m+1}}\qtx{with} 
\ovr{v}_{\varphi,z,k}\,:=\,\pmat{(\Psi^u_{z,k})^*\varphi \\ (\Psi^w_{z,k})^*\varphi }
$$
Thus, \eqref{prTh3-2} and \eqref{prTh3-3} applied to this situation give for $\varphi\in \VV_k\subset \ell^2(S_k)$, $\lambda\in \RR \setminus B_{m,n}$, $m<0<n$, $m<k<n$, the Radon-Nikodym derivative
$$
\frac{{\rm d}\mu^{b,(m,n)}_{P_k \varphi}(\lambda)}{{\rm d} (\mu^{b,(m,n)}_{\Ups_1}+a_1^2\mu^{b,(m,n)}_{\Phi_0})(\lambda)}\,=\,
\frac{\left| \ovr{v}_{\varphi,\lambda,k}^* \smat{a_1^{-1} \\ & a_1}T_{\lambda,m,0} \smat{1 \\ b } \right|^2}{\left\|\smat{a_1^{-1} \\ & a_1}T_{\lambda,m,0}\smat{1\\b} \right\|^2}\,\leq\,\left\| \ovr{v}_{\varphi,\lambda,k}\right\|^2\,.
$$
Taking $m\to-\infty$ and $n\to\infty$ shows part (iv).
 \end{proof}


\section{Application to random operators}

\subsection{Proof of Theorem~\ref{th-random}}

We start with the folowing important lemma following the arguments of \cite{KLS}.

\begin{lemma}\label{lemma-T4-bound}
Let $T\in\exp(i\RR)\SL(2,\RR)$ with $|\Tr(T)<2|$ and let $W_n\in \Mat(2,\CC)$ be a sequence of independent random matrices such that
$$
\sum_{n=1}^{\infty} \big(\left\|\EE(W_n)\right\|\,+\,\EE(\|W_n\|^2+\|W_n\|^4)\, \big)\,=\,C\,<\,\infty,
$$
then,
$$
\sup_{n\in\NN} \,\EE\,\left(\left\|\,\prod_{k=1}^n (T+W_k)\,\right\|^4\right)\,<(2f(T))^4\,\exp(8f(T)C)\,<\,\infty
$$
where $f(T)$ is some continuous function in $T$ on the set $\{T\in \SL(2,\RR)\,:\, |\Tr(T)|<2\}$.
\end{lemma}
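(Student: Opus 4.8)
The plan is to reduce to the case where the non-random factor is a genuine unitary of $\CC^2$ and then to run a second-moment recursion that feeds on the \emph{cancellation} built into the well-balanced hypothesis rather than on $\EE\|W_n\|$. First I would write $T=e^{ir}T'$ with $r\in\RR$ and $T'\in\SL(2,\RR)$; since $|\Tr(T')|=|\Tr(T)|<2$, the matrix $T'$ is elliptic and $T'\neq\pm\one$, hence conjugate over $\RR$ to a rotation, $T'=M R_\theta M^{-1}$ with $M\in\GL(2,\RR)$ and $R_\theta=\pmat{\cos\theta & -\sin\theta\\ \sin\theta & \cos\theta}$, $\theta\in(0,\pi)$. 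One can pin down $M$ canonically, e.g.\ by requiring that $M$ send $i$ to the unique fixed point $w(T')$ of $T'$ acting on $\{w\in\CC:\im(w)>0\}$ by M\"obius transformations; then $\kappa:=\|M\|\,\|M^{-1}\|\geq1$ is a continuous function of $T$, bounded on compact subsets of $\{|\Tr(T)|<2\}$ and blowing up only as $|\Tr(T)|\to2$ (equivalently $\im\,w(T')\to0$). Setting $\tilde W_k:=M^{-1}W_k M$ one has $T+W_k=M\,(U+\tilde W_k)\,M^{-1}$ with $U:=e^{ir}R_\theta$ a unitary matrix, whence $\prod_{k=1}^n(T+W_k)=M\big(\prod_{k=1}^n(U+\tilde W_k)\big)M^{-1}$; moreover the $\tilde W_k$ remain independent with $\|\tilde W_k\|\leq\kappa\|W_k\|$ and $\|\EE(\tilde W_k)\|\leq\kappa\|\EE(W_k)\|$. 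It thus suffices to bound $\EE\|P_n\|^4$ where $P_n:=\prod_{k=1}^n(U+\tilde W_k)$ (the order of the product being immaterial, as taking adjoints leaves norms unchanged).

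Next I would fix a unit vector $v\in\CC^2$, put $v_n:=(U+\tilde W_n)v_{n-1}$ with $v_0=v$, and $X_n:=\|v_n\|^2$. Since $U$ is unitary, $\|Uv_{n-1}\|=\|v_{n-1}\|$, and expanding $\|Uv_{n-1}+\tilde W_n v_{n-1}\|^2$ yields
\begin{equation*}
X_n=X_{n-1}+2\,\re\langle Uv_{n-1},\tilde W_n v_{n-1}\rangle+\|\tilde W_n v_{n-1}\|^2 .
\end{equation*}
Conditioning on $\mathcal F_{n-1}:=\sigma(W_1,\dots,W_{n-1})$ and using that $\tilde W_n$ is independent of $\mathcal F_{n-1}$, the conditional mean of the cross term only involves $\EE(\tilde W_n)$. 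Squaring this identity, applying $\EE(\,\cdot\mid\mathcal F_{n-1})$, and bounding $\EE\|\tilde W_n\|^3\leq\frac12\big(\EE\|\tilde W_n\|^2+\EE\|\tilde W_n\|^4\big)$ by Cauchy--Schwarz, one obtains
\begin{equation*}
\EE\big(X_n^2\mid\mathcal F_{n-1}\big)\ \leq\ X_{n-1}^2\,(1+\varepsilon_n)\,,
\end{equation*}
where $\varepsilon_n:=4\|\EE(\tilde W_n)\|+6\,\EE\|\tilde W_n\|^2+4\,\EE\|\tilde W_n\|^3+\EE\|\tilde W_n\|^4$; the hypothesis together with the bounds on $\tilde W_n$ gives $\sum_n\varepsilon_n\leq 8\kappa^4 C<\infty$. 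Taking expectations and iterating from $X_0=\|v\|^2=1$ then yields $\EE(X_n^2)\leq\prod_{k=1}^n(1+\varepsilon_k)\leq\exp\big(\sum_{k\geq1}\varepsilon_k\big)\leq e^{8\kappa^4 C}$, uniformly in $n$ and in the unit vector $v$; equivalently $\EE\|P_n v\|^4\leq e^{8\kappa^4 C}$.

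To pass to the operator norm I would apply this to $v=e_1$ and $v=e_2$ and use $\|A\|^2\leq\|A\|_{\mathrm{HS}}^2=\|Ae_1\|^2+\|Ae_2\|^2$ together with $(a+b)^2\leq2(a^2+b^2)$ for $2\times2$ matrices $A$, giving $\EE\|P_n\|^4\leq2\big(\EE\|P_ne_1\|^4+\EE\|P_ne_2\|^4\big)\leq4\,e^{8\kappa^4 C}$. Conjugating back multiplies the norm by at most $\kappa$, hence the fourth power by at most $\kappa^4$, so
\begin{equation*}
\sup_{n\in\NN}\ \EE\Big(\Big\|\prod_{k=1}^n(T+W_k)\Big\|^4\Big)\ \leq\ 4\kappa^4\,e^{8\kappa^4 C}\ \leq\ (2\kappa^4)^4\,e^{8\kappa^4 C}
\end{equation*}
(using $\kappa\geq1$), which is the asserted bound with $f(T):=\kappa^4$; by the first step $f$ is continuous on $\{|\Tr(T)|<2\}$.

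The point that needs care — and the reason both reductions are essential — is that one is \emph{not} given $\sum_n\EE\|W_n\|<\infty$: the hypothesis only makes the mean $\EE(W_n)$ summable, whereas $\EE\|W_n\|$ is merely $o(1)$ and typically not summable. A crude submultiplicative estimate $\|\prod(T+W_k)\|\leq\prod(\|T\|+\|W_k\|)$ is useless, since $\|T\|$ can be large for an elliptic $T\in\SL(2,\RR)$; conjugating $T$ to the unitary $U$ removes this, and conditioning on the past is exactly what lets the summable cancellation in $\EE(W_n)$ — rather than $\EE\|W_n\|$ — drive the recursion. The only other slightly delicate point is checking that $M(T')$, and hence $\kappa$ and $f$, can be chosen continuously in $T$ with the stated blow-up as $|\Tr(T)|\to2$, which follows from continuity of the elliptic fixed point $w(T')$.
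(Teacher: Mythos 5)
Your proposal is correct and follows essentially the same route as the paper: conjugate the elliptic $T$ to a unitary (at the cost of a continuous factor $f(T)$ coming from $\|M\|\,\|M^{-1}\|$), run a one-step recursion for $\EE\|P_n v\|^4$ in which independence lets the cross term be controlled by $\|\EE(W_n)\|$ rather than $\EE\|W_n\|$, bound $\EE\|W_n\|^3\le\tfrac12(\EE\|W_n\|^2+\EE\|W_n\|^4)$, exponentiate the product of $(1+\varepsilon_k)$, and pass from vectors to the operator norm via a basis. The only differences are cosmetic (explicit conditioning on $\mathcal F_{n-1}$ and the Hilbert--Schmidt bound in place of $\|A\|\le\|Ae_1\|+\|Ae_2\|$), and your correctly flagged observation that the whole point is the summability of $\|\EE(W_n)\|$ rather than of $\EE\|W_n\|$ is exactly the mechanism the paper exploits.
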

Here, we define products to the left, i.e. $\prod_{k=1}^n T_k= T_n T_{n-1}\cdots T_1$, but for the validity of the lemma this would not be important.

\begin{proof}
Such $T\in\exp(i\RR)\SL(2,\RR)$ with $|\Tr(T)<2|$ is equivalent to a unitary matrix, i.e. there is $B\in \GL(2,\RR)$ such that $B^{-1}TB=U\in {\rm U}(2)$.
Locally, $B$ can be chosen to depend continuously on $T$ and
therefore we find some continuous function $f(T)$ with $f(T)\geq \|B\| \|B^{-1}\|$. So conjugations by $B$ or $B^{-1}$ increases the norm at most by a factor $f(T)$
and we may assume $T=U$, i.e. $T^*T=\one$, and $f(T)=1$.

For any fixed unit vector $\bfv \in \CC^2$ let 
$$
r_n(\bfv)=\left\|\left(\prod_{k=1}^n  T+W_k\right )\bfv\right\|\,.
$$
Then, there is a unit vector $\bfw$ such that
$$
r_n^2(\bfv)\,=\,r_{n-1}^2(\bfv)\,\bfw^*( T+W_n)^*(T+W_n)\bfw\,=\,r_{n-1}^2\left(1+2\re(\bfw^*T^*W_n\bfw) +\|W_n\bfw\|^2\right)
$$
implying
\begin{align*}
r_n^4(\bfv)\,&=\leq\,r_{n-1}^4(\bfv)\left(1+4\re(\bfw^*T^*W_n\bfw)+6\|W_n\|^2+4\|W_n\|^3+\|W_n\|^4\right)
\end{align*}
where we use $|\re(\bfw^*T^*W_n\bfw)|\leq \|W_n\|$.
Now, $\|\EE(W_n^*T+T^*W_n)\|\leq 2 \|\EE(T^*W_n)\|=2\|\EE(W_n)\|$ and $2\EE(\|W_n\|^3)\leq \EE(\|W_n\|^2+\|W_n\|^4)$, hence
$$
\EE(r_n^4(\bfv))\,\leq\, \EE(r_{n-1}^4(\bfv))\,\left(1+4\| \EE(W_n) \| + \EE(8\| W_n\|^2+3\|W_n\|^4) \right)
$$
Here we use that $r_{n-1}$ is independent of $W_n$ as $W_n$ is independent of $W_1,\,W_2,\ldots,W_{n-1}$.
This implies
$$
\EE(r_n^4(\bfv))\,\leq\, \prod_{k=1}^n \exp\left(8(\|\EE(W_n)\|\,+\,\EE(\|W_n\|^2+\|W_n\|^4))\right) \leq \exp(8C)
$$
for any $n$ and any unit vector $\bfv$. \\
If $\bfv_1, \bfv_2$ is a basis of $\CC^2$, then
$\|M\|\leq \|M\bfv_1\|+\|M\bfv_2\|$ and therefore,
$$
\,\EE\left( \left\|\,\prod_{k=1}^n (T+W_k)\,\right\|^4\right)\,\leq \,2^4 \exp(8C)\;.
$$
Which proves the special case $T=U$ unitary. The general case follows by basis change as argued above.
\end{proof}

Now we can prove Theorem~\ref{th-random}:

\begin{proof}[Proof of Theorem~\ref{th-random}]
Under the assumptions we have $T_{\lambda,n}=T(\lambda)+W_{\lambda,n}$ with $T(\lambda)$ and $W_{\lambda,n}$ satisfying the conditions of Lemma~\ref{lemma-T4-bound}. Moreover, the bounds for this condition are uniform in $\lambda\in[a,b]\subset I$ for any compact intervals $[a,b]\in I$. Thus, Lemma~\ref{lemma-T4-bound} gives a bound on $\EE(\|T_{\lambda,0,n}\|^4)$ uniformly in $n\in\NN$ and $\lambda\in[a,b]\subset I$
 in compact intervals $[a,b]\subset I$. Therefore, by Fatou's Lemma and Fubini,
 $$
 \EE\liminf_{n\to\infty} \int_a^b \|T_{\lambda,0,n}\|^4 \,d\lambda\,\leq\,\liminf_{n\to\infty} \int_a^b \EE\left\|T_{\lambda,0,n}\right\|^4\,d\lambda\,<\,\infty\;.
 $$
Hence,
$$
\liminf_{n\to\infty} \int_a^b \|T_{\lambda,0,n}\|^4 \,d\lambda\,<\,\infty
$$
with probability one and the spectrum of $\Hh_\omega$ is absolutely continuous in $(a,b)$ with probability one and $[a,b]$ is in the spectrum with probability one by Theorem~\ref{th:ac-spec}.
The open set $I$ can be written as a countable union of intervals $(a,b)$ such that $[a,b]\subset I$.
\end{proof}

\subsection{Application to partial antitrees}

For the concrete examples we will show that the assumptions of Theorem~\ref{th-random} are satisfied in specific intervals.
The main point will be to treat the random variables $\alpha_{\lambda,n},\, \delta_{\lambda,n}$ and\footnote{$\beta_{\lambda,n}=\gamma_{\lambda,n}$ follows as we consider real valued random operators and $\lambda\in\RR$}
 $\beta_{\lambda,n}=\gamma_{\lambda,n}$. We have to find almost sure limiting expressions and bound
variances up to sufficient order.
To get these bounds we will use the following lemmata.

\begin{lemma}\label{lem-averages}
For any $\lambda\in I$ let be given a real random variable $X_\lambda$ with $\EE|X_\lambda|^{2K}<C$ uniformly in $\lambda\in I$, $K\in\NN$. 
Let $(X_{\lambda,j,n})_{j,n}$ be independent, identically distributed copies of $X_\lambda$ let $(s_n)_n$ be some sequence with $s_n\to\infty$ for $n\to\infty$.
Then  we find
$$
\frac{1}{s_n}\sum_{j=1}^{s_n} X_{\lambda,j,n}\xrightarrow[2K]{\Oo(s_n^{-1/2})} \EE(X_\lambda) \qtx{uniformly for} \lambda\in I\;.
$$
\end{lemma}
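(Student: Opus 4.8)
The plan is to reduce the statement to a standard moment bound for averages of i.i.d.\ centered random variables, so that the two requirements in Definition~\ref{def-tn-fluc} follow by elementary estimates. First I would center: put $Z_{\lambda,j,n}:=X_{\lambda,j,n}-\EE(X_\lambda)$, so that the quantity to control is $S_{\lambda,n}:=\frac{1}{s_n}\sum_{j=1}^{s_n}X_{\lambda,j,n}-\EE(X_\lambda)=\frac{1}{s_n}\sum_{j=1}^{s_n}Z_{\lambda,j,n}$, an average of $s_n$ i.i.d.\ mean-zero variables. Since $|\EE(X_\lambda)|\leq(\EE|X_\lambda|^{2K})^{1/(2K)}\leq C^{1/(2K)}$, the centered copies still satisfy $\EE|Z_{\lambda,j,n}|^{2K}\leq 2^{2K}\bigl(\EE|X_\lambda|^{2K}+|\EE(X_\lambda)|^{2K}\bigr)\leq 2^{2K+1}C$, uniformly in $\lambda\in I$.

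The well-balanced half of the definition is then immediate and costs nothing: $\EE\bigl(\frac{1}{s_n}\sum_j X_{\lambda,j,n}\bigr)=\EE(X_\lambda)$ exactly, so $\bigl|\EE(\tfrac{1}{s_n}\sum_j X_{\lambda,j,n})-\EE(X_\lambda)\bigr|=0$, which trivially satisfies the requirement $<C_0 t_n^2$ with, say, $C_0=1$ and $t_n=s_n^{-1/2}$. The remaining task is to exhibit, for each $k=1,\dots,2K$, a constant $C_k$ independent of $\lambda$ with $\EE\bigl(|S_{\lambda,n}|^{k}\bigr)\leq C_k\, s_n^{-k/2}$.

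For this I would only treat even exponents $k=2m\leq 2K$ and deduce the odd ones from Jensen's inequality, $\EE|S_{\lambda,n}|^{2m-1}\leq\bigl(\EE|S_{\lambda,n}|^{2m}\bigr)^{(2m-1)/(2m)}$. Expanding $\EE\bigl((\sum_j Z_{\lambda,j,n})^{2m}\bigr)=\sum_{j_1,\dots,j_{2m}}\EE(Z_{\lambda,j_1,n}\cdots Z_{\lambda,j_{2m},n})$, independence together with $\EE(Z_{\lambda,j,n})=0$ annihilates every term in which some index value occurs exactly once; the surviving tuples therefore have at most $m$ distinct index values, so there are at most $C'_m\, s_n^{m}$ of them, and each surviving term factors (by independence and equidistribution) into a product $\prod_i \EE(Z_{\lambda,1,n}^{m_i})$ with $m_i\geq 2$ and $\sum_i m_i=2m$, which is bounded in modulus by $\prod_i(\EE|Z_{\lambda,1,n}|^{2m})^{m_i/(2m)}=\EE|Z_{\lambda,1,n}|^{2m}\leq 2^{2m+1}C$ via the Lyapunov inequality. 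Hence $\EE\bigl((\sum_j Z_{\lambda,j,n})^{2m}\bigr)\leq c_m\, s_n^{m}$ with $c_m$ depending only on $m$ and $C$, and dividing by $s_n^{2m}$ gives $\EE\bigl(|S_{\lambda,n}|^{2m}\bigr)\leq c_m\, s_n^{-m}=c_m\,(s_n^{-1/2})^{2m}$. Since every constant depends only on $m$ (equivalently on $K$) and on $C$, all these bounds hold uniformly in $\lambda\in I$, which is exactly the asserted $\Oo(s_n^{-1/2})$-closeness up to $2K$ moments. One could equally well invoke the Marcinkiewicz--Zygmund (or Rosenthal) inequality here instead of running the combinatorial expansion by hand.

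I do not expect a genuine obstacle: this is a routine large-deviations-flavoured estimate. The two points that need a little attention are the bookkeeping of the paired indices in the moment expansion and the explicit verification that none of the constants depends on $\lambda$, since it is precisely that $\lambda$-independence which produces the uniform statement.
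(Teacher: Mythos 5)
Your proposal is correct and follows essentially the same route as the paper: reduce to the centered case, expand the $2m$-th moment of the average, kill all tuples in which some index occurs exactly once, count the surviving tuples (at most $C'_m s_n^m$ of them), bound each mixed moment by the top moment, and recover the odd moments by interpolation (you use Jensen where the paper uses Cauchy--Schwarz between consecutive even moments, which is immaterial). Your explicit remark that the well-balanced condition holds with error exactly zero, and your tracking of the $\lambda$-independence of all constants, match the paper's (more tersely stated) argument.
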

\begin{proof}
Without loss of generality we may assume $\EE(X_\lambda)=0$. Furthermore we ignore the dependence on $\lambda$ and let $X_{j}$ be independent identically distributed random variables 
with $\EE(X_j)=0$, $\EE|X|^K<C$. Uniformity in $\lambda$ is established by the explicit dependence of the bounds on $C$.
Let $Y=\frac{1}{m}  \sum_{j=1}^m X_j$. Then
$\EE(Y)=0$ and letting $c_{k,l}$ denote the number of partitions of $\{1,\ldots,2k\}$ into $l$ sets where each of the sets has at lest two elements, then we have for $k\leq K$,
\begin{align*}
\EE (Y^{2k}) &= \frac{1}{m^{2k}} \sum_{i_1,\ldots,i_{2k}=1}^m \EE( X_{i_1} X_{i_2}\cdots X_{i_{2k}}) \\ &\leq \sum_{l=1}^k \frac{c_{k,l}m^{l}}{m^{2k}} \EE|X_1|^{2k}
\leq \frac{1}{m^{k}} (1+C) \sum_{l=1}^k c_{k,l} \,.
\end{align*}
The first inequality we used that $\EE(X_{i_1} \cdots X_{i_{2k}})=0$ if one of the indices does not appear twice and $\EE|X_{i_1} \cdots X_{i_{2k}}|\leq \EE(\sum |X_{i_j}|/k)^k\leq \EE|X_1|^{2k}$ following from the geometric - arithmetic mean inequality and convexity of
$x\mapsto x^k$. Note that this implies $\EE|Y|^{2k}\leq C_k m^{-k}$ with $C_k$ only depending on $k$ and the constant $C$.
Furthermore $\EE|Y|^{2k-1}\leq \sqrt{\EE|Y|^{2k-2}} \sqrt{\EE|Y|^{2k}} \leq \sqrt{C_{k-1} C_k} m^{-k+1/2}$.  
Using $m=s_n$ yields the claim.
\end{proof}

\begin{lemma}\label{lem-ot-close}
Let $(t_n)_{n\in\NN}$ be some sequence of positive elements with $t_n\to 0$ for $n\to\infty$. Let $X^{(j)}_{\lambda,n}$ be families of real random variables with
$X^{(j)}_{\lambda,n}\xrightarrow[2K]{\Oo(t_n)} X^{(j)}_\lambda$ uniformly in $\lambda\in I$, $j=1,\ldots,k$, with $K\geq 1$. Then, the following hold:
\begin{enumerate}[{\rm (i)}]
\item $X^{(1)}_{\lambda,n}+X^{(2)}_{\lambda,n}\xrightarrow[2K]{\Oo(t_n)} X^{(1)}_{\lambda}+X^{(2)}_{\lambda}$ uniformly in $\lambda$
\item If the support of the distribution of the random variables $X^{(1)}_{\lambda,n}$ is uniformly bounded away from zero in $\lambda$ and $n$, 
then $1/X_{\lambda,n}^{(1)} \xrightarrow[2K]{\Oo(t_n)} 1/X_\lambda^{(1)}$ uniformly in $\lambda$.
\item Let $f\in C^2(\RR^k,\RR)$ and assume that $\nabla f(X_\lambda^{(1)},\ldots,X_\lambda^{(k)})$ and the Hessian of $f$ are uniformly bounded, then 
$$f(X_{\lambda,n}^{(1)},X_{\lambda,n}^{(2)},\ldots,X_{\lambda,n}^{(k)}) \xrightarrow[K]{\Oo(t_n)} f(X_{\lambda}^{(1)},X_{\lambda}^{(2)},\ldots,X_{\lambda}^{(k)})$$ uniformly in $\lambda$.\\
In particular,  if $X^{(j)}_{\lambda}$ are uniformly bounded then $X^{(1)}_{\lambda,n}X^{(2)}_{\lambda,n}\xrightarrow[K]{\Oo(t_n)} X^{(1)}_\lambda X^{(2)}_\lambda$ uniformly in $\lambda$.
\end{enumerate}
\end{lemma}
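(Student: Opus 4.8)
Recall (Definition~\ref{def-tn-fluc}) that $X_n \xrightarrow[2K]{\Oo(t_n)} X$ encodes two separate requirements: the moment bounds $\EE(|X_n-X|^k)\le C_k t_n^k$ for $1\le k\le 2K$, and the well-balanced bound $|\EE(X_n)-X|\le C_0 t_n^2$. For each of (i)--(iii) the plan is to verify both requirements for the new sequence, tracking that the emerging constants depend only on the old constants (uniform in $\lambda\in I$), on the separation constant $c$ in (ii), and on $\sup_{\vec x}\|D^2f(\vec x)\|$ and $\sup_{\lambda}\|\nabla f(X^{(1)}_\lambda,\dots,X^{(k)}_\lambda)\|$ in (iii); uniformity in $\lambda$ is then automatic throughout. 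Part (i) is routine: the moment bounds follow from the triangle inequality in $L^k$ (Minkowski), since $\|(X^{(1)}_{\lambda,n}-X^{(1)}_\lambda)+(X^{(2)}_{\lambda,n}-X^{(2)}_\lambda)\|_{L^k}\le\|X^{(1)}_{\lambda,n}-X^{(1)}_\lambda\|_{L^k}+\|X^{(2)}_{\lambda,n}-X^{(2)}_\lambda\|_{L^k}$, and the well-balanced bound follows from $|\EE(X^{(1)}_{\lambda,n}+X^{(2)}_{\lambda,n})-(X^{(1)}_\lambda+X^{(2)}_\lambda)|\le|\EE(X^{(1)}_{\lambda,n})-X^{(1)}_\lambda|+|\EE(X^{(2)}_{\lambda,n})-X^{(2)}_\lambda|$.

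For part (ii), let $c>0$ denote the uniform lower bound on $|X^{(1)}_{\lambda,n}|$. First one observes $|X^{(1)}_\lambda|\ge c$: the second moment bound forces $X^{(1)}_{\lambda,n}\to X^{(1)}_\lambda$ in probability, and since the $X^{(1)}_{\lambda,n}$ almost surely avoid $(-c,c)$, the constant limit does too. The identity $\tfrac1{X^{(1)}_{\lambda,n}}-\tfrac1{X^{(1)}_\lambda}=\tfrac{X^{(1)}_\lambda-X^{(1)}_{\lambda,n}}{X^{(1)}_{\lambda,n}X^{(1)}_\lambda}$ yields the pointwise bound $|\tfrac1{X^{(1)}_{\lambda,n}}-\tfrac1{X^{(1)}_\lambda}|\le c^{-2}|X^{(1)}_{\lambda,n}-X^{(1)}_\lambda|$, hence all moment bounds. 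For the well-balanced bound I would use the exact second-order expansion $\tfrac1{a+\delta}=\tfrac1a-\tfrac{\delta}{a^2}+\tfrac{\delta^2}{a^2(a+\delta)}$ with $a=X^{(1)}_\lambda$ and $\delta=X^{(1)}_{\lambda,n}-X^{(1)}_\lambda$; taking expectations, the linear term contributes $-\EE(\delta)/a^2$, which is $\Oo(t_n^2)$ precisely by the well-balanced hypothesis on $X^{(1)}$, while the quadratic remainder is bounded by $c^{-3}\EE(\delta^2)=\Oo(t_n^2)$.

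For part (iii), write $\vec X_{\lambda,n}=(X^{(1)}_{\lambda,n},\dots,X^{(k)}_{\lambda,n})$ and $\vec X_\lambda=(X^{(1)}_\lambda,\dots,X^{(k)}_\lambda)$ and use Taylor's theorem with second-order remainder: $f(\vec X_{\lambda,n})=f(\vec X_\lambda)+\nabla f(\vec X_\lambda)\cdot(\vec X_{\lambda,n}-\vec X_\lambda)+R_n$ with $|R_n|\le\tfrac12\|D^2f\|_\infty\|\vec X_{\lambda,n}-\vec X_\lambda\|^2$. For the moment bounds up to order $K$ one estimates $\|\vec X_{\lambda,n}-\vec X_\lambda\|\le\sum_j|X^{(j)}_{\lambda,n}-X^{(j)}_\lambda|$, raises $|f(\vec X_{\lambda,n})-f(\vec X_\lambda)|\le\|\nabla f(\vec X_\lambda)\|\|\vec X_{\lambda,n}-\vec X_\lambda\|+\tfrac12\|D^2f\|_\infty\|\vec X_{\lambda,n}-\vec X_\lambda\|^2$ to the $k$-th power for $k\le K$, and applies convexity of $x\mapsto x^m$ together with the hypothesis that the $X^{(j)}$ have $2K$ controlled moments — needed because the quadratic term raised to the $k$-th power requires control of the $2k$-th moments of the increments, $2k\le 2K$ — absorbing $t_n^{2k}\le(\sup_m t_m)^k t_n^k$. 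For the well-balanced bound, $\EE f(\vec X_{\lambda,n})-f(\vec X_\lambda)=\nabla f(\vec X_\lambda)\cdot\EE(\vec X_{\lambda,n}-\vec X_\lambda)+\EE R_n$; the first term is $\Oo(t_n^2)$ since each component $\EE(X^{(j)}_{\lambda,n})-X^{(j)}_\lambda$ is $\Oo(t_n^2)$ by the well-balanced hypothesis and $\nabla f(\vec X_\lambda)$ is uniformly bounded, and $|\EE R_n|\le\tfrac12\|D^2f\|_\infty\EE\|\vec X_{\lambda,n}-\vec X_\lambda\|^2=\Oo(t_n^2)$. The stated special case follows by applying this with $f(x_1,x_2)=x_1x_2$, whose Hessian is constant and whose gradient at $(X^{(1)}_\lambda,X^{(2)}_\lambda)$ is bounded whenever the $X^{(j)}_\lambda$ are.

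The only step that is not pure bookkeeping is the well-balanced ($\Oo(t_n^2)$) condition in (ii) and (iii): one must expand to second order and isolate the genuinely linear term, whose expectation is $\Oo(t_n^2)$ only because of the well-balanced hypothesis on the inputs — the crude estimate $|\EE(f(\vec X_{\lambda,n}))-f(\vec X_\lambda)|\le\EE|f(\vec X_{\lambda,n})-f(\vec X_\lambda)|$ would give only $\Oo(t_n)$. Everything else is Minkowski and Cauchy--Schwarz, and uniformity in $\lambda$ is preserved since all constants are explicit in the given data.
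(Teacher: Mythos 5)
Your proposal is correct and follows essentially the same route as the paper: Minkowski/binomial estimates for (i), the exact second-order expansion of $1/X$ for (ii), and a Taylor expansion with bounded quadratic remainder for (iii), with the well-balanced $\Oo(t_n^2)$ condition in each case coming from isolating the linear term and invoking the well-balanced hypothesis on the inputs. The only (welcome) refinements over the paper's own argument are your use of the $L^k$ triangle inequality in (i) and your explicit remark that $|X^{(1)}_\lambda|\geq c$ must first be deduced from convergence in probability, which the paper leaves implicit.
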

\begin{proof} For the proof we may neglect the dependence on $\lambda$. The uniformity part is settled by getting the new bounds  
as concrete expressions from the old ones.
Thus, let $X^{(j)}_n$ be random variables and $X^{(j)}\in\RR$ such that
$$
\|\EE(\vec X_n)-\vec X\|\,<\,C_0 t_n^2\,,\quad\EE\|\vec X_n-\vec X\|^k\,<\,C_k t_n^k
$$
for $k=1,\ldots,2K$, where $\vec X= (X^{(1)},\ldots,X^{(k)})$
Then
$$
|\EE(X^{(1)}_n+X^{(2)}_n)-(X^{(1)}+X^{(2)})|< 2 C_0 t_n^2
$$
and
$$
\EE|(X^{(1)}_n+X^{(2)}_n-X^{(1)}-X^{(2)})|^k\leq\sum_{j=0}^k \binom{k}{j} |X_n^{(1)}-X^{(1)}|^j |X^{(2)}_n-X^{(2)}|^{k-j}<\sum_{j=0}^k \binom{k}{j} C_{k-j} C_j \,t_n^k
$$
for any $k=1,\ldots, 2K$.\\
For (ii)  let us use the notation $X_n=X^{(1)}_n$ and $X=X^{(1)}$ and
note that with $\inf_n \supp(|X_n|)=c>0$ we find
$$
\EE\left( \left|\frac{1}{X_n}-\frac{1}{X}\right|^k\right)\,=\,\EE\left(\frac{|X-X_n|^k}{|X_n X|^k} \right)\,\leq\,c^{-2k}\,\EE(|X-X_n|^k)\,\leq\,c^{-2k} C_k t_n^k 
$$
for $k=1,\ldots, 2K$. To get the well-balanced property, note that
$$
\left|\EE\left(\frac{1}{X_n}-\frac{1}{X} \right)\,\right|\,\leq\,
\frac{|\EE(X-X_n)|}{|X|^2}\,+\,\left|\EE\left(\frac{(X-X_n)^2}{X^2 X_n}\right)\right|\,\leq\, (c^{-2} C_0+c^{-3} C_2)t_n^2\,.
$$
Finally for (iii) note that a Taylor expansion with remainder term yields
$$
f(\vec X_n)-f(\vec X)\,=\,
\nabla f(\vec X)\,\cdot\,(\vec{X}_n-\vec{X})\,+\,(\vec{X}_n-\vec{X})\,\cdot\,A(\vec{X}_n)\,(\vec{X}_n-\vec{X})
$$ 
where $\|A(\vec{X}_n)\|<C$ is uniformly bounded. Hence,
$$
\left|\EE\left[(\vec{X}_n-\vec{X})\,\cdot\,A(\vec{X}_n)\,(\vec{X}_n-\vec{X})\right]\right|
\,\leq\, C \EE\|\vec{X}_n-\vec{X} \|^2\,\leq\, C C_2 t_n^2
$$
and
$$
\EE\left|(\vec{X}_n-\vec{X})\,\cdot\,A(\vec{X}_n)\,(\vec{X}_n-\vec{X})\right|^k\,\leq\,
C^k\,\EE\|\vec{X}_n-\vec{X}\|^{2k}\,\leq\, C C_{2k} t_n^{2k}\;.
$$
for $k=1,\ldots K$.
This implies $(\vec{X}_n-\vec{X})\,\cdot\,A(\vec{X}_n)\,(\vec{X}_n-\vec{X})\xrightarrow[K]{\Oo(t_n)} 0 $.
Clearly, $\nabla f(\vec X)\cdot \vec X_n \xrightarrow[2K]{\Oo(t_n)}\nabla f(\vec X)\cdot \vec X$ with constants
$\| f(\vec X)\| C_k$. Using part (i) now gives the result.
\end{proof}

In our specific models we always have $a_n=-1$ and the entries of $T_{\lambda,n}$ consist of the {\bf real}\footnote{in these examples all matrices are real for $\lambda\in\RR$} random variables
$\beta_{\lambda,n}^{-1}$, $\beta_{\lambda,n}^{-1} \alpha_{\lambda,n}$,
$\beta_{\lambda,n}^{-1}\delta_{\lambda,n}$, $\gamma_{\lambda,n}=\beta_{\lambda,n}$ and $\alpha_{\lambda,n} \delta_{\lambda,n} \beta_{\lambda,n}^{-1}$.

To get the bounds up to 4 moments for $T_{\lambda,n}$ we will show the well-balanced closeness up to any moments of $\alpha_{\lambda,n}$, 
$\beta_{\lambda,n}$ and $\delta_{\lambda,n}$ to $\alpha_\lambda,\,\beta_{\lambda}$ and $\delta_\lambda$, respectively. 
Furthermore, by their definition within the considered regions for $\lambda$ the distribution of $\beta_{\lambda,n}$ is uniformly (in $n$ and on compact sets for $\lambda$) bounded away from zero.

\begin{proof}[Proof of Theorem~\ref{theo-Anderson} part (i), stretched antitree $\Sb_\bfs$]

Let us consider the terms $\alpha_{\lambda,n}$, $\beta_{\lambda,n}$ and $\delta_{\lambda,n}$ in more detail.
Recall, we defined the unit vectors $\varphi_n= r_n^{-1/2} (1,1,\ldots,1)^\top \in \CC^{r_n}\cong \ell^2(R_n)$.
We have $S_n=R_{3n-2}\sqcup R_{3n}$ (recall that $R_{3n-1}=\emptyset$ in this case) and 
$\#(R_{3n-2})=\#(R_{3n})=s_n$ are sets of the same size and 
we find that $\Hh_\omega$ is an operator as in \eqref{eq-Hh} with
$$
\Upsilon_n=\pmat{\varphi_{3n-2} \\ \nul}\,,\quad \Phi_n=\pmat{\nul \\ \varphi_{3n}}\,,\quad V_n=\pmat{\bfv_{3n-2} & \one \\ \one & \bfv_{3n}}\;.
$$
Here, $\bfv_n=\bfv_{n}(\omega)=\diag((\omega_{x})_{x\in R_{n}})$ are random diagonal matrices, $\omega_x$ for $x\in\GG$ is the random i.i.d. potential on $\Sb_\bfs$.
 
It is easy to see that $\delta_{\lambda,n}$ and $\alpha_{\lambda,n}$ have the same distribution.
For the formulas we will order the points in $R_n$ and use the notations 
$\bfv_{3n-2}=\diag(\omega_{n,1},\ldots,\omega_{n,s_n})$ and $\bfv_{3n}=\diag(\omega'_{n,1},\ldots,\omega'_{n,s_n})$.
Standard Schur complement formulas give
\begin{align}
\alpha_{\lambda,n}\,&=\,\varphi_{2n-1}^*(\bfv_{2n-1}-\lambda- (\bfv_{2n}-\lambda)^{-1})^{-1} \varphi_{2n-1}\,=\,
 \frac{1}{s_n}  \sum_{j=1}^{s_n} \frac{\omega'_{n,j}-\lambda}{(\omega_{n,j}-\lambda) (\omega'_{n,j}-\lambda)-1} \\
\beta_{\lambda,n}\,&=\, \varphi_{2n-1}^*((\bfv_{2n}-\lambda)(\bfv_{2n-1}-\lambda)-\one )^{-1} \varphi_{2n} \,=\,
\frac{1}{s_n} \sum_{j=1}^{s_n} \frac{1}{(\omega_{n,j}-\lambda) (\omega'_{n,j}-\lambda)-1}
\end{align}
Using Lemma~\ref{lem-averages} we find
$$
\alpha_{\lambda,n} \xrightarrow{\Oo(s_n^{-1/2})} \alpha^\Sb_\lambda,\quad
\beta_{\lambda,n} \xrightarrow{\Oo(s_n^{-1/2})} \beta_\lambda^\Sb
$$
uniform for $\lambda$ in compact subsets of $I_{\Sb,\nu}$. Using Lemma~\ref{lem-ot-close} it follows 
$$
T_{\lambda,n} \xrightarrow{\Oo(s_n^{-1/2})} T_\lambda=-\pmat{(\beta_\lambda^\Sb)^{-1} &  -\alpha^\Sb_\lambda / \beta^\Sb_\lambda \\ \alpha^\Sb_{\lambda} / \beta^\Sb_\lambda & \beta^\Sb_\lambda - (\alpha_\lambda^\Sb)^2 / \beta^\Sb_{\lambda}}
$$
uniform for $\lambda$ in compact subsets of $I_{\Sb,\nu}$. 
We see that for $\lambda \in I_{\Sb,\nu}$ we have $|\Tr T_\lambda |<2$. Hence, by Theorem~\ref{th-random} it follows that with probability one
$I_{\Sb,\nu}\subset \spec(\Hh_\omega)$ and that the spectrum of $\Hh_\omega$ is purely absolutely continuous in $I_{\Sb,\nu}$ if
$\sum_{n=1}^\infty s_n^{-1} < \infty$.
\end{proof} 
 
\begin{proof}[Proof of Theorem~\ref{theo-Anderson} part (ii), graph $\Ab_{\bfr}(\vec{k},O,\bfa)$] 
In this case we have
$$
\Upsilon_n=\pmat{\varphi_{3n-2} \\ \nul \\ \nul}\;,\quad \Phi_n=\pmat{\nul \\ \nul \\ \varphi_{3n}}\;,\quad
V_n=O_n\bfa O_n^*+\Vv_n
$$
where $\Vv_n=\diag(\omega_x)_{x\in S_n}$, where $S_n=R_{3n-2}\sqcup R_{3n-1}\sqcup R_{3n}$.
To simplify notation let 
$$
S_{n,i}=R_{3n-2,i}\;; \quad S_{n,j+k_1}=R_{3n-1,j}\;,\quad S_{n,l+k_1+k_2}=R_{3n,l}
$$
for $i=1,\ldots, k_1;\;j=1,\ldots,k_2,\;l=1,\ldots k_3$.
For $\lambda\in I_0$ let us define the diagonal $k\times k$ matrix $\hat \Vv_{\lambda}$ by
$$
\hat \Vv_{\lambda,n}^{-1}\,=\,\diag\left(\frac{1}{\# (S_{n,j})}\sum_{x\in S_{n,j}} \frac{1}{\omega_x-\lambda}\right)_{j=1}^k
$$
For calculating the transfer matrix entries we use the following identity
\begin{align}
(V_n-\lambda)^{-1}&=(\Vv_n-\lambda+O_n\bfa O_n^*)^{-1}\notag \\
&=(\Vv_n-\lambda)^{-1}\left(\one-O_n[\bfa^{-1}+O_n^* (\Vv_n-\lambda)^{-1}O_n]^{-1} O_n^* (\Vv_n-\lambda)^{-1}\right) 
\label{eq-exp-inverse}
\end{align}
One may note that the inverse within the bracket $[\bfa^{-1}+O_n^* (\Vv_n-\lambda)^{-1}O_n]^{-1}$ is an inverse of $k\times k$ matrices, rather than $\#(S_n)\times \#(S_n)$ matrices.
By the setup and definition of $O_n$ we find
$$
O_n^* (\Vv_n-\lambda)^{-1} O_n = O^* \hat \Vv_{\lambda,n} O,\quad \Upsilon_n^* (\Vv_n-\lambda)^{-1} O_n= 
\Upsilon^*\hat\Vv^{-1}_{\lambda,n} O 
$$
$$
\Phi^* (\Vv_n-\lambda)^{-1} O_n=\Phi^* \hat \Vv^{-1}_{\lambda,n}
\qtx{and}
\Upsilon_n^* (\Vv_n-\lambda)^{-1} \Phi_n=\Upsilon^* \hat\Vv_{\lambda,n}^{-1} \Phi\;.
$$
Using this with \eqref{eq-exp-inverse} we see that for calculating $\alpha_{\lambda,n}, \beta_{\lambda,n}, \gamma_{\lambda,n}$ 
 on the right hand side of \eqref{eq-exp-inverse} we can replace $\Upsilon_n, \Phi_n, O_n, (\Vv_n-\lambda)$ by
 $\Upsilon, \Phi, O$ and $\hat \Vv_{\lambda,n}$. Using the same expansion as in \eqref{eq-exp-inverse} this finally gives
 $$
 \pmat{\alpha_{\lambda,n} & \beta_{\lambda,n} \\ \beta_{\lambda,n} & \delta_{\lambda,n}}\,=\,
 \pmat{\Upsilon^* \\ \Phi^*} \left( \hat\Vv_{\lambda,n}+O\bfa O^*\right)^{-1} \pmat{\Upsilon & \Phi}
 $$
Now let $t_n^2=\max(r^{-1}_{3n},r^{-1}_{3n-1},r^{-1}_{3n-2})$ if $r_{3n-1}> 0$ for all $n$ and $t_n^2=\max(r^{-1}_{3n},
r^{-1}_{3n-2})$ in the case where $r_{3n-1}=0$ for all $n$.
We find $\hat\Vv_{\lambda,n} \xrightarrow{\Oo(t_n)} [\EE(\omega_x-\lambda)^{-1}]^{-1} \one\,=\, h_\lambda \one$
 by Lemma~\ref{lem-averages}  where
 $h_\lambda$ is the harmonic mean as in \eqref{eq-def-h}.
 By Lemma~\ref{lem-ot-close} we therefore obtain for $\lambda\in I_{\Ab,\nu}$ that
 $$
 (\alpha_{\lambda,n},\beta_{\lambda,n},\beta_{\lambda,n}^{-1},\delta_{\lambda,n}) 
 \xrightarrow{\Oo(t_n)} (\alpha^\Ab_{\lambda},\beta^\Ab_{\lambda},(\beta^\Ab_{\lambda})^{-1},\delta^\Ab_{\lambda}) \;.
 $$
 Under the assumptions we have $\sum_{n=1}^\infty t_n^2<\infty$ and we can use Theorem~\ref{th-random} to see that in $I_{\Sb,\nu}$ there is almost surely spectrum for $\Hh_\omega$ and it is almost surely absolutely continuous.
\end{proof}

\section*{Acknowledgement}

This research  has received funding from the Chilean program CONICYT under grant Fondecyt Regular 1161651 and the European 
Union 7-th framework program FP7/2007-2013 under REA grant 291734 at the Institute of Science and Technology in Klosterneuburg, Austria where the work started.

\end{document}